\newtheorem{theorem}{Theorem}
\newtheorem{lemma}[]{Lemma}
\newtheorem{definition}[]{Definition}
\newtheorem{remark}[]{Remark}
\begin{document}
\let\WriteBookmarks\relax
\def\floatpagepagefraction{1}
\def\textpagefraction{.001}

% Short title
\shorttitle{Mitigating Stop-and-Go Traffic Congestion with Operator Learning}

% Short author
\shortauthors{Yihuai Zhang, Ruiguo Zhong, Huan Yu}

% Main title of the paper
\title {\Large{\textbf{Mitigating Stop-and-Go Traffic Congestion with Operator Learning}}}                      
% Title footnote mark
% eg: \tnotemark[1]
% \tnotemark[1,2]

% Title footnote 1.
% eg: \tnotetext[1]{Title footnote text}
% \tnotetext[<tnote number>]{<tnote text>} 
% \tnotetext[1]{This document is the results of the research
%   project funded by the National Science Foundation.}

% \tnotetext[2]{The second title footnote which is a longer text matter
%   to fill through the whole text width and overflow into
%   another line in the footnotes area of the first page.}

% First author
%
% Options: Use if required
% eg: \author[1,3]{Author Name}[type=editor,
%       style=chinese,
%       auid=000,
%       bioid=1,
%       prefix=Sir,
%       orcid=0000-0000-0000-0000,
%       facebook=<facebook id>,
%       twitter=<twitter id>,
%       linkedin=<linkedin id>,
%       gplus=<gplus id>]
\author[1]{Yihuai Zhang}[]
% \ead{yzhang169@connect.hkust-gz.edu.cn}

% Email id of the first author
% \ead{yzhang169@connect.hkust-gz.edu.cn}

% Second author
\author[1]{Ruiguo Zhong}[]

\author[1,2]{Huan Yu}[]

% Email id of the second author
% \ead{huanyu@ust.hk}

\cormark[1]

% Address/affiliation
\affiliation[1]{organization={Thrust of Intelligent Transportation, The Hong Kong University of Science and Technology (Guangzhou)},
    addressline={Nansha}, 
    city={Guangzhou},
    % citysep={}, % Uncomment if no comma needed between city and postcode
    postcode={511458}, 
    %state={Trivandrum},
    country={China}}
    
\affiliation[2]{organization={Department of Civil and Environmental Engineering, The Hong Kong University of Science and Technology},
    % addressline={Nansha}, 
    city={Hong Kong SAR},
    % citysep={}, % Uncomment if no comma needed between city and postcode
    % postcode={511458}, 
    % state={Guangdong},
    country={China}}

% Address/affiliation
% \affiliation[3]{organization={Your origination 3},
%     % addressline={}, 
%     city={Hong Kong SAR},
%     % citysep={}, % Uncomment if no comma needed between city and postcode
%     % postcode={695014}, 
%     % state={Trivandrum},
%     country={China}}

% Corresponding author text
\cortext[cor1]{Corresponding author: Huan Yu(huanyu@ust.hk)}

% % Footnote text
% \fntext[fn1]{This is the first author footnote. but is common to third
%   author as well.}
% \fntext[fn2]{Another author footnote, this is a very long footnote and
%   it should be a really long footnote. But this footnote is not yet
%   sufficiently long enough to make two lines of footnote text.}

% % For a title note without a number/mark
% \nonumnote{This note has no numbers. In this work we demonstrate $a_b$
%   the formation Y\_1 of a new type of polariton on the interface
%   between a cuprous oxide slab and a polystyrene micro-sphere placed
%   on the slab.
%   }

% Here goes the abstract
\begin{abstract}
This paper presents a novel neural operator learning framework for designing boundary control to mitigate stop-and-go congestion on freeways. The freeway traffic dynamics are described by second-order coupled hyperbolic partial differential equations (PDEs), i.e. the Aw-Rascle-Zhang (ARZ) macroscopic traffic flow model. The proposed framework learns feedback boundary control strategies from the closed-loop PDE solution using backstepping controllers, which are widely employed for boundary stabilization of PDE systems. The PDE backstepping control design is time-consuming and requires intensive depth of expertise, since it involves constructing and solving backstepping control kernels. Existing machine learning methods for solving PDE control problems, such as physics-informed neural networks (PINNs) and reinforcement learning (RL), face the challenge of retraining when PDE system parameters and initial conditions change. To address these challenges, we present neural operator (NO) learning schemes for the ARZ traffic system that not only ensure closed-loop stability robust to parameter and initial condition variations but also accelerate boundary controller computation. The first scheme embeds NO-approximated control gain kernels within a analytical state feedback backstepping controller, while the second one directly learns a boundary control law from functional mapping between model parameters to closed-loop PDE solution. The stability guarantee of the NO-approximated control laws is obtained using Lyapunov analysis. We further propose the physics-informed neural operator (PINO) to reduce the reliance on extensive training data. 
The performance of the NO schemes is evaluated by simulated and real traffic data, compared with the benchmark backstepping controller,  a Proportional Integral (PI) controller, and a PINN-based controller. The NO-approximated methods achieve a computational speedup of approximately 300 times with only a 1\% error trade-off compared to the backstepping controller, while outperforming the other two controllers in both accuracy and computational efficiency. The robustness of the NO schemes is validated using real traffic data, and tested across various initial traffic conditions and demand scenarios. The results show that neural operators can significantly expedite and simplify the process of obtaining controllers for traffic PDE systems with great potential application for traffic management. 
\end{abstract}

% Use if graphical abstract is present
% \begin{graphicalabstract}
% \includegraphics{figs/grabs.pdf}
% \end{graphicalabstract}

% Research highlights
%\begin{highlights}
%\item A operator learning framework for stabilizing stop-and-go traffic is proposed.
%\item The theoretical guarantee is proved for the operator learning framework through Lyapunov analysis.
%\item The physics-informed operator learning is %introduced to improve the robustness of the model.
%\item The proposed operator learning framework is examined in the numerical simulations. 
%\end{highlights}

% Keywords
% Each keyword is separated by \sep
\begin{keywords}
Freeway traffic control\sep Partial differential equations (PDEs) \sep Neural operators \sep Backstepping control
\end{keywords}

\maketitle

\section{Introduction}
Stop-and-go traffic oscillations are a common phenomenon on freeways, causing increased travel time, fuel consumption, and traffic accidents~\citep{belletti_prediction_2015,de_palma_traffic_2011,schonhof_empirical_2007,siri2021freeway,flynn2009self}. Freeway traffic control is focused on designing control strategies to mitigate stop-and-go traffic congestion, mainly implemented by road-based traffic management systems such as ramp metering or varying speed limits. The ramp metering controls the ramp inflow to the mainline, while varying speed limits regulate the speed of the mainline vehicle~\citep{hoogendoorn2016lessons,horowitz2005design,papamichail2010heuristic}. In recent decades, many studies have focused on vehicle-based control using connected automated vehicles~\citep{lee2024traffic,zhao2023safety,zheng2020smoothing,avedisov2020impacts}. Vehicle cruising speed control algorithms are developed to optimize car-following behaviors of a platoon of vehicles. Compared with emerging vehicle-based control methods that are developed based on individual vehicles, road-based traffic management and control utilize aggregated values such as traffic speed and flow to regulate the whole traffic on the road.
This paper will mainly focus on designing control strategies to suppress spatial-temporal traffic oscillations via ramp metering. We focus on using second-order macroscopic Partial Differential Equations (PDEs) models, due to its simplicity and analytic capability in modeling freeway traffic. An operator learning framework based on neural operators will be developed to achieve boundary stabilization of stop-and-go traffic on the freeway.

\subsection{Freeway traffic control}
Vehicular traffic dynamics on the highway are often described using macroscopic traffic models at an aggregated level. PDEs that are continuous in time and space are used to describe the spatial and temporal evolution of the macroscopic traffic variables, that is, density, speed, and flow rate. Macroscopic traffic models are categorized by the first-order density PDE model and second-order density and speed PDE model. The Lighthill and Whitham and Richard (LWR) model~\citep{lighthill1955kinematic,richards1956shock,whitham2011linear} is widely used for modeling the density evolution on the road section, but cannot capture stop-and-go oscillations on the freeway. The second-order Aw-Rascle-Zhang (ARZ) model~\citep{aw_resurrection_2000,zhang_non-equilibrium_2002} allows the traffc speed has its own dynamical acceleration equation describing speed evolution as a function of density, local speed and their gradients, and therefore is adopted to describe the stop-and-go oscillations. 

Control designs for freeway traffic stabilization using macroscopic PDE models can be classified into two categories: ``discretize then design'' and ``design then discretize'', based on whether numerical discretization of the PDE model is applied before or after the control design. The choice between the two approaches depends on various factors, including computational efficiency, accuracy requirements, and ease of implementation. 

{\bf Discretize then design} refers to the application of numerical discretization of traffic PDE model in time or space first and then designing control algorithms for the discretized models. Discretized traffic PDE models in time are Ordinary Differential Equations (ODEs) and the ones both in space and time are difference equations, which can reduce the difficulty of next-step control designs,  facilitate modular designs, and increase scalability for network problems.
The first-order Cell Transmission Model (CTM) is the discretized LWR model~\citep{daganzo1994cell}, in which the road section is divided into many ``cells'' and the propagation of traffic density in cells depends on the inflow and outflow of each cell, i.e., supply and demand. The second-order discrete METANET model is derived by discretizing and extending the Payne-Whitham model~\citep{kotsialos2002traffic,wang2022macroscopic}. Scaling up the CTM to the network level, the link-node cell transmission model was proposed by~\citep{muralidharan2009freeway} and the supply and demand relations still hold in the complex network road geometry. 
% Other extensions of CTM, such as the stochastic cell transmission model describing the stochastic demand and supply of the system~\citep{sumalee2011stochastic}, and the variable-length cell transmission model describing the position of the congestion wave front~\citep{canudas2018variable}, are also proposed to describe the traffic dynamics.

Based on the discretized models, various control strategies have been proposed for freeway traffic control. 
The classical feedback control ALINEA and PI-ALINEA have been proposed to resolve downstream bottlenecks using local ramp metering~\citep{papageorgiou1991alinea,wang2014local}. Additionally,~\citep{muller2015microsimulation} designed an integral controller for variable speed limits to prevent congestion formation at active bottlenecks.~\citep{carlson2011local} designed feedback mainstream traffic flow control on motorways using METANET that achieved the same performance of the optimal control approach. To solve the on-ramp metering control problem, the asymmetric cell transmission model (ACTM) was proposed to reduce the total time spent on a given road section~\citep{gomes2006optimal} and then extended to network traffic~\citep{muralidharan2012optimal}. Besides, model predictive control~\citep{liu2016model,bellemans2006model,muralidharan2015computationally}, distributed control~\citep{vcivcic2021coordinating,reilly2015distributed}, event-triggered control~\citep{ferrara2015event,ferrara2016design}, reinforcement learning~\citep{pan2021integrated,han2022physics} can also be applied for traffic control design.

Althrough adoption of discretized models in the modeling of macroscopic traffic makes the model simple and reduces the computational burden, the discretized cells inevitably generate errors in actual applications~\citep{mohan2013state}. In addition, CTM assumes that the density and speed in each cell is uniform and vehicles are assumed to have instantaneous acceleration and deceleration, which is also another unrealistic phenomenon~\citep{daganzo1994cell}. Furthermore, assuming that traffic is uniformly distributed within each cell violates the ``causality'' property. If the inflow to a cell has stopped or is declining, then the cell needs to redistribute the traffic uniformly towards backward. It is not consistent with the property that vehicles are influenced only by traffic ahead and not by traffic behind~\citep{carey2021cell}. Nonlinear traffic dynamics are not well captured by most ``discretize then design'' methods. 

{\bf Design then discretize} approaches avoid introducing the numerical approximation errors before control designs, therefore leading to more accurate control solutions. Control designs are directly proposed for the LWR PDE model or the ARZ PDE model that are continuous in time and space. In particular, the ``design then discretize" control approaches mainly includes
Lyapunov-based design~\citep{bastin_stability_2016}, backstepping design~\citep{krstic_boundary_2008}, optimal control design,~\citep{delle2017traffic,colombo2004minimising}, distributed control~\citep{bekiaris-liberis_pde-based_2021,qi_delay-compensated_2023}, and learning-based control approaches~\citep{belletti_expert_2018,yu_reinforcement_2022}.

The ``design then discretize'' approaches also offer greater adaptability for different control objectives of traffic  (i.e., traffic stabilization~\citep{yu_traffic_2019}, and traffic regulation~\citep{delle2017traffic}) or different traffic scenarios (i.e.,pure traffic~\citep{karafyllis_feedback_2019-1}, multi-class traffic~\citep{burkhardt_stop-and-go_2021}), as modifications can be directly applied to continuous modeling without affecting the discretization schemes. The preservation of continuity in time also offers flexibility for control design.
% Karafyllis designed explicit formulas for feedback laws to keep the system represented by the LWR model at the equilibrium point and achieved exponential stability~\citep{karafyllis_feedback_2019-1}.~\citep{zhang_stochastic_2017,zhang_pi_2019} developed a boundary feedback controller to stabilize both the stochastic traffic system and the deterministic traffic system. The system was described by the ARZ model, and the control inputs were implemented by ramp metering and variable speed limits. Considering the optimal control of traffic systems at the network level,~\citep{goatin2016speed} proposed an optimal control problem to minimize travel time in a traffic flow network represented by the LWR model with conservation laws. The optimal controller also works for second-order models~\citep{kolb2017capacity}.
% Backstepping control originally starts from the field of PDE boundary control~\citep{krstic_boundary_2008}. The first application of backstepping in boundary control of freeway traffic was proposed by~\citep{yu_traffic_2019} and then extended to multi-lane, and multi-class traffic PDE model~\citep{yu_traffic_2022}. The backstepping method is a good tool to suppress the traffic oscillations on road sections. 
% To reduce the system computational burden, Espitia~\citep{espitia_traffic_2022} developed the event-triggered control scheme for stabilizing the traffic system. However, the event-triggered control scheme is also highly computational load due to the backstepping design of the system. 
Among these PDE-based control methods, they are all dealing with PDEs whose computational burden and problem formulation is higher compared with ODEs. Even for the basic feedback controller in~\citep{zhang_pi_2019}, one needs to solve linear matrix inequalities (LMIs) to get the control gains for the traffic system. Solving LMIs would be time-consuming and need specific domain knowledge of it that also raises the threshold for using this method. Reinforcement learning can partially tackle this problem but it needs retraining for different parameters and traffic scenarios. However, for large-scale of traffic such as link-level traffic, the  ``design then discretize'' method may present implementation challenges. The dynamics of traffic would be more complex in cascaded traffic scenarios~\citep{yu_traffic_2022}, therefore, the control design and and solving analytical model problems would be more computationally intensive and technically demanding. In this paper, we design the boundary controller to mitigate traffic oscillations using the backstepping method and then we design the operator learning framework to reduce the computational burden for the traffic system. The comparison between different control methods and models is shown in Table~\ref{compare_table}.

\begin{table}[htbp]
    \centering
    \caption{Comparison between different methods and models}
    \begin{tabular}{c c c c c c}
    \hline
        \textbf{Reference} & \textbf{Model} & \textbf{Control methods} & \makecell{\textbf{High}\\ \textbf{efficiency}}  & \makecell{\textbf{Theoretical} \\ \textbf{stability} \\ \textbf{guarantee}} & \makecell{\textbf{Easy}\\ \textbf{implemen-}\\ \textbf{tation}} \\
    \hline
    \cite{papageorgiou1991alinea} & CTM & Integral control & \usym{2714} &  & \usym{2714}\\
    % \hline
    % \cite{wang2014local} & CTM & Feedback control (PI) &\usym{2714} & \usym{2718} & \usym{2714} \\
    % \hline
    \cite{gomes2006optimal} & ACTM & Optimal control &  &  & \usym{2714}\\
    % \hline
    \cite{ferrara2015event} & CTM & Event-triggered MPC & &  & \usym{2714}\\
    % \hline
    \cite{carlson2011local} & METANET & Integral control &\usym{2714} &  & \usym{2714} \\
    % \hline
    \cite{han2022physics} & METANET & Reinforcement learning & \usym{2714} &  &\usym{2714}\\
    % \hline
    \cite{belletti_expert_2018} & LWR & Reinforcement learning &\usym{2714} &  & \usym{2714}\\
    % \hline
    \cite{karafyllis_feedback_2019-1} & LWR & Feedback control  &  & \usym{2714} & \\
    % \hline
    \cite{delle2017traffic} & LWR & Optimal control &  &  & \usym{2714}\\
    % \hline
    \cite{zhang_pi_2019} & ARZ & \makecell{PI control}  & & \usym{2714} &  \\
    % \hline
    \cite{yu_traffic_2019} & ARZ & Backstepping & & \usym{2714} & \\
    % \hline
    This paper & ARZ & Backstepping $+$ NO & \usym{2714}  &\usym{2714} & \usym{2714}\\
    \hline
    \end{tabular}
    \label{compare_table}
\end{table}

\textbf {PDE backstepping control} has been widely studied for boundary stabilization of the hyperbolic PDE models~\citep{krstic_boundary_2008,krstic_backstepping_2008, vazquez_backstepping_2011,anfinsen_adaptive_2019,zhang2024robust}. ~\citep{krstic_backstepping_2008} first proposed the backstepping controller for stabilization of hyperbolic PDEs by simply actuating the boundary conditions, for example, flow input or speed at the boundaries of the ARZ model. Over the past decades, backstepping approaches have been extended for robust control design by~\citep{auriol_robust_2020, karafyllis_input--state_2019}, output disturbance rejection by~\citep{lamare_adding_2016}, and adaptive design  by~\citep{anfinsen_adaptive_2019} with respect to parameter uncertainty and disturbances. 

Motivated by the stop-and-go traffic modeled by the ARZ PDEs,~\citep{yu_traffic_2022} first applied backstepping method for congested freeway traffic control problems and then extended to multi-lane, and multi-class traffic PDE models. The control objective of the freeway traffic using backstepping method is to stabilize the traffic  states at their equilibrium points. Unlike the results in~\citep{delle2017traffic} whose objective is to regulate the outflow to a desired outflow, backstepping method aims to make the full states of density and speed stay at a spatially uniform value. The ARZ PDE system is transformed into an exponentially stable target system using the backstepping transformation along with the controller design. The control gains are obtained by solving the kernel equations of the transformation. The boundary control law then is constructed with the backstepping gain kernels and system states.
% Solving control gain kernels and calculating the control law are time-consuming and the system computational load is high. When dealing with more complex traffic systems, such as multi-class traffic~\citep{burkhardt_stop-and-go_2021}, cascaded freeway traffic~\citep{yu_simultaneous_2022}, multi-lane traffic~\citep{yu_output_2021}, deriving the explicit boundary controller not only requires more computation time for the traffic system, but also requires years of training in control theory to handle the intricacy involved in constructing the backstepping transformation for control engineers. This motivates the design of our neural operators in this paper.
\begin{figure}[!htbp]
    \centering
    \includegraphics[width= \linewidth]{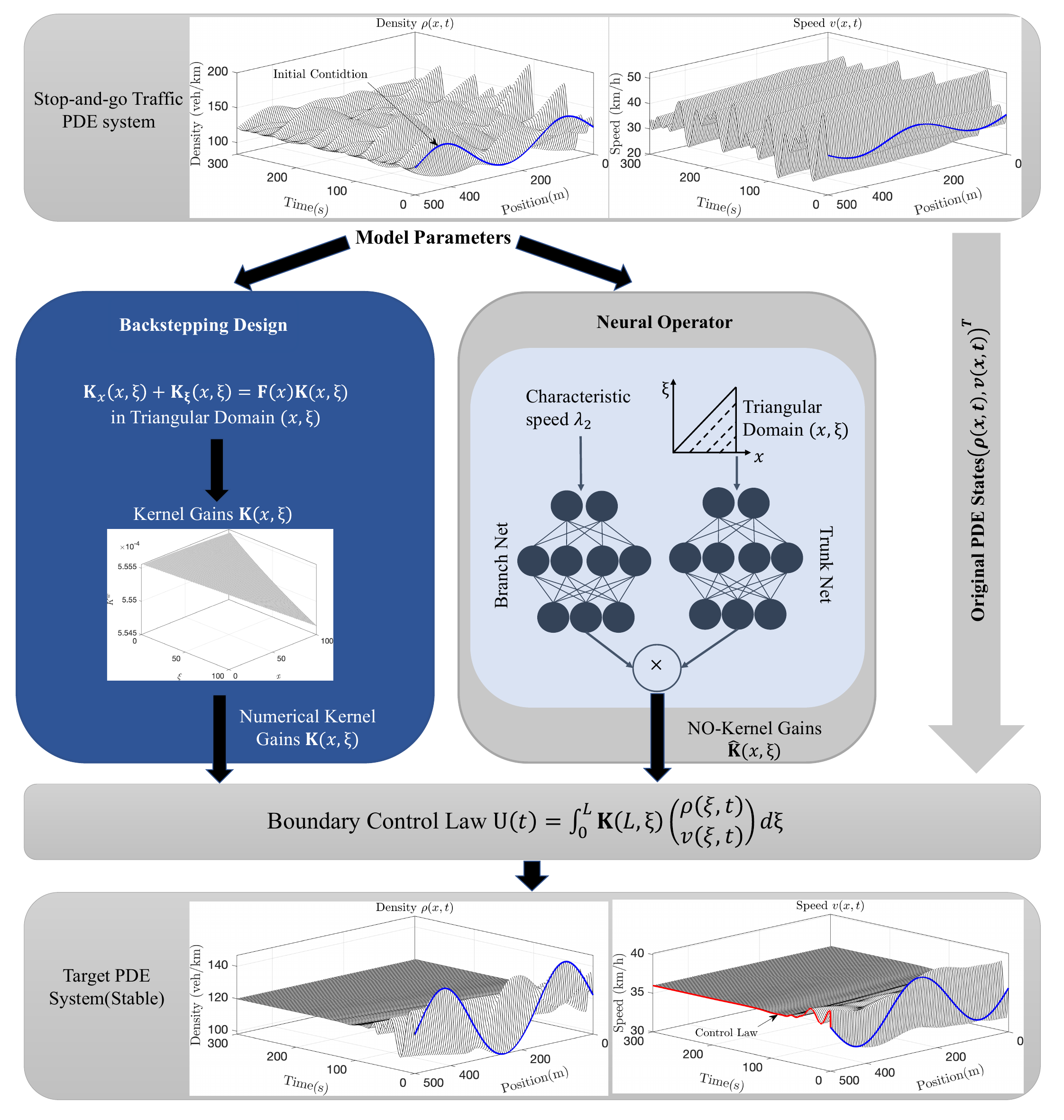}
    \caption{The diagram of the backstepping method and the proposed neural operator framework}
    \label{bsdia}
\end{figure}

\subsection{Neural operators for control of PDEs}
Over the recent decades, machine learning (ML) methods have emerged as powerful tools for solving complex algebraic equations and PDEs. Physics-informed neural network (PINN) has ability to solve forward problem of PDEs~\citep{lu_deepxde_2021}. 
PINN can be used for traffic state and fundamental diagram estimation, which can learn a functional form of the fundamental diagram and solve the first-order and second-order traffic flow models~\citep{shi_physics-informed_2022,zhang_physics-informed_2024,zhao2023observer}. However, PINN can not solve some multi-scale dynamical PDE systems because it is sensitive to hyper-parameters, and it only learns the solution of a single PDE system~\citep{sun2020surrogate}. It can not generalize to PDEs with different parameters, initial conditions, and boundary conditions.

Compared with PINN and other traditional ML methods, neural operators (NO) present exciting advances due to their ability to learn the operator mapping of functionals~\citep{kovachki_neural_2023}. The standard frameworks for NO are DeepONet~\citep{lu_learning_2021} and Fourier Neural Operator (FNO)~\citep{li_fourier_2021}. The performance of DeepONet and FNO is comparable for relatively simple settings, but the performance of FNO deteriorates greatly for complex geometries~\citep{lu2022comprehensive}. A significant advantage of DeepONet and related structures is the ability to freely discretize output functions~\citep{lin2021operator}. This flexibility allows the network to predict the values of the output functions at any given domain~\citep{zhang2023belnet}. To make neural operators more accurate and robust, the extended DeepONet and FNO were proposed~\citep{lu2022comprehensive}. With further consideration of the absence of training data and the generality of neural operators, Wang proposed physics-informed DeepONets to learn the solution of arbitrary PDEs~\citep{wang_learning_2021}. These neural operators offer distinctive advantages compared to other traditional ML methods due to their simplicity in solving complex problems such as climate forecasting~\citep{pathak2022fourcastnet}, multiphase flow~\citep{wen2022u}, heterogeneous material modeling~\citep{you2022learning}.
Therefore, it is of great value for solving PDEs and backstepping kernel equations using neural operators.
~\citep{bhan_operator_2023} adopted neural operators for nonlinear adaptive control. The operator learning framework was proposed to accelerate nonlinear adaptive control. In addition, they apply the operator learning method for bypassing gain and control computations in PDE control~\citep{bhan2023neural,bhan_operator_2023}. With the operator learning framework, there is no need to compute backstepping kernel gains by solving kernel equations numerically.

All of the aforementioned NO-based results demonstrated the capability of neural operators in accelerating computational speed and analytical derivation for PDE control design. However, to the author's knowledge,  PDE boundary control using neural operators has not been studied in traffic control problems before. The neural operator framework has great potential to derive methodological learning-based control for the freeway traffic stabilization problem, which will be extensively discussed in this paper. 

\subsection{Contributions}
{
To address the limitations of existing traffic control methods based on macroscopic PDE models in mitigating stop-and-go traffic oscillations, we identify key constraints that motivate our proposed operator learning framework and the main contributions.
}
%all the aforementioned methods based on design then discretize principle could remove the stop-and-go wave at certain level. However, all these methods especially backstepping method are suffering the computational burden due to the PDE representation of traffic dynamics and kernel equations. There are several reasons why we need to develop new methods to efficiently mitigate the stop-and-go traffic wave
{
\begin{itemize}
    \item "Discretize then Design" control methods are inherently limited by their discretization schemes, often resulting in discretization traffic states errors, loss of continuity between cells, and difficulties in handling nonlinearities of traffic dynamics.  
    Our proposed Neural Operator (NO) scheme is notably invariant to discretization. 
    \item Backstepping, as a representative method of ``design then discretize'', offers unique advantages in free traffic control—providing a closed-form solution with lower errors and enhanced robustness, and maintaining the continuity of traffic states, compared to "discretize then design" methods However, the backstepping method incurs a significant computational burden due to the complexity of solving the control gain kernels for PDEs. The PDE control design also demands a deep level of expertise, complicating its practical implementation. The proposed NO scheme is designed to accelerate the computational process and bypass the need for analytical design through data-driven training. 
    \item Although PINN can partially expedite the computation process for solving specific PDEs, the method is restricted to learning PDE solution of one instance, characterized by a single set of boundary and initial conditions. This limitation renders it ineffective in adapting to changes of traffic patterns and demands. Conversely, the proposed NO methods are designed to learn functional mappings from PDE model parameters to the control gains. These mappings remain invariant regardless of changes in traffic's initial and boundary conditions or system parameters, thereby providing a more flexible and robust solution for freeway traffic control.
\end{itemize}
}
{
To overcome the above problems, we proposed NO-based methods to approximate the operator mapping from congested wave speeds to control gain kernels. This approach significantly reduces computational burden and allows closed-loop results to generalize across varying traffic demands and patterns. Author's previous result on RL traffic control~\citep{yu_reinforcement_2022}   trained a reinforcement learning boundary controller using the ARZ PDE model as a simulator. The results are obtained for one-instance and there needs retraining given different traffic conditions. Different from the focus of this paper on traffic stabilization, another line of research has explored learning-based approaches for traffic state estimation ~\citep{shi_machine_2022} using FNO and using PINN~\citep{zhao2023observer}.  Authors investigated the PDE observer design using partial measurement information to infer the full spatial-temporal traffic states. 
}

{
Two mappings, the NO-approximated gain kernels and the NO-approximated control law, are developed to improve the computation speed of gain kernels and control law based our previous results in~\citep{zhang2024neural}. Different with our previous results, we further developed the physics-informed neural operator (PINO) framework to learn the mapping from system parameters to the backstepping kernels to deal with limited data scenarios. The operator learning framework based on the bakcstepping method is shown in Fig.~\ref{bsdia}. To the best of author's knowledge, this is the first result for the application of operator learning in traffic control, in particular for boundary control of the macroscopic ARZ PDE model. The theoretical contribution lies in proposing the novel Lyapunov analysis for neural operators and proving the stability of the NO-based closed-loop traffic system. Extensive simulation results demonstrate  that the NO-approximated methods significantly accelerate the computation of control laws while successfully stabilizing the traffic system under varying conditions. This approach presents an efficient, accurate, and robust solution for traffic control. 
}

The structure of the paper is as follows: Section 2 presents the design of a boundary controller for the ARZ traffic model utilizing the backstepping method. In Section 3, the neural operator is introduced to approximate the backstepping kernels, followed by a Lyapunov analysis of the NO-based kernels. Additionally, the NO-approximated backstepping control law is developed, and it is demonstrated that practical exponential stability of the system is achieved. The extension to PINO is also discussed to illustrate its effectiveness in the absence of input-output data. Section 4 details the experiments conducted on the neural operators for kernels and control law, along with a comparative analysis of the NO-based methods, the PI controller, the PINN-based controller, and the backstepping controller. Finally, Section 5 provides the conclusion of the paper.

\section{Boundary Control of the ARZ PDE Model}
The macroscopic traffic dynamics on a given road are described by the nonlinear coupled hyperbolic PDEs, i.e., the ARZ model. The model is defined by:
\begin{align}
    \partial_t \rho + \partial_x (\rho v) &= 0, \label{origin1}\\
    \partial_t (v - V(\rho)) + v \partial_x(v - V(\rho)) &= \frac{V(\rho) - v}{\tau}, \label{origin2}
\end{align}
where $\rho(x,t)$ is the traffic density, $v(x,t)$ denotes the traffic speed, defined in the spatial and time domain $(x,t) \in [0, L]\times [0, +\infty)$. The reaction time  $\tau$ denotes how long it takes for drivers' behavior adapting to equilibrium density-speed relation $V(\rho)$ . The fundamental diagram $V(\rho)$ describes the relation between the traffic density and speed. The fundamental diagram should guarantee the flow function $q(\rho) = \rho V(\rho)$ to be strictly concave(i.e., $q''(\rho)<0$). It can be selected as the Greenshield's model:
\begin{align}
    V(\rho) = v_f\left(1 - \left(\frac{\rho}{\rho_m}\right)^\gamma\right),
\end{align}
where $v_f$ is the maximum speed for the traffic flow, $\rho_m$ denotes the maximum density. It should be noted that the proposed control design does not restrict the choice of fundamental diagram as long as the flow-density relation $Q(\rho)$ is twice differentiable and concave. We define the equilibrium state of the system as $(\rho^\star, v^\star)$. We have established the relation between the equilibrium speed and density using the fundamental diagram 
\begin{align}
    v^\star = V(\rho^\star),
\end{align}
Considering the traffic conditions on the freeway, we set the inlet boundary $x=0$ as a constant traffic flow $q^\star = \rho^\star v^\star$, thus we get inlet boundary condition as:
\begin{align}
    \rho(0,t) = \frac{q^\star}{v(0,t)}. \label{bc_q}
\end{align}
At the outlet of the road section, we set the traffic density as $\rho^\star$ to obtain the following boundary condition for traffic speed
\begin{align}
    v(L,t) = \frac{q(L,t)}{\rho^\star} + U(t), \label{bc_v}
\end{align}
where $U(t)$ is the control input, actuating the flow of vehicles that leaving of the road section from the outlet. It can be implemented by ramp metering to regulate the traffic flow at the outlet of the mainline road. In this paper, we deal with the congested traffic condition, meaning that the traffic density $\rho(x,t)$ is larger than the critical density for the traffic system. The control objective is to regulate the traffic density and speed at its equilibrium points $(\rho^\star,v^\star)$ within finite time. We consider the congested traffic where stop-and-go traffic oscillations arise. The congested equilibrium density $\rho^\star$ is chosen such that $\rho^\star  > \rho_c$ where $\rho_c$ is the critical density that satisfies $ Q^{'}(\rho)|_{\rho = \rho_c} = 0$ and defines the congested and free traffic. 

The boundary control input is designed to mitigate the traffic oscillations,
\begin{align}
    U(t) = &- (\rho(L,t)v(L,t) - q^\star) + \left(\rho^\star + \frac{v^\star}{V'(\rho^\star)}\right)(v(L,t) - v^\star) + \int_0^L \mathrm{e}^{\frac{\xi}{\tau v^\star}}K^w(L,\xi) (q(\xi,t) - q^\star) d\xi \nonumber\\
    &- \int_0^L 
    \left(\frac{v^\star}{V'(\rho^\star)}K^v(L,\xi)+ \left(\rho^\star + \frac{v^\star}{V'(\rho^\star)}\right)\mathrm{e}^{\frac{\xi}{\tau v^\star}}K^w(L,\xi)\right)({v}(\xi,t) - v^\star) d\xi \label{control_original}
\end{align}
where $K^w(L,\xi)$, $K^v(L,\xi)$ are kernel gains of the controller. They are obtained from the backstepping control design. The computation of the control gains will be given in the next section. The following lemma is stated regarding the closed-loop system. 
{
\begin{lemma}[\cite{yu_traffic_2019}]
    The system \eqref{origin1}-\eqref{origin2} with boundary conditions \eqref{bc_q}-\eqref{bc_v} and initial conditions $\rho(x,0), v(x,0) \in L^2[0, L]$ under the control law \eqref{control_original} whose kernels are solved by~\eqref{ker1}-\eqref{ker4} is locally exponentially stable in $L_2$-sense at finite time $t_f = \frac{L}{v^\star} + \frac{L}{-\rho^\star V'(\rho^\star) - v^\star}$, and the traffic density $\rho(x,t)$ and speed $v(x,t)$ converge to their equilibrium points at finite time.
    \begin{align*}
        \norm{\rho(x,t) - \rho^\star} &\to 0,\\
        \norm{v(x,t) - v^\star} &\to 0.
    \end{align*}
\end{lemma}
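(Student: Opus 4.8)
The plan is to follow the standard PDE backstepping argument: linearize about the congested equilibrium, diagonalize into Riemann coordinates, map the coupled system to a finite-time-stable target system by an invertible Volterra transformation, and then transfer stability back to the physical states.

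First I would linearize \eqref{origin1}-\eqref{origin2} about $(\rho^\star,v^\star)$ via $\tilde\rho=\rho-\rho^\star$, $\tilde v=v-v^\star$, and rewrite the linearized dynamics in characteristic (Riemann) coordinates. Since $q(\rho)=\rho V(\rho)$ is strictly concave and $\rho^\star>\rho_c$, the frozen-coefficient eigenvalues are $\lambda_1=v^\star>0$ and $\lambda_2=Q'(\rho^\star)=v^\star+\rho^\star V'(\rho^\star)<0$, so the linearization is a $2\times2$ heterodirectional hyperbolic system: one Riemann invariant is convected downstream at speed $v^\star$, the other upstream at speed $\mu:=-\lambda_2=-\rho^\star V'(\rho^\star)-v^\star>0$, with in-domain source coupling and the boundary reflections induced by \eqref{bc_q}-\eqref{bc_v}. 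The control $U(t)$ enters only through the outlet condition at $x=L$.

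Next I would introduce the Volterra backstepping transformation with kernels $K^w(x,\xi),K^v(x,\xi)$ on the triangle $\{0\le\xi\le x\le L\}$, chosen to send the Riemann system into the decoupled target system
\begin{align}
\partial_t\alpha+v^\star\partial_x\alpha&=0,\nonumber\\
\partial_t\beta-\mu\partial_x\beta&=0,\nonumber
\end{align}
with the control-produced homogeneous outlet data $\beta(L,t)=0$ and the single inlet reflection $\alpha(0,t)=c\,\beta(0,t)$ inherited from \eqref{bc_q}. Matching coefficients forces the kernels to obey exactly the kernel equations \eqref{ker1}-\eqref{ker4}, and the explicit form of $U(t)$ in \eqref{control_original} is precisely what cancels the residual boundary terms at $x=L$ to yield $\beta(L,t)=0$. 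The key technical step is to verify, by the method of characteristics together with a successive-approximation (contraction) argument on the triangular domain, that \eqref{ker1}-\eqref{ker4} possess a unique bounded solution, and that the induced Volterra operator is boundedly invertible — automatic for Volterra kernels, since the inverse kernels solve an analogous well-posed system. This gives equivalence of the $L^2$ norms of $(\tilde\rho,\tilde v)$ and $(\alpha,\beta)$.

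Finally I would read finite-time convergence directly off the target system. As pure transport equations fed only by zero data and the inlet reflection, any initial disturbance is convected out along characteristics: $\beta$ clears the domain in time $L/\mu$, whereupon $\beta(0,t)=0$ forces $\alpha(0,t)=0$ and $\alpha$ exits the controlled boundary after an additional $L/v^\star$; hence $(\alpha,\beta)\equiv0$ for all $t\ge t_f=L/v^\star+L/(-\rho^\star V'(\rho^\star)-v^\star)$. By the norm equivalence above, $\|\tilde\rho(\cdot,t)\|_{L^2}$ and $\|\tilde v(\cdot,t)\|_{L^2}$ vanish for $t\ge t_f$, which yields local exponential (in fact finite-time) $L^2$ stability and the stated convergence of $\rho,v$ to $(\rho^\star,v^\star)$. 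I expect the main obstacle to be the careful bookkeeping of the boundary-reflection terms so that \eqref{control_original} cancels them exactly, together with controlling the linearization remainder to secure the \emph{local} (rather than global) nature of the result; by contrast the kernel well-posedness and Volterra invertibility, though essential, are by now routine.
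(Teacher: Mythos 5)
Your proposal is correct and follows essentially the same route as the paper (and the cited reference \citep{yu_traffic_2019}): linearization about the congested equilibrium, the Riemann/spatial transformation into the boundary control model \eqref{bs-q}--\eqref{bs-bc_l}, the Volterra backstepping map with kernels solving \eqref{ker1}--\eqref{ker4} sending the system to the cascade target system \eqref{tar1}--\eqref{tar2}, and finite-time convergence read off the characteristics with $t_f = L/\lambda_1 + L/\lambda_2$ transferred back through the invertible transformation. Your $\mu$ is exactly the paper's $\lambda_2 = -\rho^\star V'(\rho^\star)-v^\star>0$, so the bookkeeping is consistent; no gaps.
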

}
\subsection{Backstepping controller design}
%There is no explicit analytical solution to the nonlinear hyperbolic PDE system . 
We firstly linearize the PDE system \eqref{origin1}-\eqref{origin2} around its equilibrium point $(\rho^\star, v^\star)$.  The small deviation of the equilibrium point are defined as
\begin{align}
    \Bar{\rho}(x,t) &= \rho(x,t) - \rho^\star, \\
    \Bar{v}(x,t) &= v(x,t) - v^\star.
\end{align}
We also have the small deviation of traffic flow $\Bar{q}(x,t) = q(x,t)- q^\star$. The original system is a nonlinear hyperbolic PDE where the two PDE states are coupled in domain. We transform the original model into the boundary control model for the follow-up backstepping design. Then we can define the following spatial transformation for the linearzied PDE states $(\tilde \rho, \tilde v)$,
\begin{align}
     \Tilde{w}(x,t) &= \mathrm{e}^{\frac{x}{\tau v^\star}}\left(\Bar{q}(x,t) -  \left(\rho^\star + \frac{v^\star}{V'(\rho^\star)} \right)\Bar{v}(x,t)\right), \Tilde{v}(x,t) = -\frac{v^\star}{V'(\rho^\star)} \Bar{v}(x,t)
    \label{riemann}
\end{align}
The system~\eqref{origin1}-\eqref{origin2} with boundary conditions~\eqref{bc_q}-\eqref{bc_v} are then transformed into the following linearized boundary control model. 
\begin{align}
    \partial_t \Tilde{w}(x,t) + \lambda_1 \partial_x \Tilde{w}(x,t) &= 0, \label{bs-q}\\
    \partial_t \Tilde{v}(x,t) - \lambda_2 \partial_x \Tilde{v}(x,t) &= c(x)\Tilde{w}(x,t),\\
    \Tilde{w}(0,t) &= - r \Tilde{v}(0,t),\\
    \Tilde{v}(L,t) &= \kappa\Tilde{w}(L,t)+ U(t) \label{bs-bc_l},
\end{align}
where the coefficients are $c(x) = - \frac{1}{\tau}\mathrm{e}^{-\frac{x}{\tau v^\star}}$, $r = \frac{-\rho^\star V'(\rho^\star) - v^\star}{v^\star}$, $\kappa = \mathrm{e}^{-\frac{L}{\tau v^\star}}$. 
The characteristic speed $\lambda_1,\lambda_2$ of the traffic PDE represent the the propagation direction of traffic waves. 
\begin{align}
    \lambda_1 =& v^\star,\\
    \lambda_2 =& - \rho^\star V'(\rho^\star) - v^\star.
\end{align}
The characteristic speeds of the free-flow and congested traffic are:  
a) \textbf{free-flow regime}:  $\lambda_1>0,  -\lambda_2>0$, the traffic waves transports from upstream to downstream; b)\textbf{ congested regime}: When the characteristic speed of traffic speed is negative, such as $\lambda_1>0$ and $-\lambda_2<0$, the traffic is in the congested regime. The speed oscillation will transport from downstream to upstream while the traffic density still transport from upstream to the downstream, making the traffic become congested. 

%The shock wave can be observed by the fundamental diagram. 

By applying the backstepping transformation, the plant system \eqref{bs-q}-\eqref{bs-bc_l} is converted into a target system such that in-domain unstable couplings are transformed to the boundary and then the oscillations are damped out by boundary actuation. The following backstepping transformation is introduced,
\begin{align}
    \alpha(x,t) &= \Tilde{w}(x,t),\label{back-1}\\
    \beta(x,t) &= \Tilde{v}(x,t) - \int_0^x K^w(x,\xi)\Tilde{w}(\xi,t)d\xi -  \int_0^x K^v(x,\xi)\Tilde{v}(\xi,t)d\xi, \label{back-2}
\end{align}
where $\alpha(x,t)$, $\beta(x,t)$ are the transformed PDE states of the target system and $K^w(x,\xi)$, $K^v(x,\xi)$ are the kernels  of the transformation. This transformation converts the system \eqref{bs-q}-\eqref{bs-bc_l} into an exponential stable target system combined with the kernel equations \eqref{ker1}-\eqref{ker4} and the backstepping control law \eqref{control_bs}. The kernel equations evolve in the triangular domain $\mathcal{T} = \{ (x,\xi): 0 \leq \xi \leq x <L \}$.
\begin{align}
    \lambda_2 K^w_x(x,\xi) - \lambda_1 K^w_\xi(x,\xi) &= c(x)K^v(x,\xi), \label{ker1}\\
    \lambda_2 K^v_x(x,\xi)  + \lambda_2 K^v_\xi(x,\xi) &= 0,\\
    K^w(x,x) &=-\frac{c(x)}{\lambda_1 + \lambda_2}, \\
    K^v(x,0) &= -K^w(x,0).\label{ker4}
\end{align}
Using the transformation and the kernel equations, we can get the target system as follows:
\begin{align}
    \partial_t \alpha(x,t) + \lambda_1 \partial_x\alpha(x,t) &= 0, \label{tar1}\\
    \partial_t \beta(x,t) - \lambda_2 \partial_x \beta(x,t) &= 0,\\
    \alpha(0,t) &= -r\beta(0,t),\\
    \beta(L,t) &= 0\label{tar2}.
\end{align}
with the backstepping control law chosen as follows, \begin{align}\label{control_bs}
    U(t) = - \kappa \Tilde{w}(L,t) + \int_0^L K^w(L,\xi)\Tilde{w}(\xi,t) d\xi + \int_0^L K^v(L,\xi)\Tilde{v}(\xi,t) d\xi,
\end{align}
 such that $\beta(L,t) = 0 $. The target system \eqref{tar1}-\eqref{tar2} is proved to beexponentially stable in $L_2$-sense following the results in \citep{yu_traffic_2019}. The target system is equivalent to the linearized system \eqref{bs-q}-\eqref{bs-bc_l} since the aforementioned transformation is invertiable. {Writing \eqref{control_bs} in the original coordinates $(\rho,v)$, we obtain the control input~\eqref{control_original} for the original system \eqref{origin1}-\eqref{origin2}. Thus, applying the control law to the system, and then the unstable traffic system could be stabilized.}

\section{Neural Operator Learning to Accelerate Computation of Backstepping Gain Kernels}
In this section, we propose the operator learning framework to accelerate the computation process of the backstepping method. The boundary control law is constructed using the backstepping gain kernels. Solving the kernel equations and calculating the boundary control law can be time-consuming. The neural operators developed in this section are used to reduce the computational burden. Three operator learning schemes. We will use neural operators to learn the mapping from the characteristic speed to the backstepping gain kernels, and the boundary control law. Finally we develop the physics-informed operator learning framework.
\subsection{Neural operator for approximating backstepping gain kernels}
%\subsection{Neural operator}
The neural operator is employed for approximating the operator mapping of functionals. In the section, we introduce the neural operator using DeepONet to approximate the mapping from the characteristic speed $\lambda_2$ to kernels $K^w(x,\xi)$ and $K^v(x,\xi)$. A neural operator (NO) for approximating a nonlinear mapping $\mathcal{G}: \mathcal{U} \mapsto \mathcal{V}$

\begin{align}
    \mathcal{G}_{\mathbb{N}}\left(\mathbf{u}_m\right)(y)=\sum_{k=1}^p 
    \underbrace{g^{\mathcal{N}}\left(\mathbf{u}_m ; \vartheta^{(k)}\right)}_{\mathrm{branch}} \underbrace{f^{\mathcal{N}}\left(y ; \theta^{(k)}\right)}_{\mathrm{trunk}},\label{neuralop}
\end{align}
where $\mathcal{U}$ and $\mathcal{V}$ are function spaces of continuous functions $u\in \mathcal{U}$, $v\in\mathcal{V}$. And $\mathbf{u}_m$ is the evaluation of function $u$ at points $x_i = x_1,\dots,x_m$. $p$ is the number of basis components in the target space, $y\in Y$ is the location of the output function $v(y)$ evaluations, and $g^{\mathcal{N}}$, $f^{\mathcal{N}}$ are NNs termed branch and trunk networks. $\vartheta^{(k)},\theta^{(k)}$ denote all trainable weights and bias parameters in the branch and trunk networks.
{
\begin{lemma}[DeepONet universal approximation theorem \citep{bhan2023neural,chen1995universal,deng2022approximation}]\label{Deeptheo}
    Let $X \subset \mathbb{R}^{d_x}$, $Y$ $\subset$ $\mathbb{R}^{d_y}$ be compact sets of vectors $x\in X$ and $y\in Y$. Let $\mathcal{U}$: $X \rightarrow \mathbb{U} \subset \mathbb{R}^{d_u}$ and $\mathcal{V}$: $Y \rightarrow \mathbb{V} \subset \mathbb{R}^{d_v}$ be sets of continuous functions $u(x)$ and $v(y)$, respectively. Assume the operator $\mathcal{G}$: $\mathcal{U} \rightarrow \mathcal{V}$ is continuous. Then, for all $\epsilon > 0$, there exists a $m^\star,p^\star \in \mathbb{N}$ such that for each $m \geq m^\star$, $p \geq p^\star$, there exist $\theta^{(k)}$, $\vartheta^{(k)}$, neural networks $f^{\mathcal{N}}\left(\cdot ; \theta^{(k)}\right), g^{\mathcal{N}}\left(\cdot ; \vartheta^{(k)}\right), k= 1,\dots,p$ and $x_j \in X, j=1, \ldots, m$, with corresponding $\mathbf{u}_m=\left(u\left(x_1\right), u\left(x_2\right), \cdots, u\left(x_m\right)\right)^{\top}$, such that 
    \begin{align}
        \sup_{\mathbf{u}\in \mathcal{U}}\sup_{y\in Y } \left|\mathcal{G}(\mathbf{u})(y)-\mathcal{G}_{\mathbb{N}}\left(\mathbf{u}_m\right)(y)\right|<\epsilon, 
    \end{align}
    for all functions $u \in \mathcal{U}$ and all values $y\in Y$ of $\mathcal{G}(\mathbf{u})(y) \in \mathcal{V}$.
\end{lemma}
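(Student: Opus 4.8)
The plan is to follow the classical three-stage \emph{encoder--approximator--reconstructor} decomposition underlying the Chen--Chen functional approximation framework, reducing the infinite-dimensional operator $\mathcal{G}$ to a continuous function on a finite-dimensional compact set and then invoking the standard universal approximation theorem for feedforward networks. The three stages mirror the DeepONet architecture in \eqref{neuralop}: sampling $u$ at sensor points $x_1,\dots,x_m$ (encoding), approximating the resulting finite-dimensional map (the approximator), and expanding the output in a separable sum of products of branch and trunk networks (reconstruction). The whole argument is an $\epsilon/3$-style accounting in which each stage contributes a controllable error.

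First I would exploit compactness. Because $\mathcal{U}$ is a compact set of continuous functions on the compact domain $X$, the Arzel\`a--Ascoli theorem gives uniform boundedness and equicontinuity of the family $\mathcal{U}$. Equicontinuity lets me choose a finite grid of sensor points $x_1,\dots,x_m$ (with $m \ge m^\star$) dense enough that the sampling operator $u \mapsto \mathbf{u}_m$ determines $u$ up to an arbitrarily small sup-norm error uniformly over $\mathcal{U}$; that is, any continuous function agreeing with $u$ on the grid stays within $\delta$ of $u$ on all of $X$. Since $\mathcal{G}$ is continuous on the compact set $\mathcal{U}$, it is in fact uniformly continuous, so this sampling error propagates to an error of order $\epsilon$ in $\mathcal{G}(\mathbf{u})(y)$, uniformly in $y \in Y$.

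Next I would handle the finite-dimensional core. The image $\mathcal{G}(\mathcal{U})$ is compact in $\mathcal{V}$, and the induced map $(\mathbf{u}_m, y) \mapsto \mathcal{G}(\mathbf{u})(y)$ extends to a continuous function $F$ on a compact subset $K \subset \mathbb{R}^{m} \times Y$. The key structural step is to approximate $F(\mathbf{u}_m,y)$ by a separable sum $\sum_{k=1}^{p} g_k(\mathbf{u}_m)\, f_k(y)$: this follows from the density of tensor products of continuous functions (a Stone--Weierstrass / tensor-product argument) on the product of compact sets, and fixes the required number of basis components $p \ge p^\star$. Finally, each coordinate function $g_k$ and $f_k$ is individually approximated to arbitrary accuracy by the branch network $g^{\mathcal{N}}(\,\cdot\,;\vartheta^{(k)})$ and trunk network $f^{\mathcal{N}}(\,\cdot\,;\theta^{(k)})$ via the classical single-network universal approximation theorem, since $g_k$ is continuous on a compact subset of $\mathbb{R}^{m}$ and $f_k$ is continuous on the compact set $Y$.

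The main obstacle is the rigorous treatment of the encoding step rather than the neural-network approximation, which is routine once the problem is finite-dimensional. The subtlety is that the sampling map $u \mapsto \mathbf{u}_m$ need not be injective, so one cannot literally write $\mathcal{G}$ as a function of $\mathbf{u}_m$ alone; the clean route is to define the target finite-dimensional function via a continuous section or averaging and to show, using equicontinuity together with the uniform continuity of $\mathcal{G}$, that this definition is well posed up to the prescribed tolerance. Controlling all three error contributions simultaneously, and verifying that the choices of $m^\star$ and $p^\star$ can be made uniformly over $\mathcal{U}$ and $Y$, is where the compactness hypotheses do the essential work.
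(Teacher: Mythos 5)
The paper does not prove this lemma: it is imported by citation from \cite{chen1995universal}, \cite{deng2022approximation} and \cite{bhan2023neural}, so there is no in-paper argument to compare against. Your encoder--approximator--reconstructor sketch is precisely the standard Chen--Chen route used in those references, and it is correct in outline, including the genuinely delicate points you flag (non-injectivity of the sampling map, and the fact that the sensor count $m^\star$ and basis count $p^\star$ must be chosen uniformly over $\mathcal{U}$). The one thing worth making explicit is that your argument requires $\mathcal{U}$ to be a \emph{compact} subset of $C(X)$ (so that Arzel\`a--Ascoli yields equicontinuity and $\mathcal{G}$ is uniformly continuous); the lemma as stated in the paper only says ``sets of continuous functions,'' so you are silently strengthening the hypothesis to the one actually used in the cited sources --- a reasonable repair, but it should be stated rather than assumed.
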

}

\begin{definition}
    The kernel operator $\mathcal{K}$: $\mathbb{R}^{+} \rightarrow C^1(\mathcal{T}) \times C^1(\mathcal{T})$ is defined by:
%     {\setlength\abovedisplayskip{0.1cm}
% \setlength\belowdisplayskip{0.1cm}
    \begin{align}
        K^{w}(x,\xi) &:= \mathcal{K}^w(\lambda_2)(x,\xi),\\
        K^{v}(x,\xi) &:= \mathcal{K}^v(\lambda_2)(x,\xi).
    \end{align}
\end{definition}
The kernel operator $\mathcal{K}$ denotes the mapping from the characteristic speed to the backstepping transformation kernels. Based on Theorem \ref{Deeptheo}, we have the following lemma on the approximation of the neural operator for the kernel equations:
\begin{lemma}\label{NO-K}
    For all $\epsilon > 0$, there exists a neural operator $\mathcal{\Hat{K}}$ that for all $(x,\xi)\in \mathcal{T}$,
    \begin{align}\label{neuraloperator}
    \sup_{\lambda_2 \in \mathcal{U}} \norm{ \mathcal{K}(\lambda_2)(x,\xi)  - \mathcal{\Hat{K}}(\lambda_2)(x,\xi)} < \epsilon.
        % \left| \mathcal{K}(\lambda_2)  - \mathcal{\Hat{K}}(\lambda_2) \right| + \left|\partial_x\left( \mathcal{K}(\lambda_2) - \mathcal{\Hat{K}}(\lambda_2) \right) \right| + \left|\partial_{\xi}\left( \mathcal{K}(\lambda_2)  - \mathcal{\Hat{K}}(\lambda_2)\right) \right| < \epsilon.
    \end{align}
\end{lemma}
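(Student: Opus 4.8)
The plan is to obtain Lemma~\ref{NO-K} as a direct application of the DeepONet universal approximation result in Lemma~\ref{Deeptheo}, with $\lambda_2$ playing the role of the input $\mathbf{u}_m$ (here a scalar, so the sampling $\mathbf{u}_m$ is essentially $\lambda_2$ itself) and $(x,\xi)\in\mathcal{T}$ playing the role of the evaluation point $y$. To invoke that theorem I would verify its three structural hypotheses for the kernel operator $\mathcal{K}$: (i) the input ranges over a compact set, (ii) the target functions $K^w(\cdot,\cdot)$ and $K^v(\cdot,\cdot)$ are continuous on $\mathcal{T}$, and (iii) $\mathcal{K}$ is itself continuous. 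The first is immediate from the congested-traffic setting: restricting the equilibrium to the admissible band $\rho^\star\in(\rho_c,\rho_m]$ of a twice-differentiable concave fundamental diagram confines $\lambda_2=-\rho^\star V'(\rho^\star)-v^\star$ to a compact interval $\mathcal{U}\subset\mathbb{R}^+$ (recall $\lambda_2>0$ in the congested regime), on which $\lambda_1+\lambda_2=-\rho^\star V'(\rho^\star)$ is bounded away from zero.

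For the second hypothesis I would appeal to the well-posedness of the kernel system \eqref{ker1}--\eqref{ker4}. The $K^v$ equation reduces to the transport equation $K^v_x+K^v_\xi=0$ (valid since $\lambda_2\neq 0$), so $K^v$ is constant along the characteristics $x-\xi=\mathrm{const}$ and is pinned by the boundary data $K^v(x,0)=-K^w(x,0)$; the $K^w$ equation is then a linear first-order hyperbolic equation with a $K^v$-dependent source and the explicit boundary value $K^w(x,x)=-c(x)/(\lambda_1+\lambda_2)$. Since $c(x)$ is smooth and the coupling is triangular, the standard method-of-characteristics / successive-approximation construction from hyperbolic backstepping theory yields a unique solution pair in $C^1(\mathcal{T})\times C^1(\mathcal{T})$.

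The main obstacle is the third hypothesis, continuity of $\mathcal{K}$ as a map into $C^1(\mathcal{T})\times C^1(\mathcal{T})$, i.e. continuous dependence of the kernel solution on the scalar parameter $\lambda_2$. The cleanest route is to write the successive-approximation series solving \eqref{ker1}--\eqref{ker4}: each Picard iterate is an integral along characteristics whose slopes and whose integrand coefficients $c(x)$, $1/(\lambda_1+\lambda_2)$, $\lambda_2$ all depend smoothly on $\lambda_2$; one shows the series converges uniformly in $\lambda_2$ over the compact band $\mathcal{U}$ and that each term is continuous in $\lambda_2$, so the uniform limit is continuous. Equivalently, a perturbation estimate bounding $\norm{K^w_{\lambda_2}-K^w_{\lambda_2'}}$ and $\norm{K^v_{\lambda_2}-K^v_{\lambda_2'}}$ by a constant multiple of $|\lambda_2-\lambda_2'|$ gives (in fact Lipschitz) continuity directly. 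The delicate part is that the characteristic slopes themselves move with $\lambda_2$, so the integration domains shift; controlling this shift uniformly over $\mathcal{T}$ — where compactness of $\mathcal{U}$ and the lower bound on $\lambda_1+\lambda_2$ keep all constants finite — is where the estimate requires care.

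With all three hypotheses verified, Lemma~\ref{Deeptheo} applies to each component operator $\mathcal{K}^w$ and $\mathcal{K}^v$ (equivalently, to the stacked vector-valued operator), furnishing branch and trunk networks whose combination $\mathcal{\Hat{K}}$ attains the uniform bound \eqref{neuraloperator}; taking the smaller of the two component tolerances against $\epsilon$ closes the argument.
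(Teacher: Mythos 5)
Your proposal follows essentially the same route as the paper: invoke the existence and uniqueness of the kernel system \eqref{ker1}--\eqref{ker4} (the paper cites \citep{vazquez_backstepping_2011}) and then apply the DeepONet universal approximation theorem of Lemma~\ref{Deeptheo}. Your treatment is in fact more complete than the paper's own three-sentence proof, since you explicitly verify the compactness of $\mathcal{U}$ and, crucially, the continuity of $\lambda_2 \mapsto (K^w,K^v)$ in $C^1(\mathcal{T})\times C^1(\mathcal{T})$ --- a hypothesis of Lemma~\ref{Deeptheo} that the paper implicitly assumes without argument.
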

\begin{proof}
    The existence, uniqueness of the kernel equations have been proved in \citep{vazquez_backstepping_2011}. So the mapping $\mathcal{{K}} : \mathbb{R}^+ \rightarrow C^1(\mathcal{T}) \times C^1(\mathcal{T})$ from $\lambda_2$ to $K^w(x,\xi),K^v(x,\xi)$ indicated by \eqref{ker1}-\eqref{ker4} and the solution of the kernel equations exists. The neural operator $\mathcal{\hat{K}}$ approximates the backstepping kernels for a given $\lambda_2$ and their derivatives in the triangular domain $\mathcal{T}$. Using Theorem \ref{Deeptheo}, the maximum approximation error is less than $\epsilon$.
\end{proof}
\begin{remark}
    The error between the neural operator and the kernel operator is less than a given constant $\epsilon$. For the partial derivative of the kernels, we also have the continuous operator $\mathcal{M}: \mathbb{R}^+ \rightarrow C^1(\mathcal{T}) \times C^0(\mathcal{T}) \times C^0(\mathcal{T}) \times C^1(\mathcal{T}) \times C^1(\mathcal{T})$ 
    \begin{align}
        \mathcal{M}(\lambda_2)(x,\xi):=(K(x,\xi), \kappa_1(x,\xi),\kappa_2(x,\xi),\kappa_3(x,x),\kappa_4(x)) 
    \end{align}
    where $\kappa_1(x,\xi) =\lambda_2 K^w_x(x,\xi) - \lambda_1 K^w_\xi(x,\xi)- c(x)K^v(x,\xi)$, $\kappa_2(x,\xi)=\lambda_2 K^v_x(x,\xi)  + \lambda_2 K^v_\xi(x,\xi)$,$\kappa_3(x,x) = K^w(x,x) +\frac{c(x)}{\lambda_1+\lambda_2}$, and $\kappa_4(x) = K^v(x,0)+K^w(x,0)$. And there exists a neural operator $\hat{\mathcal{M}}$ such that for all $(x,\xi)\in \mathcal{T}$
    \begin{align}
        \sup_{\lambda_2 \in \mathcal{U}} \norm{\mathcal{M}(\lambda_2)(x,\xi) - \hat{\mathcal{M}}(\lambda_2)(x,\xi)} < \epsilon.
    \end{align}
    So there exists the neural operator $\mathcal{K}(\lambda_2)(x,\xi)$, such that 
    \begin{align}
        \sup_{\lambda_2\in\mathcal{U}} \norm{\mathcal{K}(\lambda_2)(x,\xi)  - \mathcal{\Hat{K}}(\lambda_2)(x,\xi)} + \norm{\partial_x(\mathcal{K}(\lambda_2)(x,\xi)  - \mathcal{\Hat{K}}(\lambda_2)(x,\xi))} + \norm{\partial_\xi(\mathcal{K}(\lambda_2)(x,\xi)  - \mathcal{\Hat{K}}(\lambda_2)(x,\xi))} < \epsilon
    \end{align}
\end{remark}
We then provide the stability analysis of the ARZ traffic system with the NO-approximated kernels. We first start with the approximated kernels and put them into the ARZ system to get the NO-approximated target system. For a given value of $\lambda_2$, defining the output of the neural operator $\mathcal{\Hat{K}}(\lambda_2)(x,\xi)$:
\begin{align}
    \Hat{K}^w &= \mathcal{\Hat{K}}^w(\lambda_2)(x,\xi),\\
    \Hat{K}^v &= \mathcal{\Hat{K}}^v(\lambda_2)(x,\xi).
\end{align}
For the NO-approximated kernels $\Hat{K}^w, \Hat{K}^v$, we have the following result for the NO-approximated system.
{
\begin{remark}
    The maximum approximation error $\epsilon$ gives the error bound of the neural operator approximation and provides the sufficient stability condition to prove the closed-loop system with NO-approximated kernels. It is dependent on the network size and the neural layers. In other words, the proposed method can be generalized given any $\epsilon$-accuracy by increasing the network size. The convergence speed is also related to the approximation error $\epsilon$. %A smaller approximation error leads to a faster convergence speed.    But the network size can not be determined through the universal approximation theorem and there is no guarantee that any finite size is enough to approximate the kernels within the error $\epsilon$.
\end{remark}
}

\begin{theorem}\label{es-NO-k}
   The system \eqref{origin1}-\eqref{origin2} with boundary conditions \eqref{bc_q}-\eqref{bc_v} is locally exponential stable under the control law \eqref{control_mu2k} with initial conditions $\Bar{\rho}(x,0)$, $\Bar{v}(x,0)$, satisfying
    \begin{align}
        \norm{(\Bar{\rho}(x,t),\Bar{v}(x,t))}_{L_2}^2 \leq c_1 \mathrm{e}^{-\eta t}\norm{(\Bar{\rho}(x,0),\Bar{v}(x,0))}^2_{L^2}, 
    \end{align}
    where $c_1 = \frac{m_1n_2k_1}{m_2n_1k_2}$, $m_1>0, m_2>0,n_1>0,n_2>0,k_1>0,k_2>0$, $\eta = \nu - \frac{2a\epsilon(2\lambda_1 + (1+L)\lambda_2)}{m_1\lambda_2}(1+\frac{1}{k_1}) - \frac{2a\epsilon \lambda_2}{m_1\lambda_2}$, $a>0$. The kernels are approximated by the neural operator \eqref{neuraloperator} with accuracy $\epsilon$. The traffic system can eventually achieve to its equilibrium.
\end{theorem}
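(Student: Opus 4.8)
The plan is to adapt the exact-kernel backstepping stability argument, but to carry along the residual terms generated by replacing the exact kernels $K^w,K^v$ with the NO-approximated kernels $\hat{K}^w,\hat{K}^v$. First I would apply the \emph{approximate} backstepping transformation, that is \eqref{back-1}--\eqref{back-2} with $K^w,K^v$ replaced by $\hat{K}^w,\hat{K}^v$, to the plant \eqref{bs-q}--\eqref{bs-bc_l}. Since $\hat{K}^w,\hat{K}^v$ satisfy the kernel equations \eqref{ker1}--\eqref{ker4} only up to the residuals $\kappa_1,\kappa_2,\kappa_3,\kappa_4$ of the preceding Remark, the map sends the plant not to the clean target system \eqref{tar1}--\eqref{tar2} but to a perturbation of it. A key simplification to exploit is that $\alpha=\tilde w$ identically, so the $\alpha$-equation and the relation $\alpha(0,t)=-r\hat\beta(0,t)$ remain exact; moreover the NO-approximated control law \eqref{control_mu2k} is built precisely so that it still enforces $\hat\beta(L,t)=0$ exactly. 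Hence the only perturbation lives in the $\hat\beta$ transport equation, which acquires an in-domain source of the form $\int_0^x\bigl(\kappa_1(x,\xi)\tilde w(\xi,t)+\kappa_2(x,\xi)\tilde v(\xi,t)\bigr)d\xi$ together with point contributions coming from the boundary residuals $\kappa_3,\kappa_4$.

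The next step is to certify that this source is $O(\epsilon)$. By Lemma~\ref{NO-K} and the companion bound in the Remark for the operator $\mathcal{M}$, each residual $\kappa_1,\kappa_2,\kappa_3,\kappa_4$, together with the first derivatives of $\hat{K}^w,\hat{K}^v$ that enter the computation, is uniformly bounded by $\epsilon$ on $\mathcal{T}$. A Cauchy--Schwarz estimate then converts the integral source into a bound of the form $\epsilon\,C(L,\lambda_1,\lambda_2)\,\norm{(\tilde w,\tilde v)}_{L^2}$, which, by the bounded invertibility of the transformation, is in turn controlled by the target-state norm. This is the step that produces the explicit $\lambda_1,\lambda_2$ and the $(1+L)$ factors that surface in $\eta$.

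With the perturbed target system in hand, I would introduce the weighted Lyapunov functional
\begin{align*}
V(t)=\int_0^L \mathrm{e}^{-\delta x}\alpha^2(x,t)\,dx + a\int_0^L \mathrm{e}^{\delta x}\hat\beta^2(x,t)\,dx,\qquad a>0,\ \delta>0,
\end{align*}
the same forward/backward-weighted structure used for the exact target system. Differentiating along trajectories and integrating by parts, the unperturbed dynamics give the nominal decay $-\nu V$ from the exponential weights and the boundary coupling, exactly as in \citep{yu_traffic_2019}. The source term adds a cross term $2a\int_0^L\mathrm{e}^{\delta x}\hat\beta f_\epsilon\,dx$; bounding it by Cauchy--Schwarz and a Young split with free parameter $k_1$, and using that both $\int\mathrm{e}^{\delta x}\hat\beta^2$ and $\norm{(\tilde w,\tilde v)}_{L^2}^2$ are dominated by $V$ (with the lower equivalence constant $m_1$), keeps the $\epsilon$-dependence \emph{linear} and absorbs the perturbation into $V$ at the cost of lowering the rate. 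Collecting coefficients yields $\dot V\le -\eta V$ with $\eta=\nu-\frac{2a\epsilon(2\lambda_1+(1+L)\lambda_2)}{m_1\lambda_2}\left(1+\frac{1}{k_1}\right)-\frac{2a\epsilon\lambda_2}{m_1\lambda_2}$, so that $V(t)\le \mathrm{e}^{-\eta t}V(0)$ and $\eta>0$ once $\epsilon$ is small enough.

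Finally I would transfer this decay back to the original coordinates through the chain of norm equivalences: $V$ is equivalent to $\norm{(\alpha,\hat\beta)}_{L^2}^2$ with constants $k_1,k_2$; the approximate backstepping map \eqref{back-1}--\eqref{back-2} is bounded and boundedly invertible on $L^2$, giving constants $m_1,m_2$ between $(\alpha,\hat\beta)$ and $(\tilde w,\tilde v)$; and the linear change of variables \eqref{riemann} relates $(\tilde w,\tilde v)$ to $(\bar\rho,\bar v)$ with constants $n_1,n_2$. Composing these three equivalences produces the claimed estimate with overshoot constant $c_1=\frac{m_1 n_2 k_1}{m_2 n_1 k_2}$. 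I expect the main obstacle to be precisely the interaction of Steps two and three: keeping the $\epsilon$-dependence explicit and first order while bounding the integral residual source, so that the perturbation merely shifts the decay rate by an $O(\epsilon)$ amount rather than flipping the sign of $\dot V$. Pinning down the exact coefficient of $\epsilon$ in $\eta$ requires carefully tracking how the $L$-dependent Cauchy--Schwarz bounds on the source integrals combine with the weighted Lyapunov estimate, which is where the bookkeeping is most delicate.
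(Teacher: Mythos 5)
Your proposal follows essentially the same route as the paper: apply the backstepping transformation with the NO-approximated kernels, observe that the control law still enforces $\hat\beta(L,t)=0$ so the perturbation appears only as $O(\epsilon)$ residual sources in the $\hat\beta$ transport equation, bound them via Cauchy--Schwarz and Young with the free parameter $k_1$ inside a spatially weighted Lyapunov functional, and transfer the decay back through the chain of norm equivalences to obtain $c_1=\frac{m_1n_2k_1}{m_2n_1k_2}$ and the stated $\eta$. The only cosmetic differences are the sign convention in the exponential weight on the $\hat\beta$ term and a relabeling of which equivalence constants are called $k_i$ versus $m_i$, neither of which affects the argument.
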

\begin{proof}
    First, we define the error for the NO-approximated kernels and backstepping kernels:
$\Tilde{K}^w(x,\xi) = K^w(x,\xi) - \Hat{K}^w(x,\xi)$, $\Tilde{K}^v(x,\xi) = K^v(x,\xi) - \Hat{K}^v(x,\xi)$. We start from the boundary control model~\eqref{bs-q}-\eqref{bs-bc_l}, and the baskstepping transformation then is turned into:
\begin{align}
    \Hat{\alpha}(x,t) &= \Tilde{w}(x,t),\label{back-NO1}\\
    \Hat{\beta}(x,t) &= \Tilde{v}(x,t) - \int_0^x \Hat{K}^w(x,\xi)\Tilde{w}(\xi,t)d\xi -  \int_0^x \Hat{K}^v(x,\xi)\Tilde{v}(\xi,t)d\xi, \label{back-NO2}
\end{align}
the corresponding backstepping control law is
\begin{align}\label{control_mu2k}
    {U}(t) =  - \kappa \Tilde{w}(L,t) + \int_0^L \Hat{K}^w(L,\xi)\Tilde{w}(\xi,t) d\xi + \int_0^L \Hat{K}^v(L,\xi)\Tilde{v}(\xi,t) d\xi.
\end{align}
Thus we get the target system with the NO-approximated kernels as
\begin{align}
    \partial_t \Hat{\alpha}(x,t) + \lambda_1 \partial_x\Hat{\alpha}(x,t) &= 0, \label{tar-NO1}\\
    \partial_t \Hat{\beta}(x,t)  - \lambda_2 \partial_x \Hat{\beta}(x,t) &=\lambda_2(\Tilde{K}^w(x,0) + \Tilde{K}^v(x,0))\Tilde{v}(0,t) + (\lambda_1 + \lambda_2) \Tilde{K}^w(x,x)\Tilde{w}(x,t)\nonumber\\
    & + \int_0^x (\lambda_2 \Tilde{K}^w_x(x,\xi) + \lambda_1 \Tilde{K}^w_{\xi}(x,\xi))\Tilde{w}(\xi,t) d\xi \nonumber\\
    &+\int_0^x (\lambda_2 \Tilde{K}^v_x(x,\xi) + \lambda_2  \Tilde{K}^v_{\xi}(x,\xi))\Tilde{v}(\xi,t) d\xi, \\
    \Hat{\alpha}(0,t) &= -r\Hat{\beta}(0,t),\\
    \Hat{\beta}(L,t) &= 0\label{tar-NO2}.
\end{align}
For the target system \eqref{tar-NO1}-\eqref{tar-NO2} with the NO-approximated kernels, we define the Lyapunov candidate as
\begin{align}\label{Lyap}
    V_k(t) = \int_0^L \frac{\mathrm{e}^{-\frac{\nu}{\lambda_1}x}}{\lambda_1}\Hat{\alpha}^2(x,t) + a\frac{\mathrm{e}^{-\frac{\nu}{\lambda_2}x}}{\lambda_2} \Hat{\beta}^2(x,t) dx,
\end{align}
where the coefficients $\nu$ and $a$ are constants and $\nu>0$, $a>0$. The states of the NO-approximated backstepping target system $(\Hat{\alpha},\Hat{\beta})$ and the original states $(\Tilde{w},\Tilde{v})$ have equivalent $L^2$ norms 
\begin{align}
    k_1 \norm{(\Tilde{w}(x,t),\Tilde{v}(x,t))}_{L_2}^2 \leq \norm{(\Hat{\alpha}(x,t),\Hat{\beta}(x,t))}_{L_2}^2 \leq k_2 \norm{(\Tilde{w}(x,t),\Tilde{v}(x,t))}_{L_2}^2.
\end{align}
In the mean time, the Lyapuov functional $V_k(t)$ is also equivalent to the $L^2$ norm of the target system, so there exist two constants $m_1>0$ and $m_2>0$, 
\begin{align}
    m_1 \norm{(\Hat{\alpha}(x,t),\Hat{\beta}(x,t))}_{L_2}^2 \leq V_k(t) \leq m_2 \norm{(\Hat{\alpha}(x,t),\Hat{\beta}(x,t))}_{L_2}^2.
\end{align}
% \begin{align}
%     m_1||(\Tilde{w},\Tilde{v})||^2_{L^2} \leq V_k(t) \leq m_2||(\Tilde{w},\Tilde{v})||^2_{L^2}.
% \end{align}
Taking time derivative along the trajectories of the system, and then we plug the system dynamics. Integrating by parts, thus we get the following result of the Lyapunov candidate. The details of the proof are presented in the Appendix.~\ref{append}.

\begin{align}\label{lyapunov_bound}
    \Dot{V}_k(t) \leq - \eta V_k(t) + \left( r^2-a+2aL\epsilon \right)\Hat{\beta}^2(0,t) - \mathrm{e}^{-\frac{\nu}{\lambda_1}L}\Hat{\alpha}^2(L,t),
\end{align}
where $\eta = \nu - \frac{2a\epsilon(2\lambda_1 + (1+L)\lambda_2)}{m_1\lambda_2}(1+\frac{1}{k_1}) - \frac{2a\epsilon \lambda_2}{m_1\lambda_2}$. The coefficients $\nu, \epsilon, a$ are chosen such that
\begin{align}
    \eta >0, r^2-a+2aL\epsilon < 0.
\end{align}
So we get the following result:
\begin{align}
    \Dot{V}_k(t) &\leq -\eta V_k(t) \rightarrow V_k(t) \leq V(0)\mathrm{e}^{-\eta t}.
\end{align}
Using the equivalent norm of the Lyapunpov functional, we have:
\begin{align}
    ||(\Tilde{w}(x,t),\Tilde{v}(x,t))||^2_{L^2} \leq \mathrm{e}^{-\eta t}\frac{m_1k_1}{m_2k_2} ||(\Tilde{w}(x,0),\Tilde{v}(x,0))||^2_{L^2}.
\end{align}
Thus, the exponential stability of the NO-approximated PDE system \eqref{tar-NO1}-\eqref{tar-NO2} is proved. The state $\Tilde{w}(x,t)$ is obtained from \eqref{riemann}, so we have the following equivalent $L_2$ norm:
\begin{align}\label{equivalent}
    n_1\norm{(\Bar{\rho}(x,t),\Bar{v}(x,t))}_{L_2}^2 \leq \norm{(\Tilde{w}(x,t),\Tilde{v}(x,t))}^2_{L^2} \leq n_2\norm{(\Bar{\rho}(x,t),\Bar{v}(x,t))}_{L_2}^2,
\end{align}
where $n_1>0$ and $n_2 > 0$. Therefore, for the original $(\Bar{\rho}(x,t),\Bar{v}(x,t))$ system, we get
\begin{align}
    \norm{(\Bar{\rho}(x,t),\Bar{v}(x,t))}_{L_2}^2 \leq c_1 \mathrm{e}^{-\eta t}  ||(\Bar{\rho}(x,0),\Bar{v}(x,0))||^2_{L^2}.
\end{align}
where $c_1 = \frac{m_1n_2k_1}{m_2n_1k_2}$. Thus, we have proved that the original system \eqref{origin1}-\eqref{origin2} with boundary conditions \eqref{bc_q}-\eqref{bc_v} is locally exponential stable under the NO-approximated kernels and the system can eventually achieve to its equilibrium.
This completes the proof of Theorem \ref{es-NO-k}.
\end{proof}
This neural operator method is similar to the backstepping method. Also, then we take the coordinate transformation of  the original unstable PDE system \eqref{origin1}-\eqref{origin2} to get the boundary control model. To determine the key parameter in the boundary control model, we select character speed $\lambda_2$ as the input to the neural operator. With the trained neural operator, we get the backstepping kernels directly without solving kernel equations. Applying the NO-based kernels to the control law \eqref{control_mu2k}, and then adding the control law to the boundary control model. Thus we get the NO-based target PDE system \eqref{tar-NO1}-\eqref{tar-NO2}. We have proved that the NO-based target system is exponentially stable in the spatial-temporal domain through the Lyapunov analysis. For the DeepONet to learn an operator $\Hat{\mathcal{K}}(\lambda_2)(x,\xi)$, the inputs are taken as the characteristic speed $\lambda_2$. Then the characteristic speed $\lambda_2$ goes into the branch net and the triangular domain coordinate goes into the trunk net to train the model. The output of the brunch net and trunk net then are made dot product to get the final learned operator for the mapping $\lambda_2 \rightarrow K^w(x,\xi), K^v(x,\xi)$. The detailed diagram of the neural operator can be found in Fig.~\ref{bsdia}. The computation time and complexity are highly reduced by using the learned operator to get the backstepping kernels. There is no need to solve the kernel equations online anymore with trained operator, which can be more efficient for real-time implementation and more cheaper for the traffic administration.

In the previous section, the backstepping kernels are approximated by neural operators $\mathcal{\hat{K}}^w(\lambda_2)(x,\xi)$ and $\mathcal{\hat{K}}^v(\lambda_2)(x,\xi)$. Using the NO-approximated backstepping kernels, we have proved that the PDE system is exponentially stable. However, the control law \eqref{control_bs} still requires integration of the kernels along the road, resulting real-time implementation on the freeway difficult. Therefore, we extend the neural operator to directly approximate the mapping from the characteristic speed $\lambda_2$ to the control law~\eqref{control_bs}. The stability we achieve in the control law mapping is practical. Recalling the control law \eqref{control_bs}, we define the operator mapping $\mathcal{H}\left(\lambda_2\right): \mathbb{R}^+ \rightarrow \mathbb{R}$ that maps $\lambda_2$ to $U(t)$.  The expression of backstepping control law \eqref{control_bs} shows that there is no explicit form for the mapping from $\lambda_2$ to $U(t)$. The relation between $\lambda_2$ and $U(t)$ is characterized by the kernel equations \eqref{ker1}-\eqref{ker4}. The control law mapping is
\begin{align}\label{controlmu2c}
    U(t) = \mathcal{H}\left(\lambda_2\right)(L, t),
\end{align}
and the NO-approximated mapping for $\mathcal{H}\left(\lambda_2\right): \mathbb{R}^+ \rightarrow \mathbb{R} $ is defined as $\mathcal{\Hat{H}}(\lambda_2): \mathbb{R}^+ \rightarrow \mathbb{R}$. The traffic system is practical stable under the NO-approximated control law $\mathcal{\Hat{H}}(\lambda_2)$. The detailed proof of stability results is shown in Appendix~\ref{practical}.

\subsection{Physics-inform neural operator for kernel approximation}
In the previous section, we demonstrated that the NO-approximated control gain kernels ensure exponential stability of the closed system, and that the NO-approximated control law can achieve practical exponential stability. In this section, we extend the neural operator to a physics-informed neural operator (PINO) by incorporating physics constraints into the loss function. This extension aims to reduce the reliance on training data typical of the pure NO method. 
%As we know, PINN fails in handing different initial conditions and boundary conditions of PDEs, which leads to the need for retraining. The neural operator is not restricted by the different initial conditions and boundary conditions, but it requires enough training data to train the model. Obtaining input-output pairs can be difficult in some situations, which may be prohibitively expensive. Compared with PINN, it can not only solve one single PDE system  but also be trained without input-output pairs. 

Considering a general case for a linear or nonlinear differential operator $\mathcal{Q}: \mathcal{A} \times \mathcal{W} \rightarrow \mathcal{F}$, where $(\mathcal{A},\mathcal{W}, \mathcal{F})$ are function spaces. The differential operator takes the form:
\begin{align}
    \mathcal{Q}(\mathbf{u},w) &= 0, \quad \text{in} \quad D \subset \mathbb{R}^d \\
    w & = g, \quad \text{in} \quad \partial D
\end{align}
where $\mathbf{u} \in \mathcal{A}$ denotes the input functions (ie. The characteristic speed in this study) and $w \in \mathcal{W}$ denotes the solutions of the PDE system (i.e., the backstepping kernels). $g$ represents boundary conditions and initial conditions of the PDEs. Additionally, we have the operator mapping $\mathcal{G}(\mathbf{u})(y)$ following Theorem \ref{Deeptheo} using the formulation of $\mathcal{Q}(\mathbf{u},w)$. 
\begin{align}
    \mathcal{G}(\mathbf{u})(y) = w(\mathbf{u}).
\end{align}
Using the previous settings again, the neural operator can be obtained by \eqref{neuralop}. Then the loss function can be defined as
\begin{align}
    \mathcal{L}(\theta)=\mathcal{L}_{operator}(\theta)+\mathcal{L}_{physics}(\theta),
\end{align}
where $\mathcal{L}_{operator}(\theta)$ is the loss for the operator ,called data loss,  and $\mathcal{L}_{physics}(\theta)$ denotes the physical loss following the definition of loss function in PINN and PINN-type methods. The data loss is given as:
\begin{align}
     \mathcal{L}_{operator}(\theta) = \norm{\mathcal{G}_{\mathbb{N}}\left(\mathbf{u}_m\right)(y) - \mathcal{G}\left(\mathbf{u}\right)(y)}_{L_2}^2 = \int_D \left| \mathcal{G}_{\mathbb{N}}\left(\mathbf{u}_m\right)(y) - \mathcal{G}\left(\mathbf{u}\right)(y) \right|^2 dy.
\end{align}
The physical loss is defined as:
\begin{align}
    \mathcal{L}_{physics}(\theta) = \norm{\mathcal{Q}(\mathbf{u}, w)}_{L_2}^2 + \norm{w - g}_{L_2}^2 = \int_D \left| \mathcal{Q}(\mathbf{u}, w) \right|^2 dx + \int_D \left| w - g \right|^2 dx.
\end{align}
More specifically, the physical loss consists of equation loss and boundary loss in this study, $\mathcal{L}_{physics}(\theta) = \mathcal{L}_{equation}(\theta) +  \mathcal{L}_{boundary}(\theta)$. The equations loss is defined as
\begin{align}
    \mathcal{L}_{equation}(\theta) = \norm{\lambda_{2} {K}^w_{x}(x,\xi) - \lambda_{1} {K}^w_\xi(x,\xi) - c(x){K}^v(x,\xi)}_{L_2}^2+\norm{\lambda_2 {K}^v_x(x,\xi) - \lambda_2 {K}^v_{\xi}(x,\xi)}_{L_2}^2. 
    \label{phy1}
\end{align}
The boundary loss is defined as
\begin{align}
    \mathcal{L}_{boundary}(\theta) &= \norm{K^w(x,x) + \frac{c(x)}{\lambda_1 + \lambda_2}}_{L_2}^2 + \norm{K^v(x,0) + K^w(x,0)}_{L_2}^2.
    \label{phy2}
\end{align}
The diagram of PINO is shown in Fig.~\ref{consturc-deeppi}.
{
    The proposed PINO is an extension of the NO methods. The primary distinction between PINO and NO lies in the design of the loss function within the trained model. PINO seeks to integrate PDE model constraints into the training process by penalizing the loss function. Consequently, the training of PINO may be facilitated by prior knowledge of the PDE system embedded in the model.
    The weights assigned to the operator loss and the physics loss are hyperparameters that can be defined by the user or tuned, and they are crucial in enhancing the trainability of PINO. Depending on the choice of these weights, the trained PINO model will rely more heavily on either the operator data or the physical kernel equations. In this paper, we consider equal weights for the operator loss and physics loss. Further discussion on the selection and tuning of weights for the physics-informed neural network and PINO can be found in~\citep{wang2021understanding,wang2022and,karniadakis2021physics}.
}

\begin{figure}[!tbp]
    \centering
    \includegraphics[width = 0.8\textwidth]{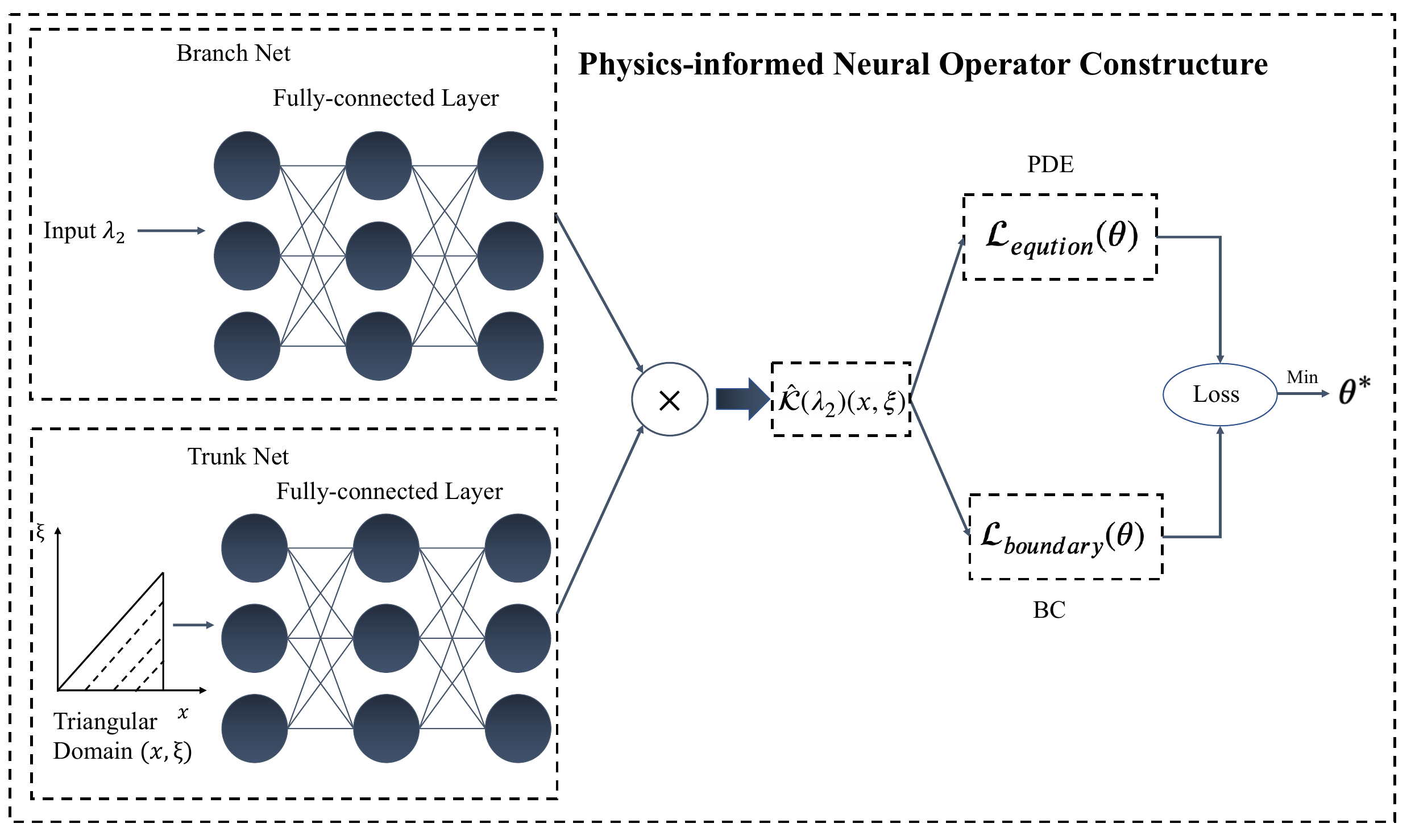}
    \caption{The diagram of physics-informed neural operator structure for backstepping kernels}
    \label{consturc-deeppi}
\end{figure}

\section{Experiments}
In this section, we present and analyze the performance of the proposed neural operator controllers for the ARZ traffic PDE system, and also provide comparisons with model-based controllers: (i) backstepping controller (ii) PI controller.

{
\subsection{Simulation and training setups}
\textbf{Simulation setup}. To train the neural operator, we use numerical simulations to generate training data. We run the simulation on a road of length $L=500 \text{m}$ for a duration of $T=300\text{s}$. 
The free-flow speed is $v_f = 144 \text{km/h}$, the maximum density is $\rho_m = 160 \text{veh/km}$, the equilibrium density is selected as $\rho^\star = 120 \text{veh/km}$, $v^\star = 36 \text{km/h}$ the reaction time for the drives adapting to speed is $\tau = 60 \text{s}$, and $\gamma = 1$. For the initial conditions, we use the sinusoidal inputs to mimic the stop-and-go traffic waves.
\begin{align}
    \rho(x,0) &= \rho^\star + 0.1 \sin \left( \frac{3\pi x}{L}\right)\rho^\star,\\
    v(x,0) &= v^\star - 0.1\sin \left( \frac{3\pi x}{L}\right)v^\star.
\end{align}
Using the above parameters of the traffic system, we first run the simulation using numerical solvers to get the evolution of traffic states in the simulation period.
}

{
\textbf{Training setup}. To obtain sufficient training data, we use $1000$ different values for $\rho^\star \in [90 \text{veh/km},130\text{veh/km}]$ to obtain varying values for $\lambda_2 \in [18\text{km/h},90\text{km/h}]$ and different kernels $K^w(x,\xi), K^v(x,\xi)$. We use 1000 samples of $\lambda_2$ and its corresponding kernels $K^w(x,\xi), K^v(x,\xi)$ by numerically solving the kernel equations. The samples of $\lambda_2$ are randomly sampled from the interval $[18\text{km/h},90\text{km/h}]$. The data is divided into a training set and a test set in a 9:1 ratio, with 900 samples for training and 100 samples for testing.  We use the Adam optimizer with a learning rate of 0.0001, decaying every 200 epochs over a total of 1000 epochs. The batch size is set to 20. The training process takes approximately 10 minutes on a single Nvidia 4090Ti GPU.
}

{
\textbf{Model structure}. 
The DeepONet architecture is employed as the foundational structure of the neural operator. Both branch and trunk networks are designed as fully connected networks. The input to the branch network is selected as the traffic congestion wave speed $\lambda_2$, while the input to the trunk network is chosen as the triangular domain. The output of the trained neural operator is then generated by taking the dot product of the outputs from the branch and trunk networks.
}

\subsection{ PI controller}
To assess the performance of the NO-based controllers, we also include a comparison with the PI controller. Previous research has demonstrated that the PI boundary controller can stabilize the traffic system~\citep{zhang_pi_2019}. The PI controller is installed at the outlet of the road section, resulting in the boundary condition of the traffic speed being $\Tilde{v}(L,t) = U_{PI}(t)$. The control law is given by:
\begin{align}
    U_{PI}(t) = v^\star + k_p^v (v(0,t) - v^\star) + k_i^v \int_0^t (v(0,t) - v^\star) ds,
\end{align}
where $k_p^v$ and $k_i^v$ are tuning gains.
{
\subsection{PINN-based controller}
In addition to the PI controller, we further compared NO-based methods with PINN-approximated kernels. The input of PINN is the grid size of the triangular domain $x,\xi$ and the output is the backstepping kernel $K^2(x,\xi), K^v(x,\xi)$. The structure of PINN is the same as~\citep{raissi2019physics, karniadakis2021physics}. We also have the general form of the kernel equations.
\begin{align}
    \mathcal{N}[u] &= 0, \quad x, \xi \in \mathcal{T}, \\
    u &= g_{\text{PINN}}, \quad u \in \partial \mathcal{T},
\end{align}
where $u(x,\xi)$ denotes the latent solution of PDEs. $\mathcal{N}$ is a nonlinear operator which describes the differentiation of PDEs. Then we define $f(x,\xi)$ to denote the residual of PDEs.
\begin{align}
    f(x,\xi) := \mathcal{N}[u].
\end{align}
The loss function is the same as PINO in Sec 3.3, consisting of data loss and physics loss.
\begin{align}
    \mathcal{L}_{\text{PINN}} = \mathcal{L}_{\text{data}}(\theta) + \mathcal{L}_{\text{physics}}(\theta),
\end{align}
where the $\mathcal{L}_{\text{physics}}(\theta)$ is the same as in Eq.~\eqref{phy1}-\eqref{phy2}, and the data loss denotes the error between the model output and the ground truth data
\begin{align}
    \mathcal{L}_{\text{data}}(\theta) = \int_\mathcal{T} \norm{u(x,\xi) - u_{\text{real}}(x,\xi)}.
\end{align}
Besides, the trained PINN model only learns one set of specific backstepping kernels. It can not output right kernels for the different $\lambda_2$.
}

\subsection{Simulation results of closed-loop system}
Since the NO-approximated kernels and control law are based on the backstepping method, we select the standard backstepping controller as the baseline for the simulation. The open-loop results for the traffic system are depicted in Fig.~\ref{open-loop}. Traffic density and speed oscillations persist throughout the simulation period, leading to the occurrence of stop-and-go waves. The density and speed of the closed-loop results using the backstepping method are illustrated in Fig.~\ref{density_all}(a) and Fig.~\ref{velocity_all}(a). It can be observed that the traffic density and speed all converge to the equilibrium points at the finite time $130\text{s}$. The closed-loop results of NO-based controller PI controller, and PINN-based controller are illustrated in Fig.~\ref{density_all} and~\ref{velocity_all}. It is revealed that all control methods effectively stabilize traffic oscillations. Traffic density and speed converge to their equilibrium point, $\rho^\star = 120 \text{veh/km}$ and $v^\star = 36 \text{km/h}$, respectively, despite the sinusoidal initial conditions that initially induce instability throughout the road section.
The closed-loop results for the PI controller are depicted in Fig.~\ref{density_all}(b) and Fig.~\ref{velocity_all} (b). The closed-loop results for the PINN-based controller are shown in Fig.~\ref{density_all}(c) and Fig.~\ref{velocity_all}(c). The closed-loop results for NO-approximated kernels are shown in Fig.~\ref{density_all}(d),~\ref{velocity_all} (d). However, traffic waves are still observable at $150\text{s}$ due to the small approximation error of the neural operator. 

Regarding the results of the NO-approximated control law, the same parameter settings as the previous section were utilized to generate training data, consisting of 900 instances. The results of the approximated control law mapping are illustrated in Fig.~\ref{density_all}(e) and Fig.~\ref{velocity_all}(e). It is observed that the NO-approximated control law can practically stabilize the system, as the system does not uniformly converge to the equilibrium point. Small oscillations in both density and speed persist throughout the entire simulation period, preventing uniform convergence to the equilibrium points. This is reasonable because the neural operator only learns the mapping from $\lambda_2$ to $U(t)$. As we know, the control law in \eqref{control_bs} consists of two parts: the backstepping kernels and the system states at the current time step. The system is practically stable under the condition of the NO-approximated control law.

Subsequently, we present the results for the density and speed errors between other methods and the backstepping method, as depicted in Fig.~\ref{err_density_all} and Fig.~\ref{err_velocity_all}. The results exclude the PI controller, as it represents a different type of controller compared to the NO-based methods, which all belong to the same backstepping category.
The error between the closed-loop result of the backstepping controller and the NO-approximated kernels is illustrated in Fig.~\ref{err_velocity_all}(a). 
It is evident that there are some errors at the initial stage of the NO-approximated kernels. The maximum error of the density is approximately $1.5 \text{veh/km}$ at the location of $80 \text{m}$ after $50 \text{s}$. The maximum speed error is $0.48 \text{km/h}$. The density and speed error reduce to zero after about $150\text{s}$. However, the density and speed errors of the NO-approximated control law persist throughout the entire time period. The density oscillation is smaller than $0.6 \text{veh/km}$ on average and the oscillation of speed is smaller than $0.4 \text{km/h}$. 
\begin{table}[htbp]
    \centering
    \caption{The closed-loop density and speed errors under different controllers }
    \begin{tabular}{c c c c c }
    \hline
       \multirow{3}{*}{\textbf{Method}}  & \multicolumn{2}{c}{$\rho(x,t)(\text{veh/km})$} & \multicolumn{2}{c}{$v(x,t)(\text{km/h})$}\\
       \cline{2-5}
         & \textbf{\makecell{Max absolute error} } & \textbf{\makecell{Mean absolute error}} & \textbf{\makecell{Max absolute error}} & \textbf{\makecell{Mean absolute error}}\\
    \hline
    % PINN-kernels & $4.3556$ & $0.3063$ & $1.3554$ & $0.1421$ \\
    % NO-kernels & $1.5025$ & $0.1159$ & $0.4835$ & ${0.0489}$ \\
    % PINO-kernels & $2.6778$ & $0.1325$ & $0.8185$ & $0.0576$ \\
    % NO-control law & ${3.8684}$ & $0.5840$ & ${1.2438}$ & $0.3423$\\
    PINN-kernels & $3.63\%$ & $0.26\%$ & $3.76\%$ & $0.39\%$ \\
    NO-kernels & $1.25\%$ & $0.09\%$ & $1.34\%$ & $0.14\%$ \\
    PINO-kernels & $2.23\%$ & $0.11\%$ & $2.27\%$ & $0.16\%$ \\
    NO-control law & $3.22\%$ & $0.49\%$ & $3.45\%$ & $0.95\%$\\
    \hline
    \end{tabular}
    \label{error_tab_rho_v_all}
\end{table}

For the training procedure of PINO, we utilize half the samples of the training dataset as before to train the model to see whether PINO can still stabilize the traffic system. Using the trained PINO model, we do the prediction for the backstepping kernels. The simulation settings are the same as before, and the results of PINO-approximated kernels are depicted in Fig.~\ref{density_all}(d),~\ref{velocity_all}(d). Fig.~\ref{density_all}(d) shows the density of the PINO-based result while Fig.~\ref{velocity_all} (d) shows the speed result. It is observed that the  maximum density and speed error of PINO-approximated kernels are $2.7 \text{veh/km}$, $0.8\text{km/h}$ from Fig.~\ref{err_density_all} (c),~\ref{err_velocity_all}(c), respectively. The density and speed errors under different NO-based schemes and the PINN method are presented in Tab.~\ref{error_tab_rho_v_all}. NO-based methods all outperform the PINN-based method.

\begin{figure}[!htbp]
\centering
\subfigure[Density]{\includegraphics[width=0.45\textwidth]{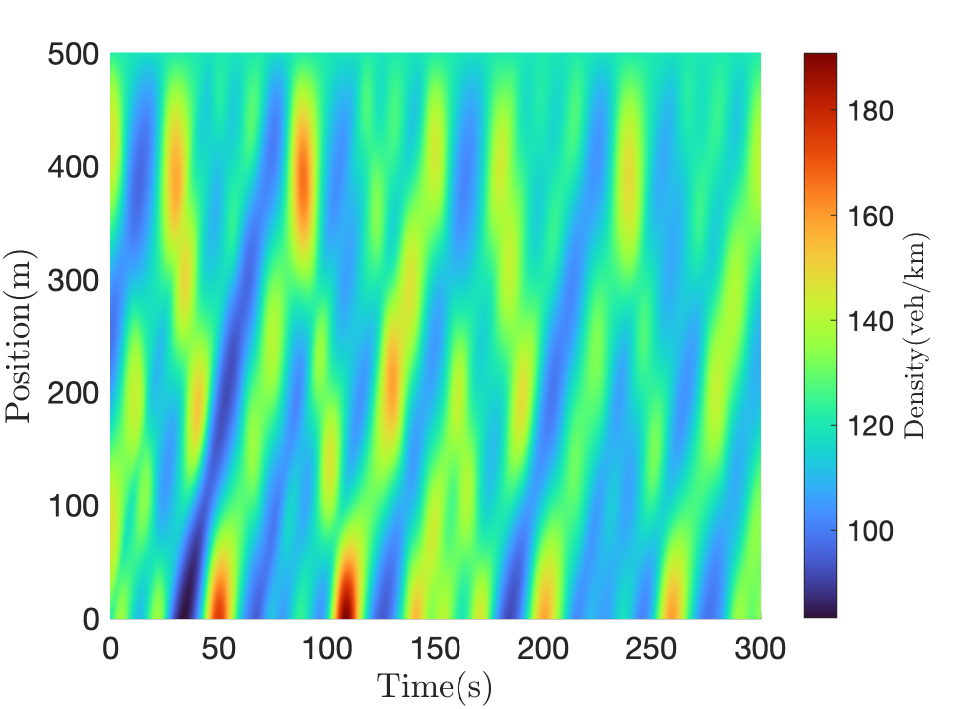}}
\subfigure[Speed]{\includegraphics[width=0.45\textwidth]{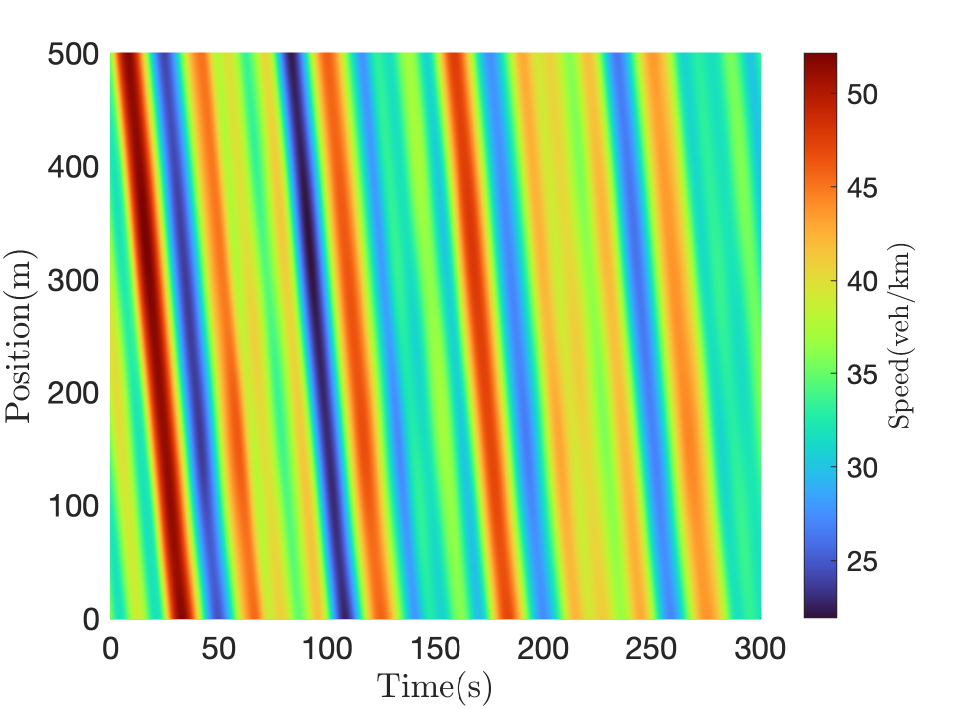}}
\caption{Traffic density and speed evolution of the open-loop system}
\label{open-loop}
\end{figure}

\begin{figure}[!htbp]
    \centering
     \subfigure[Backstpping]{\includegraphics[width=0.32\textwidth]{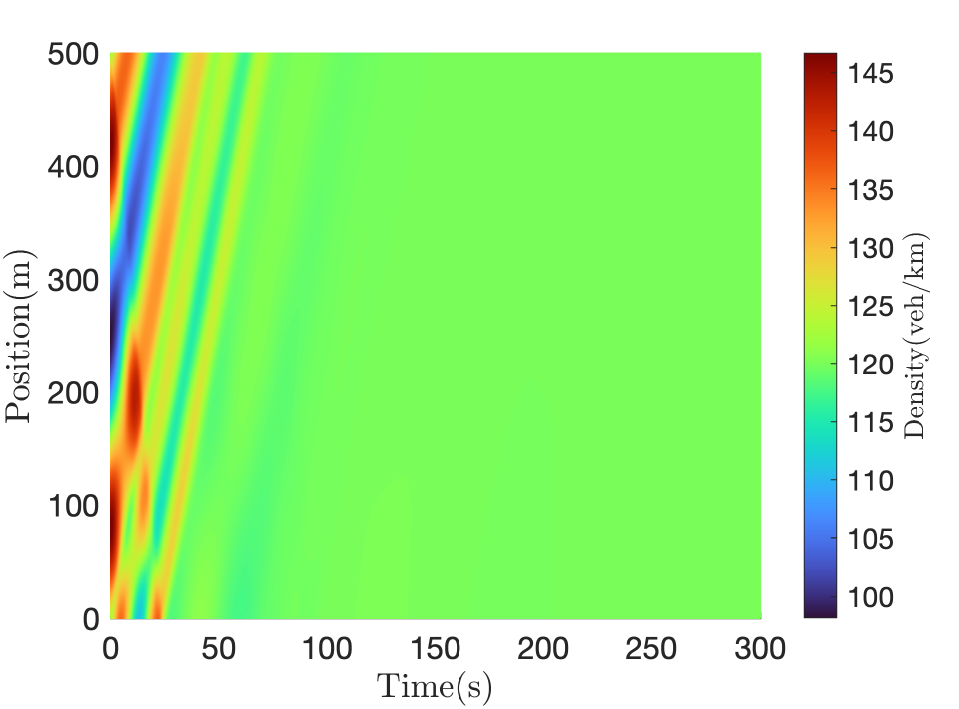}}
    \subfigure[PI controller]{\includegraphics[width=0.32\textwidth]{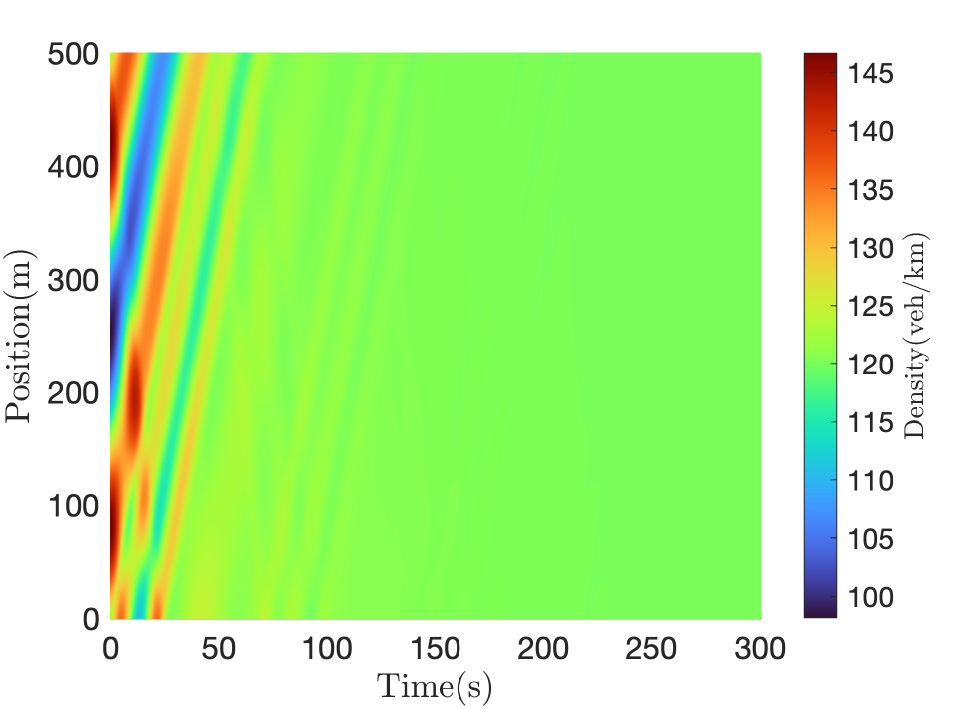}}
    \subfigure[PINN]{\includegraphics[width=0.32\textwidth]{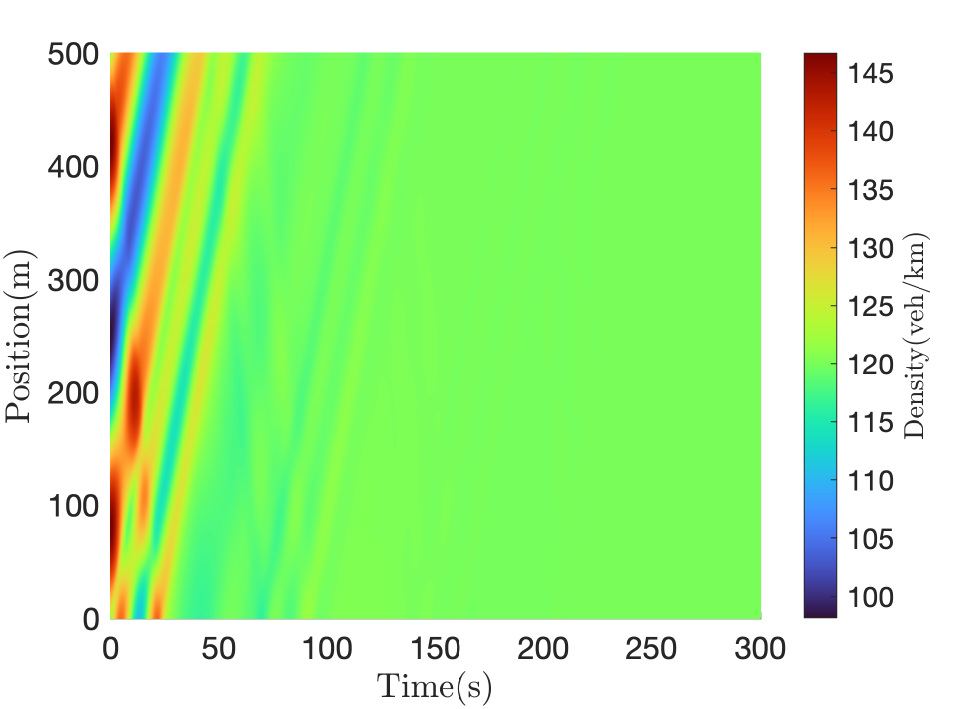}}\\
    \subfigure[NO-kernel]{\includegraphics[width=0.32\textwidth]{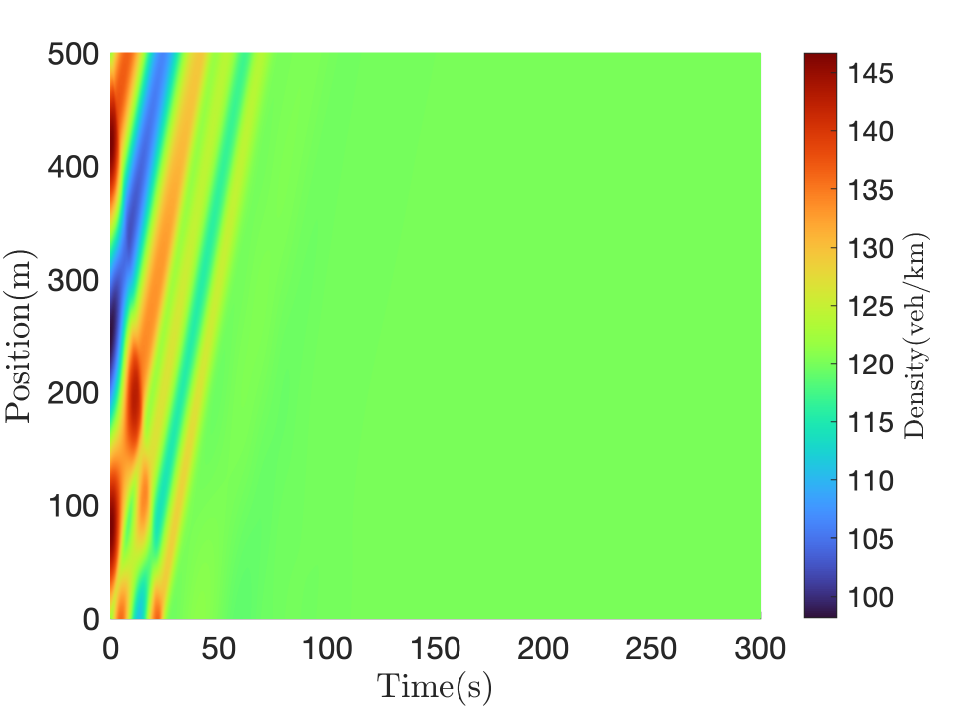}}
    \subfigure[NO-control]{\includegraphics[width=0.32\textwidth]{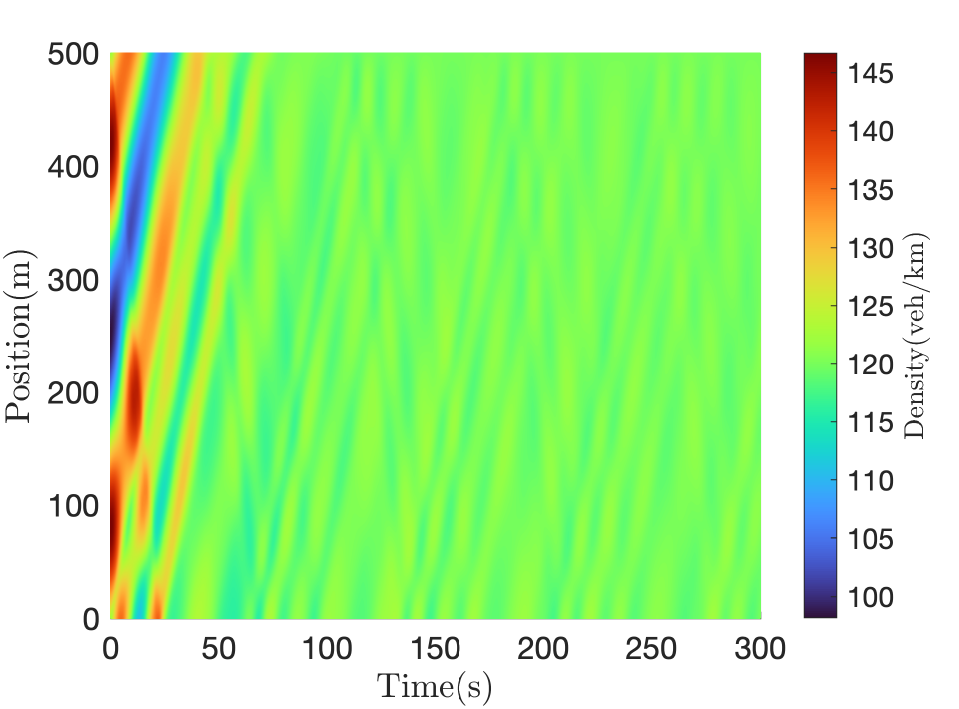}}
    \subfigure[PINO-kernel]{\includegraphics[width=0.32\textwidth]{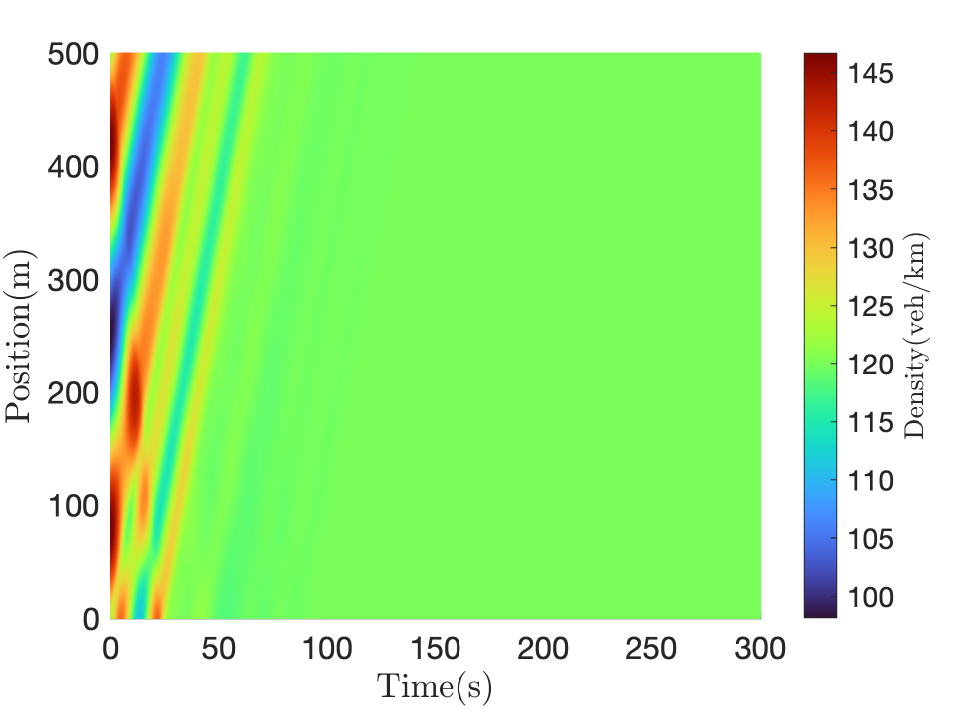}}\\
    \caption{Closed-loop traffic density evolution with different control designs }
    \label{density_all}
\end{figure}
\begin{figure}[!htbp]
    \centering
    \subfigure[Backstpping]{\includegraphics[width=0.32\textwidth]{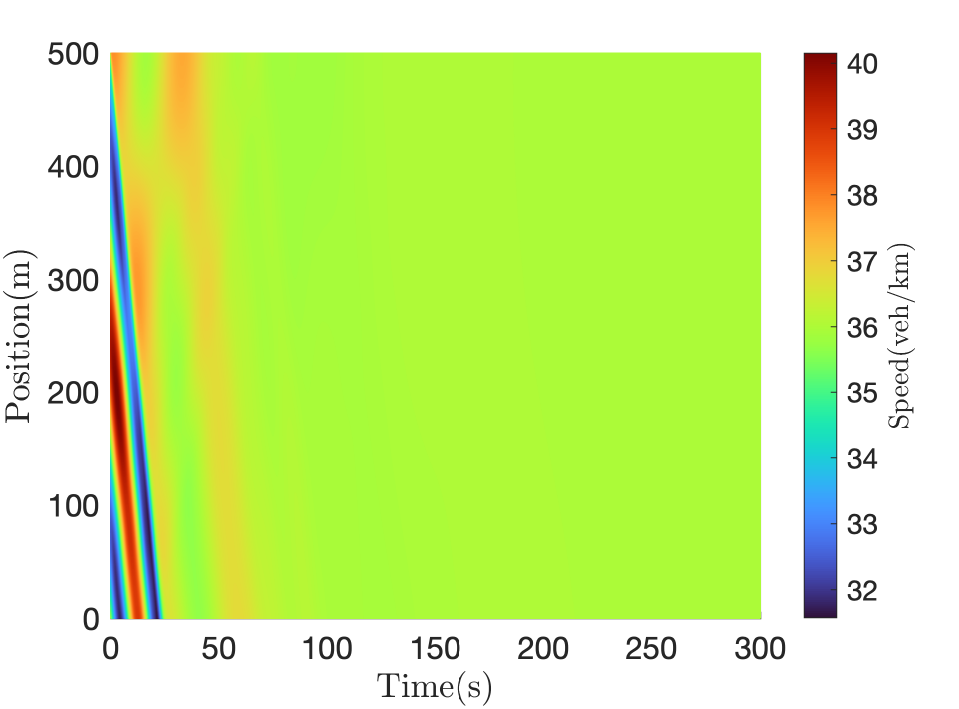}}
    \subfigure[PI controller]{\includegraphics[width=0.32\textwidth]{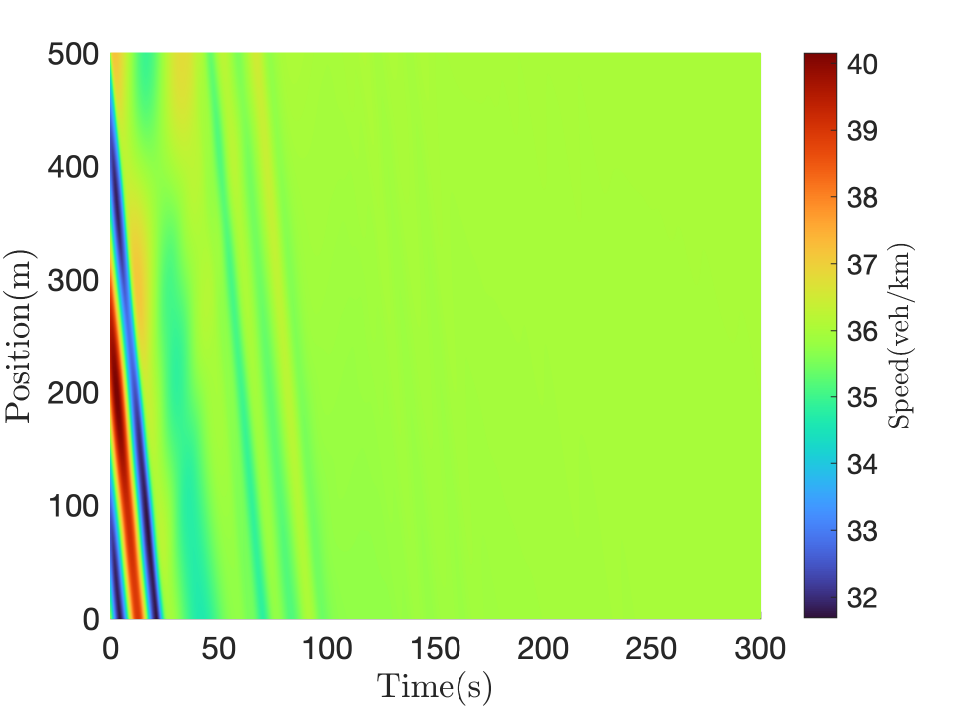}}
    \subfigure[PINN]{\includegraphics[width=0.32\textwidth]{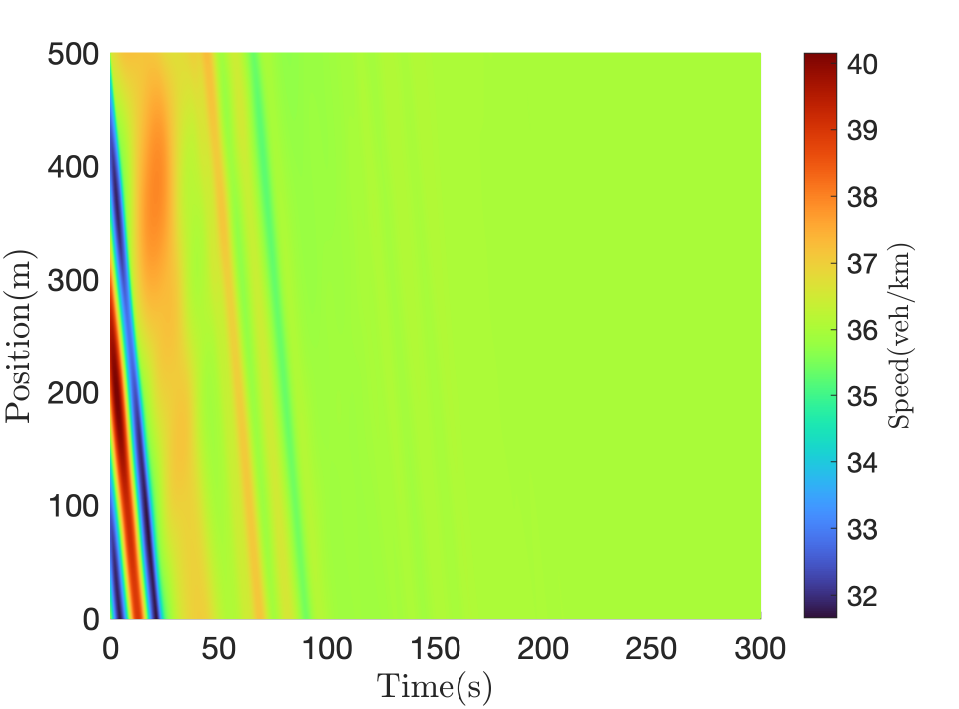}}\\
    \subfigure[NO-kernel]{\includegraphics[width=0.32\textwidth]{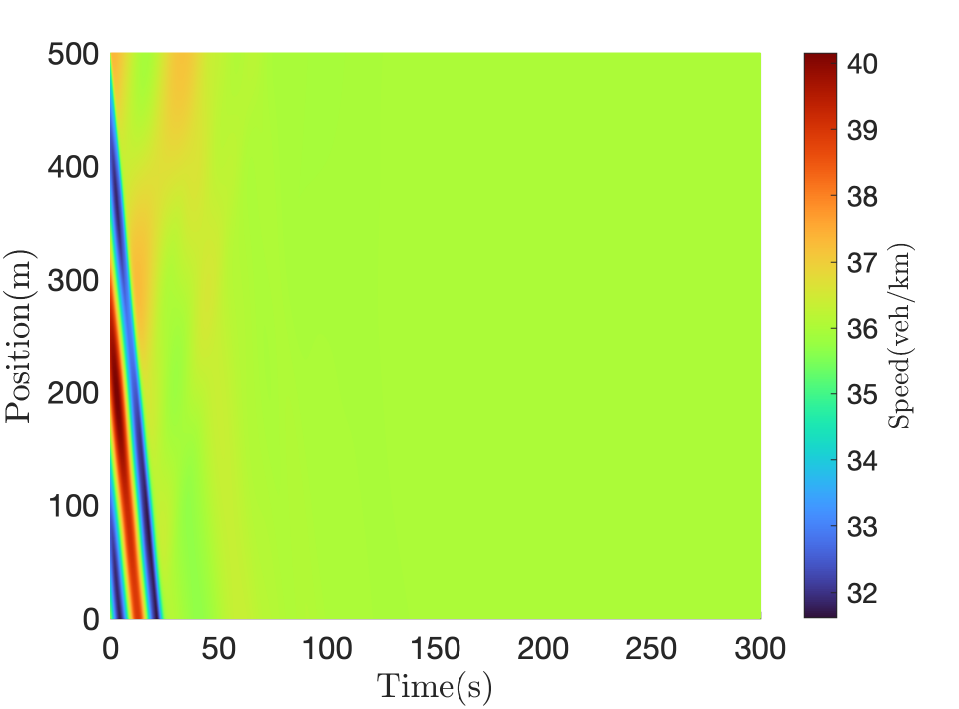}}
    \subfigure[NO-control]{\includegraphics[width=0.32\textwidth]{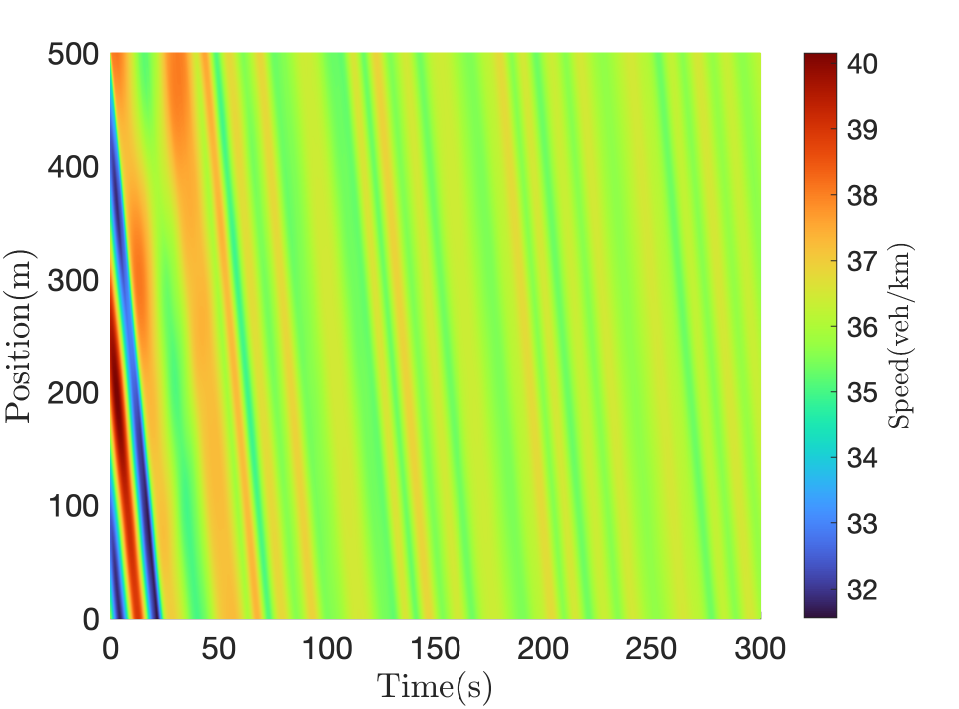}}
    \subfigure[PINO-kernel]{\includegraphics[width=0.32\textwidth]{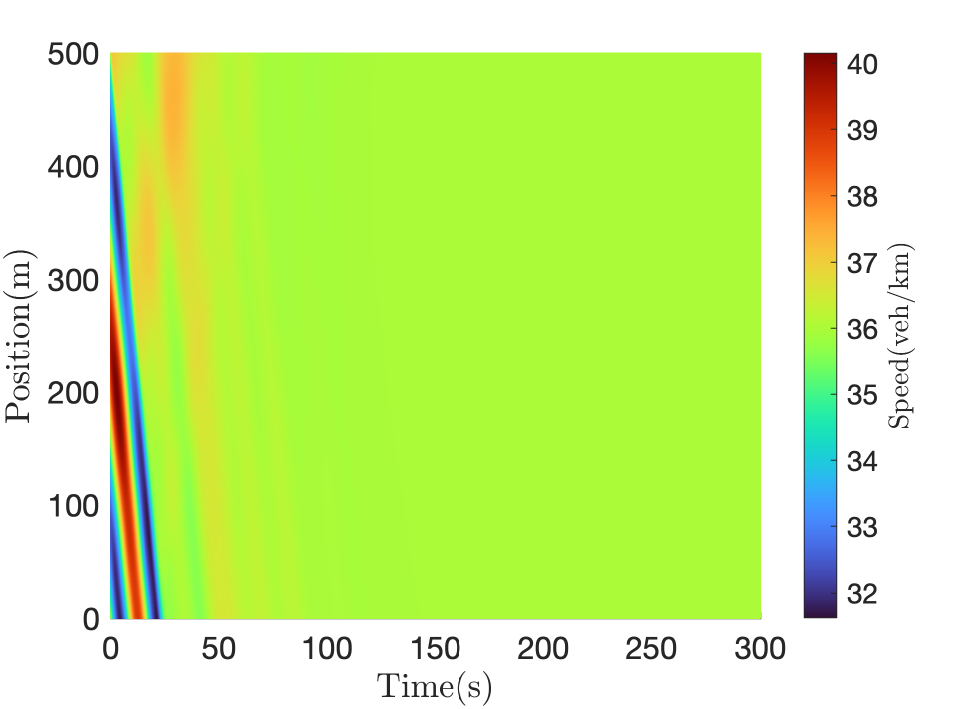}}
    \caption{Closed-loop traffic speed evolution with different control designs}
    \label{velocity_all}
\end{figure}
\begin{figure}[!htbp]
    \centering
    \subfigure[NO-kernel]{\includegraphics[width=0.32\textwidth]{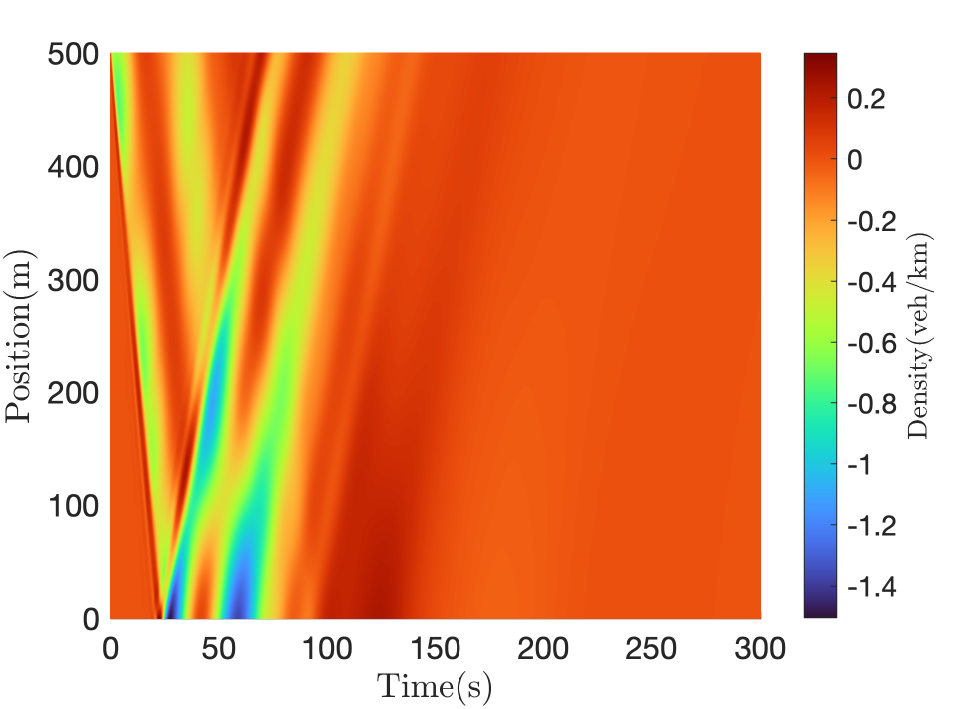}}
    \subfigure[NO-control]{\includegraphics[width=0.32\textwidth]{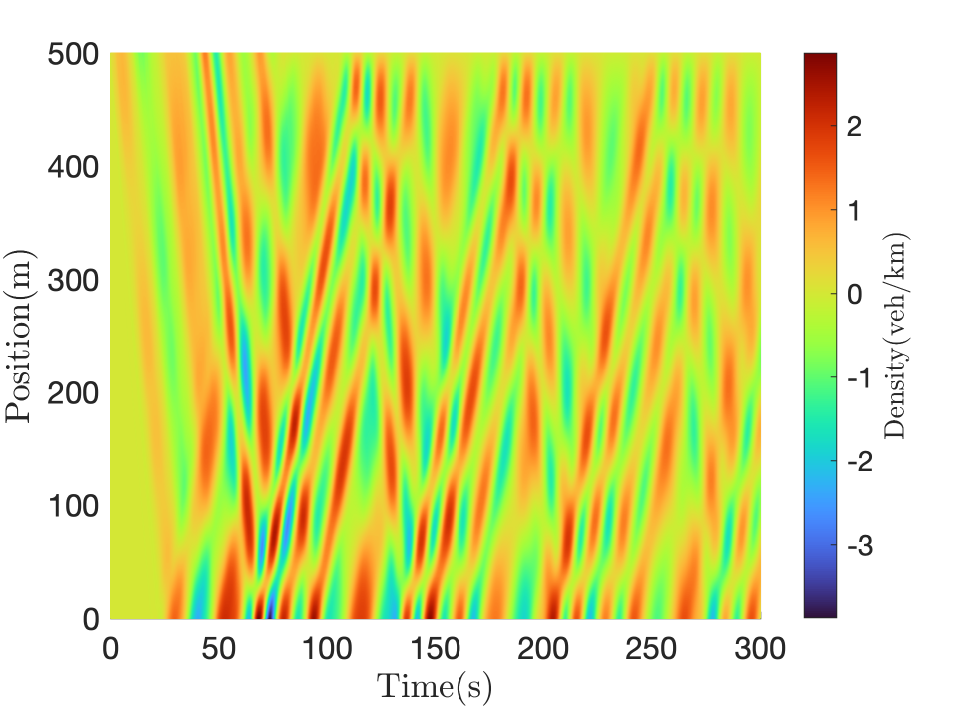}}
    \subfigure[PINO-kernel]{\includegraphics[width=0.32\textwidth]{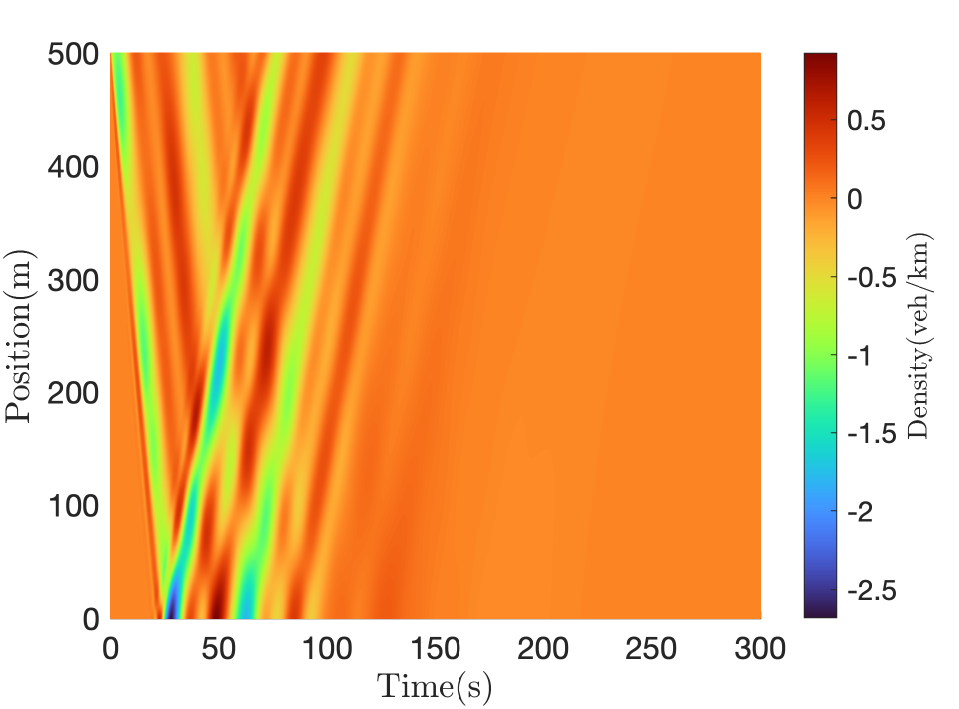}}
    \caption{Traffic density error under different schemes}
    \label{err_density_all}
\end{figure}
\begin{figure}[!htbp]
    \centering
    \subfigure[NO-kernel]{\includegraphics[width=0.32\textwidth]{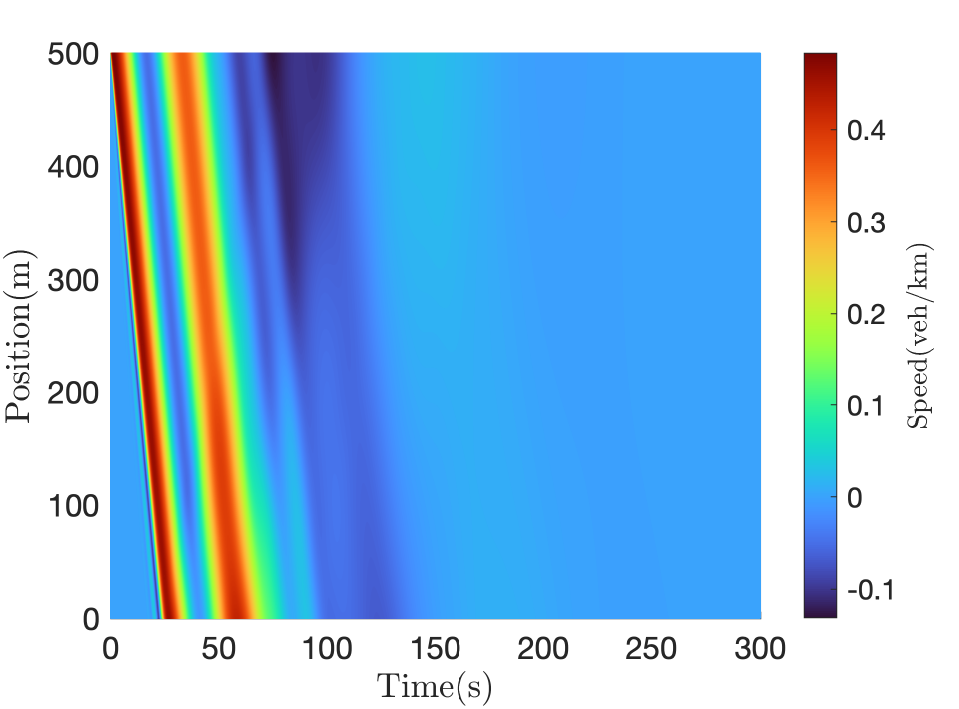}}
    \subfigure[NO-control]{\includegraphics[width=0.32\textwidth]{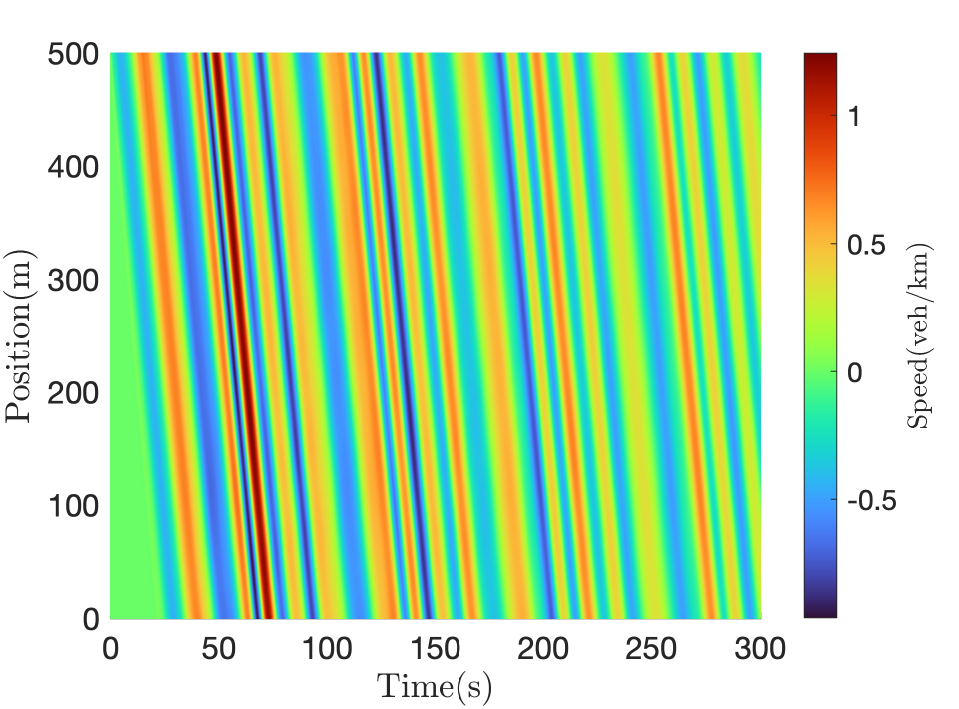}}
    \subfigure[PINO-kernel]{\includegraphics[width=0.32\textwidth]{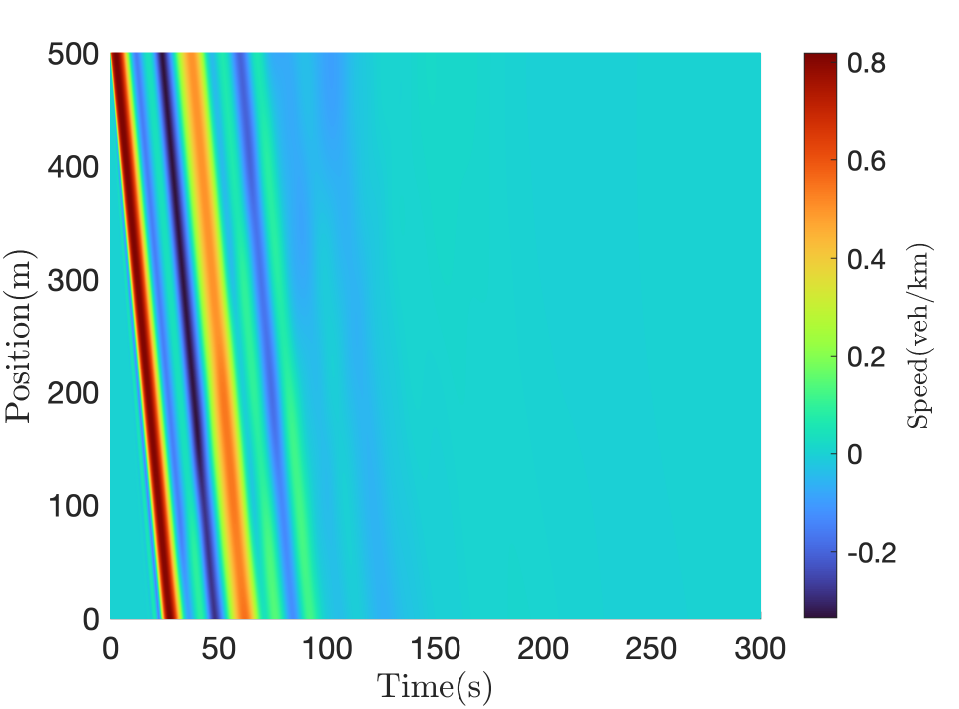}}
    \caption{Traffic velocity error under different schemes}
    \label{err_velocity_all}
\end{figure}

\subsection{Simulation results of backstepping kernels}
Previous section gives the density and speed of the closed-loop system. The NO-based methods learn the backstepping kernels, except for NO-approximated control law. It directly learns the boundary control law for the traffic system. In this section, the results for the NO-approximated kernels are provided. The backstepping kernels, NO-approximated kernels and PINO-approximated kernels are shown in Fig.~\ref{kernel_all}. 
To evaluate the performance of the three NO methods, we use the backstepping kernels as the baseline for comparison. 
The error between the NO methods and the backstepping method is shown in Fig.~\ref{err_kernel_all}. The first row of Fig.~\ref{kernel_all} are the results of kernel $\hat{K}^w(x,\xi)$, while the second row shows the results of kernel $\hat{K}^v(x,\xi)$. It is evident that the NO-approximated kernels provided the best approximation to the backstepping kernels. 
The PINO-approximated kernels exhibit oscillations in the triangular domain, resulting in larger state errors compared to the NO-approximated errors. Therefore, the model does not accurately capture the properties of the kernels. The errors between the NO-based methods and backstepping kernels are depicted in Fig.~\ref{err_kernel_all}. In the first column, the errors of $K^w(x,\xi)$ using different schemes are presented, while the second column shows the errors of $K^v(x,\xi)$. The maximum error of $K^w(x,\xi)$ and $K^v(x,\xi)$ using NO-approximated kernels is $1.452\times 10^{-4}$ and the maximum error of is $1.451\times10^{-4}$. Compared with NO-approximated kernels, the errors of $K^w(x,\xi)$ and $K^v(x,\xi)$ are higher using PINO-approximated methods. The maximum and mean errors under different schemes are shown in Tab.~\ref{error_tab_kernel_all}. The NO-based methods still achieve good performance compared with the PINN method.  
\begin{table}[htbp]
    \centering
    \caption{The errors of kernels under different schemes}
    \begin{tabular}{c c c c c }
    \hline
       \multirow{2}{*}{\textbf{Method}}  & \multicolumn{2}{c}{$\hat{K}^w(x,\xi)$} & \multicolumn{2}{c}{$\hat{K}^v(x,\xi)$}\\
       \cline{2-5}
         & \textbf{\makecell{Max absolute error}} & \textbf{\makecell{Mean absolute error}} & \textbf{\makecell{Max absolute error}} & \textbf{\makecell{Mean absolute error}}\\
    \hline
    PINN-kernels & $7.833\times 10^{-4}$ & $1.551\times 10^{-4}$ & $1.109\times 10^{-3}$ & $1.525\times 10^{-4}$\\
    NO-kernels & $1.452\times 10^{-4}$ & $1.450\times 10^{-4}$ & $1.451\times 10^{-4}$ & $1.071\times 10^{-4}$ \\
    % \hline
    PINO-kernels & $6.684\times 10^{-4}$ & $1.232 \times 10^{-4}$ & $2.650\times10^{-3}$ & $1.090\times 10^{-4}$ \\
    \hline
    % \makecell{PINO-kernels\\(only physics loss)} & $3.371\times 10^{-3}$ & $1.261\times 10^{-3}$ & $6.804\times 10^{-2}$ & $1.465\times10^{-2}$\\
    % \hline
    \end{tabular}
    \label{error_tab_kernel_all}
\end{table}

{
In addition to the approximation error of backstepping kernels and traffic states, we also added three traffic performance indices to test the performance of the different methods, including fuel consumption, total travel time (TTT) and comfort value to compare the control performance introduced in~\citep{treiber_traffic_2013}. The definition of the performance indices are:
\begin{align}
    J_{\text{fuel}} &= \int_0^{T}\int_0^L \max\{0,b_0+ b_1 v(x,t) +b_2v(x,t)a(x,t) + b_3a^2(x,t)\} \rho(x,t) dx dt\\
    J_{\text{comfort}} &= \int_0^T \int_0^L (a^2(x,t) + a_t^2(x,t))\rho(x,t) dx dt\\
    J_{\text{TTT}} &= \int_0^T \int_0^L\rho(x,t) dx dt
\end{align}
where the coefficient of fuel consumption model is selected as $b_0=2.5\times10^{-3} \text{1/s}$, $b_1 = 2.45\times 10^{-7}$1/m, $b_2 = 1.25\times 10^{-8} s^2/m^2$, $b_3 = 9.5\times 10^{-5} s^3/m^2$~\citep{ahn1998microscopic}. $a(x,t)$ denotes the local acceleration $a(x,t) = v_t(x,t) + v(x,t) v_x(x,t)$. The results of different NO-based methods are shown in Tab.~\ref{quanindices}.
\begin{table}[htbp]
    \centering
    \caption{The results of different methods with traffic performance indices}
    \begin{tabular}{c c c c}
    \hline
      \textbf{Method}   &  \textbf{Fuel consumption} ($\downarrow$) & \textbf{Driving discomfort} ($\downarrow$) & \textbf{Total travel time} ($\downarrow$) \\
      \hline
        PINN-kernels & +1.04\% & -1.05\% & +1.04\%\\
        NO-kernels & +1.07\% & -23.21\% & +1.07\%\\
        PINO-kernels & +1.06\% & -29.59\% & +1.06\%\\
        NO-control law & +1.08\% & +63.21\% & +1.08\%\\
    \hline
    \end{tabular}
    \label{quanindices}
\end{table}
The backstepping method is chosen as the baseline for evaluating the indices. The performance of the backstepping method is compared against three other approaches. The analysis reveals that while the NO-approximated and PINO-approximated kernels result in higher fuel consumption and increased total travel time, they significantly enhance driving comfort. Specifically, an improvement of nearly 30\% in driving comfort is achieved at the cost of a 1\% increase in fuel consumption and total travel time, which is considered acceptable. However, the NO-approximated control law leads to a substantial 60\% decrease in driving comfort. This decrease in comfort, characterized by frequent small oscillations across the spatial-temporal domain, results in more abrupt changes in local acceleration, thereby diminishing the driving experience and potentially increasing the risk of traffic accidents.
}

\begin{figure}[htbp]
    \centering
    \subfigure[Backstepping]{\includegraphics[width=0.32\textwidth]{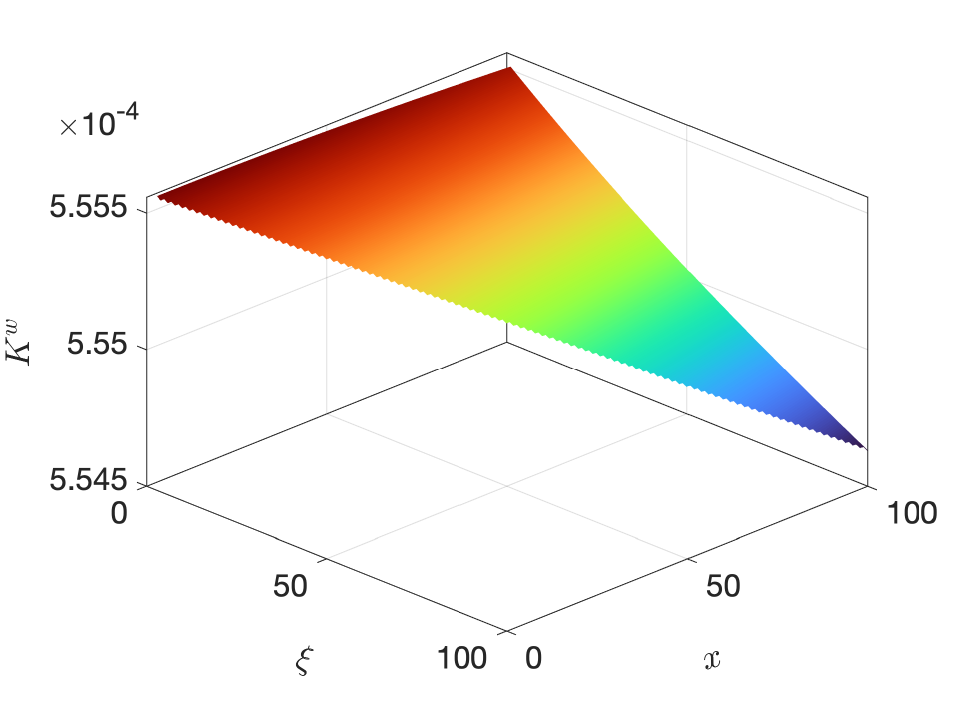}}
    \subfigure[NO-kernel]{\includegraphics[width=0.32\textwidth]{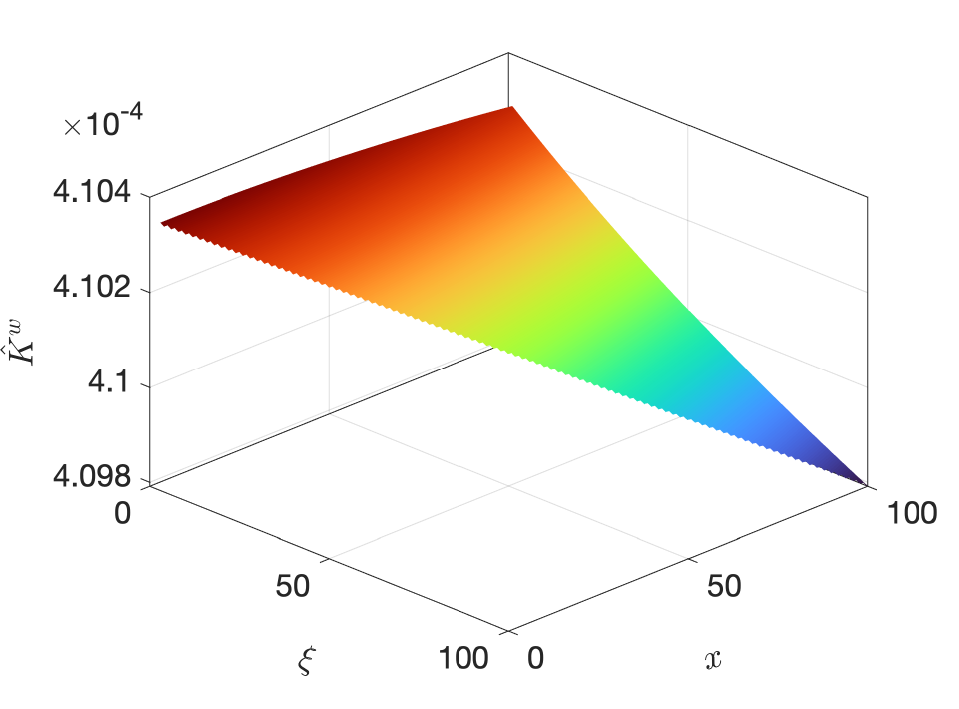}}
    \subfigure[PINO-kernel]{\includegraphics[width=0.32\textwidth]{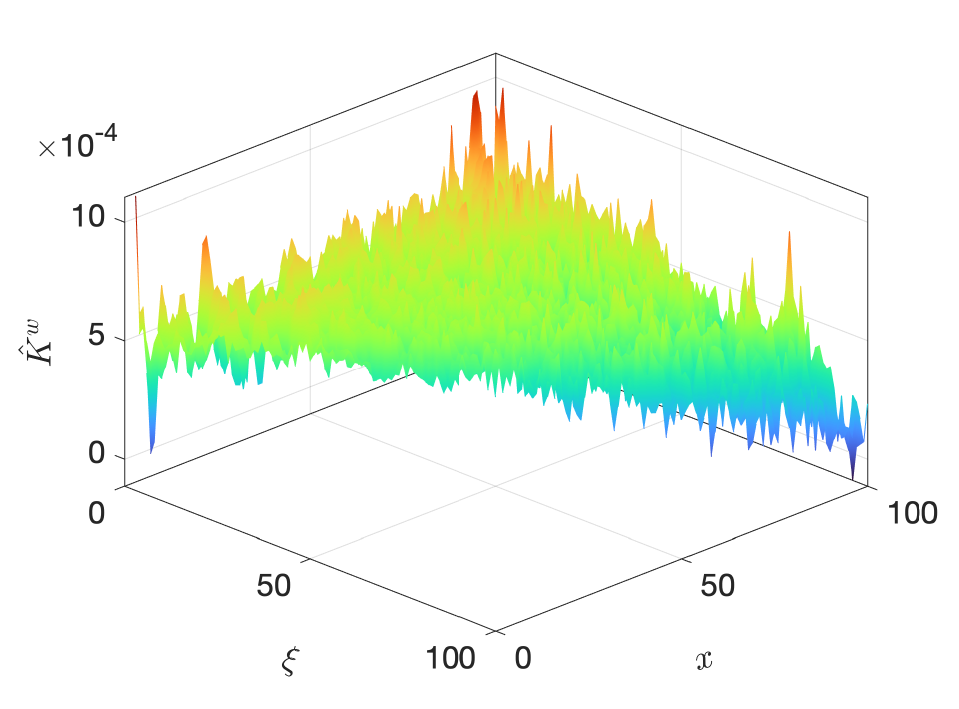}}\\
    \subfigure[Backstepping]{\includegraphics[width=0.32\textwidth]{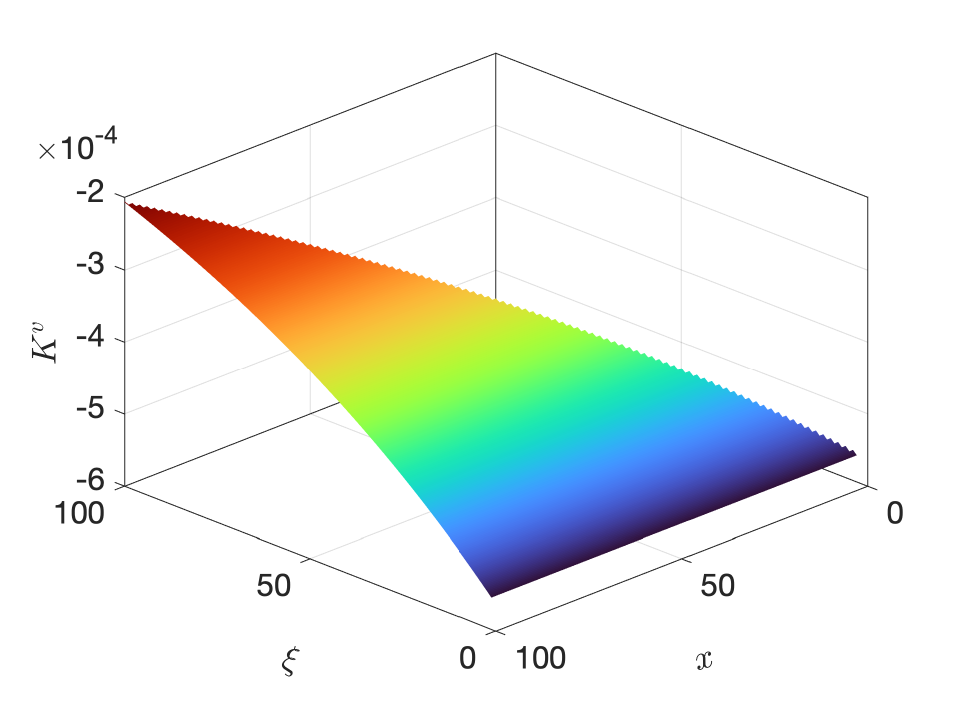}}
    \subfigure[NO-kernel]{\includegraphics[width=0.32\textwidth]{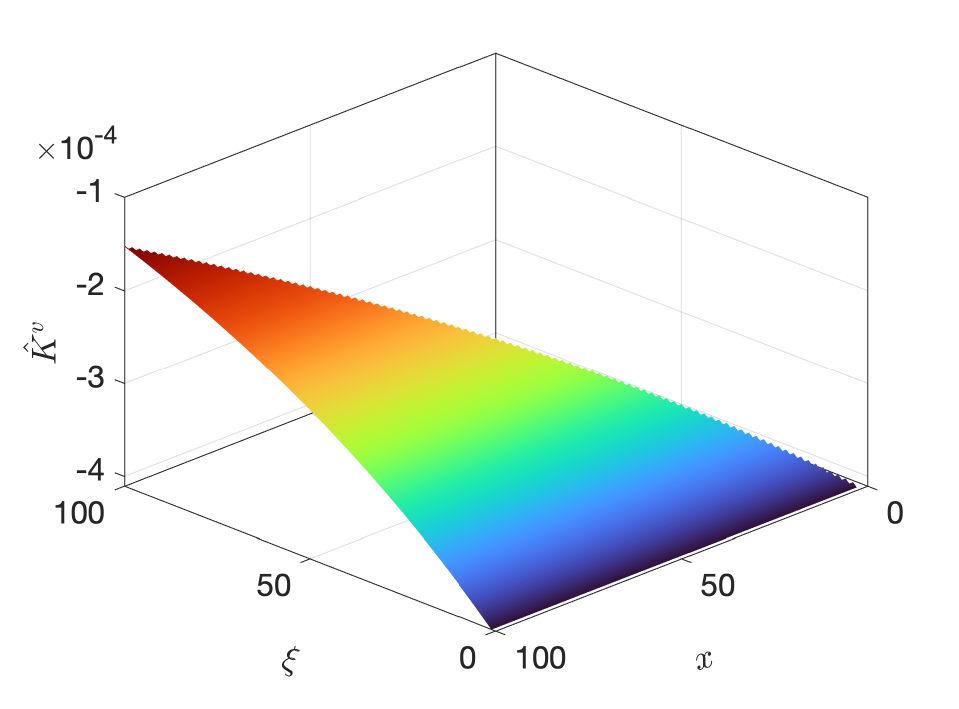}}
    \subfigure[PINO-kernel]{\includegraphics[width=0.32\textwidth]{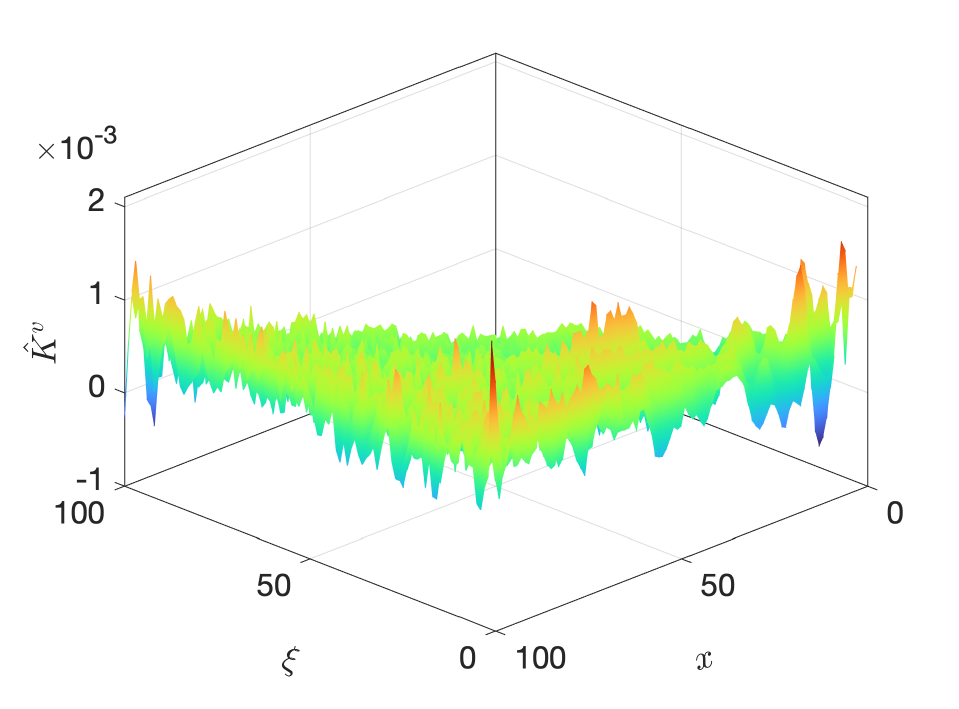}}\\
    \caption{Backstepping kernels under different schemes}
    \label{kernel_all}
\end{figure}

\begin{figure}[htbp]
    \centering
    \subfigure[NO-kernel]{\includegraphics[width=0.45\textwidth]{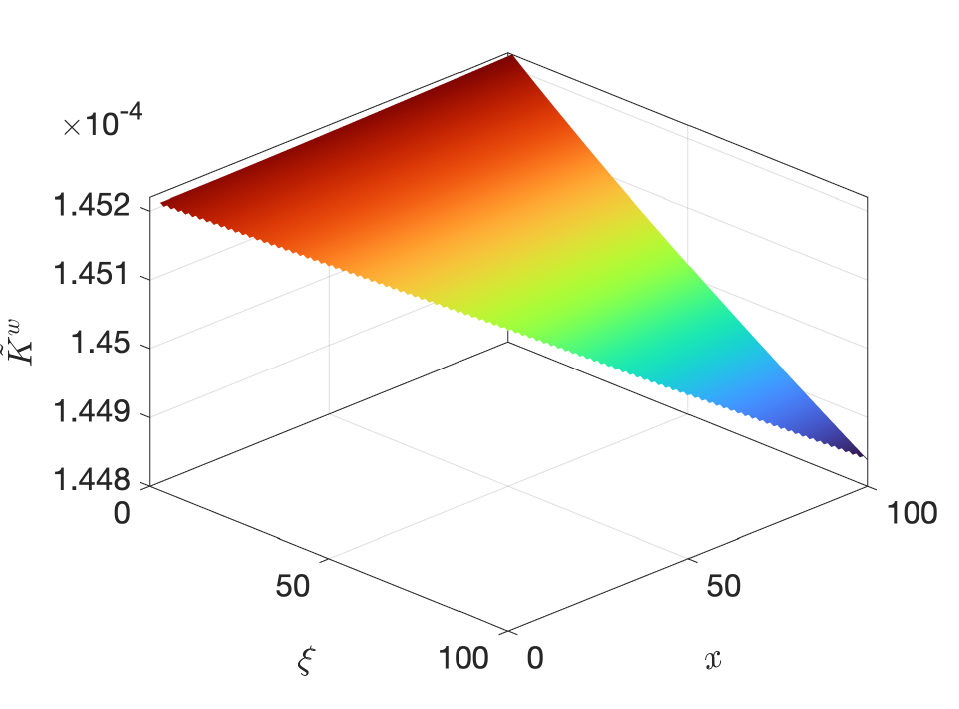}}
    \subfigure[PINO-kernel]{\includegraphics[width=0.45\textwidth]{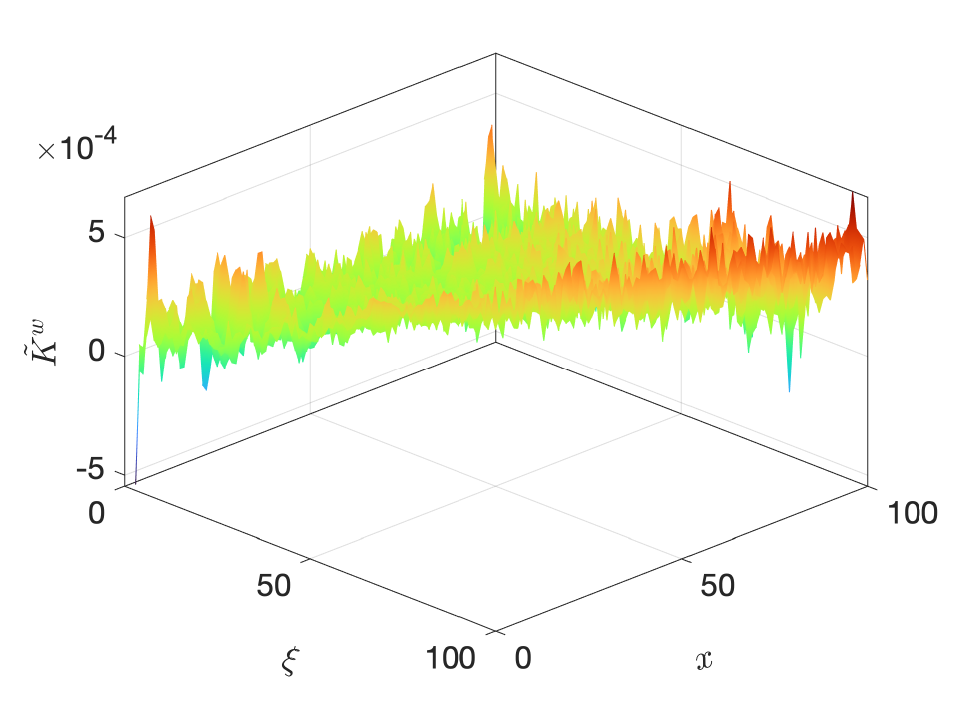}}\\
    \subfigure[NO-kernel]{\includegraphics[width=0.45\textwidth]{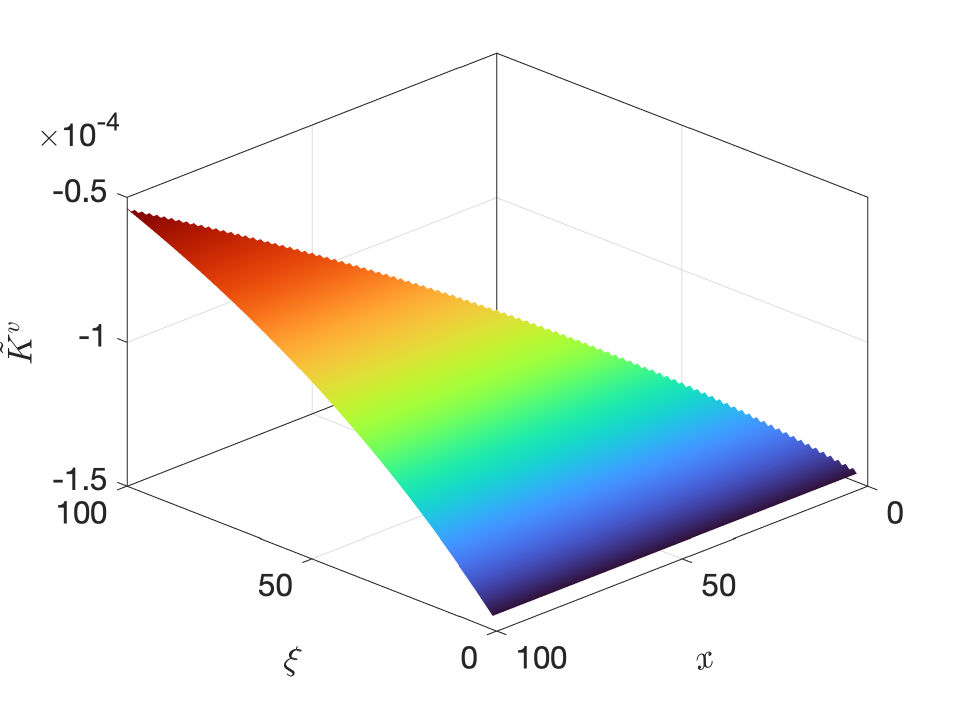}}
    \subfigure[PINO-kernel]{\includegraphics[width=0.45\textwidth]{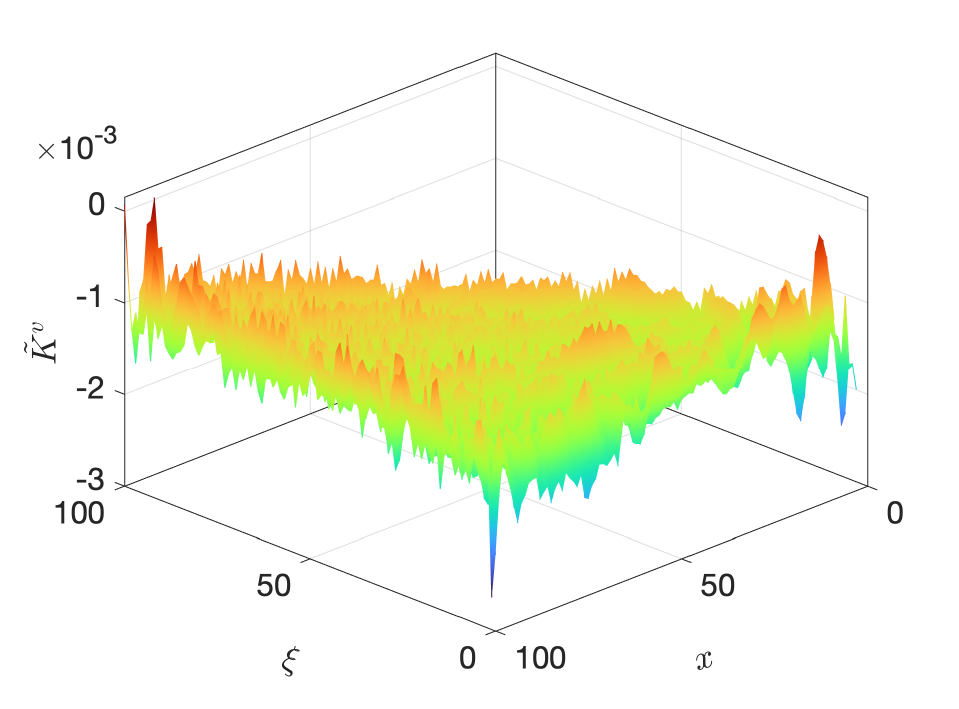}}\\
    \caption{Errors of backstepping kernels under different schemes}
    \label{err_kernel_all}
\end{figure}

\begin{figure}[htbp]
\centering
\subfigure[Control input $U(t)$]{\includegraphics[width=0.45\textwidth]{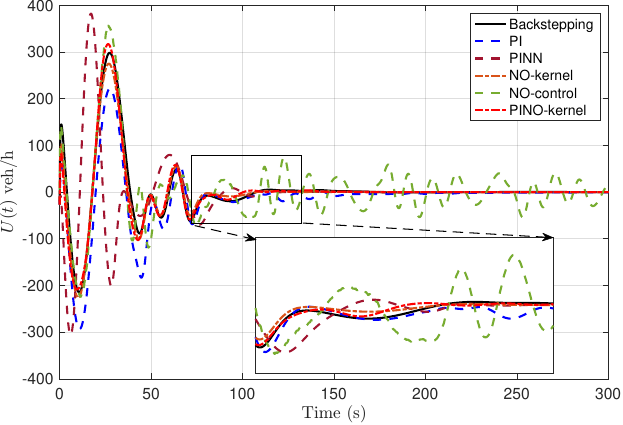}}
\subfigure[Norm of states]{\includegraphics[width=0.43\textwidth]{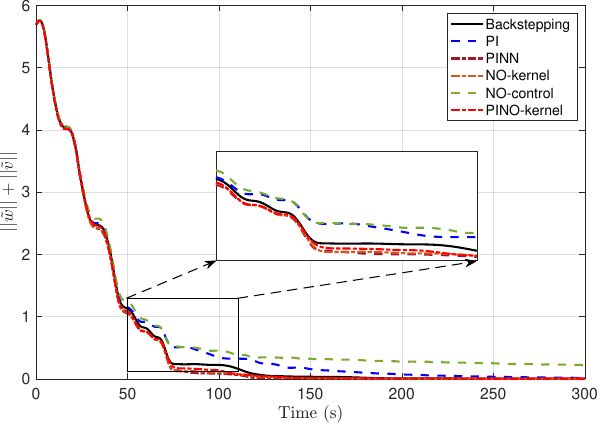}}
\caption{Comparison of $U(t)$ and the norm of states }
\label{compare-U}
\end{figure}

The comparisons of control law and norm of states are shown in Fig.~\ref{compare-U}. Six methods mentioned in this paper are considered. From the results of $U(t)$, the NO-based methods can approximate the backstepping control law well. All the controllers are eventually stabilizing the system. However, the norm of the states of the NO-approximated control law converges to zero slower than other controllers because we only get the practical stability results for the traffic system. Overall, it can be found that the NO-based and PINO-based methods achieve satisfactory closed-loop results.

The computation time of the neural operator, the backstepping controller, the PI controller and PINN-based controller are shown in Tab.~\ref{Tab-res-1}. We set the backstepping control method as the baseline of the system. The Mean Squared Error(MSE) of the traffic system is calculated by
\begin{align}
    \text{MSE}_\rho &= \frac{1}{N} \sum_{(x,t)\in \mathcal{D}} \left(\frac{\rho(x,t) -\hat{\rho}(x,t)}{\rho^\star}\right)^2\\
    \text{MSE}_v &= \frac{1}{N} \sum_{(x,t)\in \mathcal{D}}  \left( \frac{v(x,t) - \hat{v}(x,t)}{v^\star} \right)^2
\end{align}
where $\hat{\rho}(x,t)$ and $\hat{v}(x,t)$ are the density and speed generated by the NO-based methods. $\rho(x,t)$ and $v(x,t)$ are the density and speed of backstepping method. $N$ is the sampled points by numerical methods such as the Godunov scheme~\citep{godunov1959finite} in the spatial-temporal domain $\mathcal{D} = [0,L]\times [0,T]$. It indicates the average approximation error during the simulation period. It can be observed that the average computation times of the NO-approximated methods are $298$ times faster than the backstepping controller and $10 $ times faster than the PI controller with a loss of accuracy less than $1\%$, giving the possibility of accelerating the online application in a real traffic system. 

{
The computation time of backstepping controller is fast in our simulations. However, this test is limited to a single road segment of 500 meters for 3 min. In real world traffic control problem, such as the cascaded freeway segments~\citep{yu_simultaneous_2022} described by two sets of ARZ models, multi-class traffic~\citep{burkhardt_stop-and-go_2021} or mixed-autonomy traffic~\citep{zhang2023mean}, the computation time will increase. Extending the ARZ model to encompass entire traffic networks would further amplify the computational burden and time requirements. Therefore, NO-based methods offer a significant advantage in large-scale traffic scenarios, as their computation time is two orders of magnitude shorter than that of the backstepping method, thereby reducing the overall computational burden.
}

{
\textbf{Summary of simulation results.}  
The simulation results highlight the performance of the backstepping method, PI controller, PINN-based controller, and NO-based methods. All approaches successfully stabilize sinusoidal traffic waves within a finite time, except for the NO-approximated controller, which only achieves practical exponential stability. Among these approaches, the NO-approximated kernels exhibits the smallest density and speed errors during the simulation period and demonstrate the fastest computation speed, averaging just $1.997 \times 10^{-4}$ seconds per time step. Both the NO-approximated and PINO-approximated kernel methods significantly enhance driving comfort with nearly equivalent cost of fuel consumption and total travel time compared to the PINN. From the perspective of computational efficiency, all NO-based methods outperform the other approaches, with the potential to be up to 298 times faster than the backstepping controller, thus facilitating the acceleration of online control applications in real traffic scenarios.
}

\begin{table}[htbp]
    \centering
    \caption{The average computation time and MSE of different schemes}
    \begin{tabular}{c c c c }
    \hline
       \multirow{2}{*}{\textbf{Method}}  &  \multirow{2}{*}{\textbf{Average computation time(s)}} & \multicolumn{2}{c}{\textbf{MSE \%}} \\
       \cline{3-4}
        & &density $\rho$ & velocity $v$\\
       \hline
       Backstepping controller  & $5.960\times 10^{-2}$ & / & /\\
       % \hline
       PI controller &  $1.901\times 10^{-3}$ (31x) & $6.973\times 10^{-3}$ & $1.046\times 10^{-2}$\\
       % \hline
       PINN & $2.466\times 10^{-3}$ (24x) & $2.466\times 10^{-3}$ & $7.523 \times 10^{-3}$\\
       NO-approximated kernels &  $1.997\times 10^{-4}$ (298x) & $3.783 \times 10^{-4}$ & $8.578\times 10^{-4}$\\
       % \hline
       PINO-approximated kernels & $7.854\times10^{-4}$ (75x)  & $5.861\times 10^{-4}$ & $1.593\times 10^{-3}$\\
        NO-approximated control law & $1.701\times 10^{-3}$ (75x) & $3.821\times 10^{-3}$ & $1.274\times 10^{-3}$\\
       % \hline
       \hline
       % PINO-approximated kernels (only physics loss) &  $8.257\times 10^{-4}$ &  $6.9911$ \\
       % \hline
    \end{tabular}
    \label{Tab-res-1}
\end{table}

{
\subsection{Experiments with different demands and conditions}
It is revealed that the developed NO-based method performs well in mitigating the stop-and-go traffic in the previous section. It is also needed to mention that whether the proposed NO-based methods could still stabilize traffic for different scenarios. Next, we will test the trained NO in different traffic conditions. 
}

{
In the data collection and training phases, sinusoidal initial conditions are utilized to simulate stop-and-go traffic waves. However, real-world traffic conditions encompass a variety of scenarios, such as varying traffic demands and non-recurrent traffic conditions. Therefore, it is essential to evaluate the performance of NO-based methods under various initial traffic conditions.
To address this, we first assess the performance of the NO method under three distinct traffic demand levels: high, medium, and low. The inflow demand $q^\star$ and the corresponding equilibrium density and speed for each demand level are detailed in Table~\ref{differendemand}.
\begin{table}[htbp]
    \centering
    \caption{The different demand level with equilibrium density and speed}
    \begin{tabular}{c c c }
    \hline
       \textbf{In-flow demand} $q^\star$(veh/h)  &  \textbf{Equilibrium density} $\rho^\star$ (veh/km) &  \textbf{Equlibrium speed} $v^\star$ (km/h) \\
       \hline
     2025 (\text{High demand})& 100 & 20.25 \\
     1856 (\text{Medium demand})& 110 & 16.87\\
     1620 (\text{Low demand})& 120 & 13.5 \\
    \hline
    \end{tabular}
    \label{differendemand}
\end{table}
By applying different demand levels $q^\star$ to the inlet of the road segment, the results for different demand levels are illustrated in Fig.~\ref{densitydifferentdeman} and Fig.~\ref{velocitydifferentdeman}.
    \begin{figure}
        \centering
        \subfigure[$q^\star$=2025]{\includegraphics[width=0.32\linewidth]{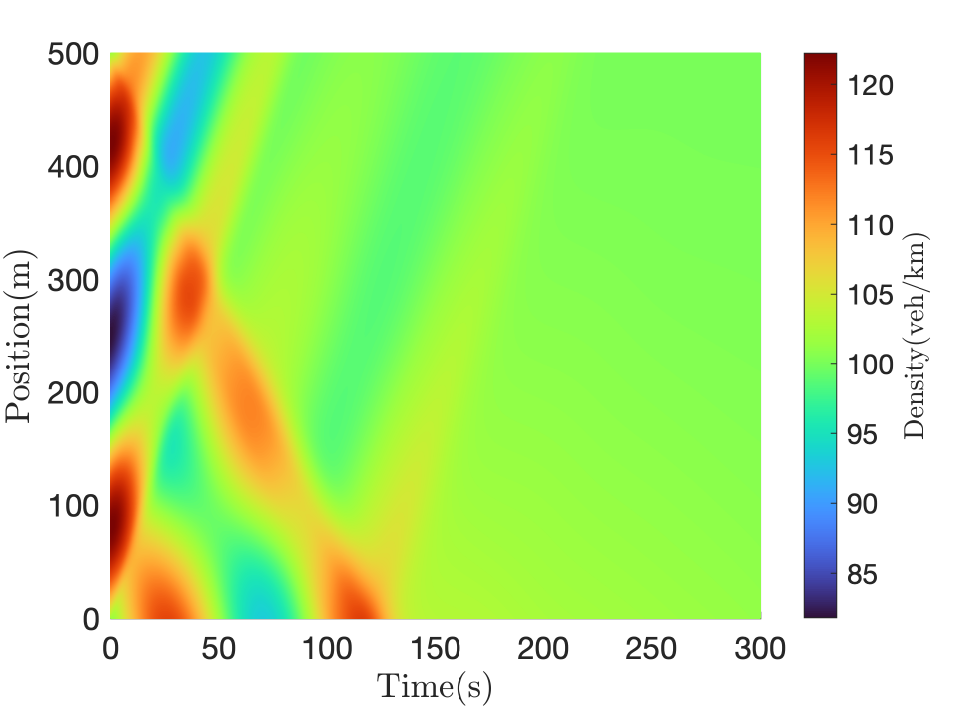}}
        \subfigure[$q^\star$=1856]{\includegraphics[width= 0.32\linewidth]{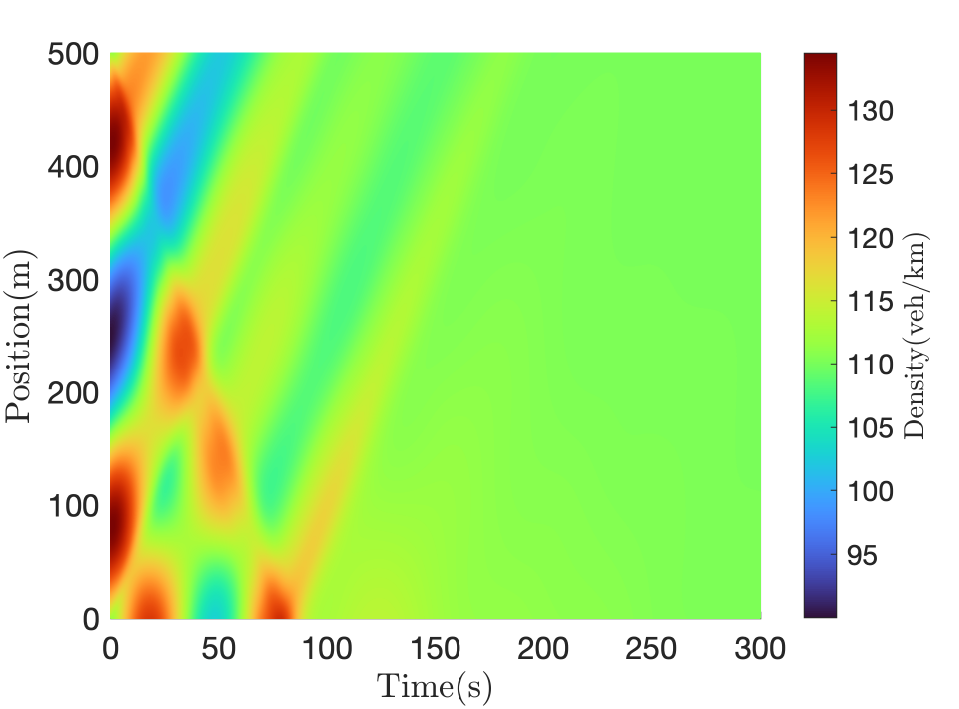}}
        \subfigure[$q^\star$=1620]{\includegraphics[width= 0.32\linewidth]{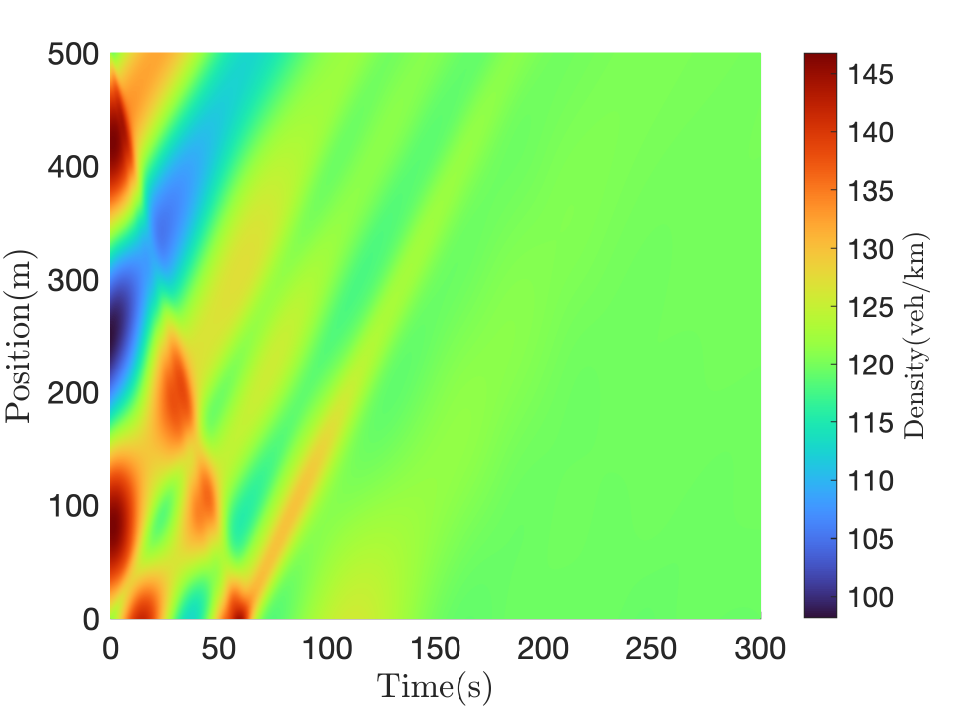}}
        \caption{Density evolution of different demand level}
        \label{densitydifferentdeman}
    \end{figure}
    \begin{figure}
        \centering
        \subfigure[$q^\star$=2025]{\includegraphics[width=0.32\linewidth]{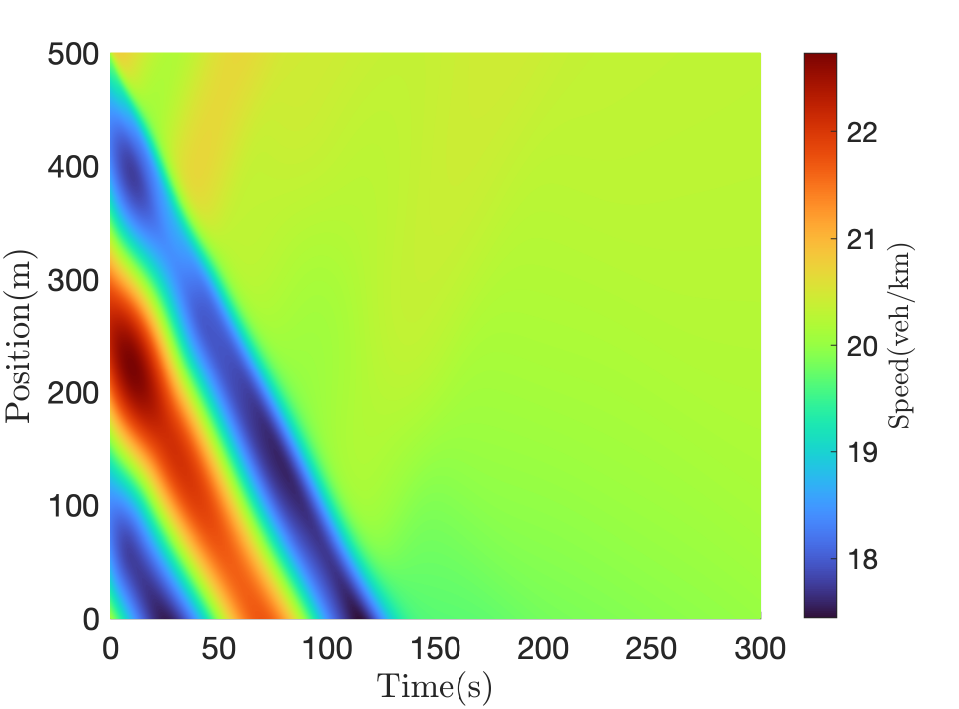}}
        \subfigure[$q^\star$=1856]{\includegraphics[width=0.32\linewidth]{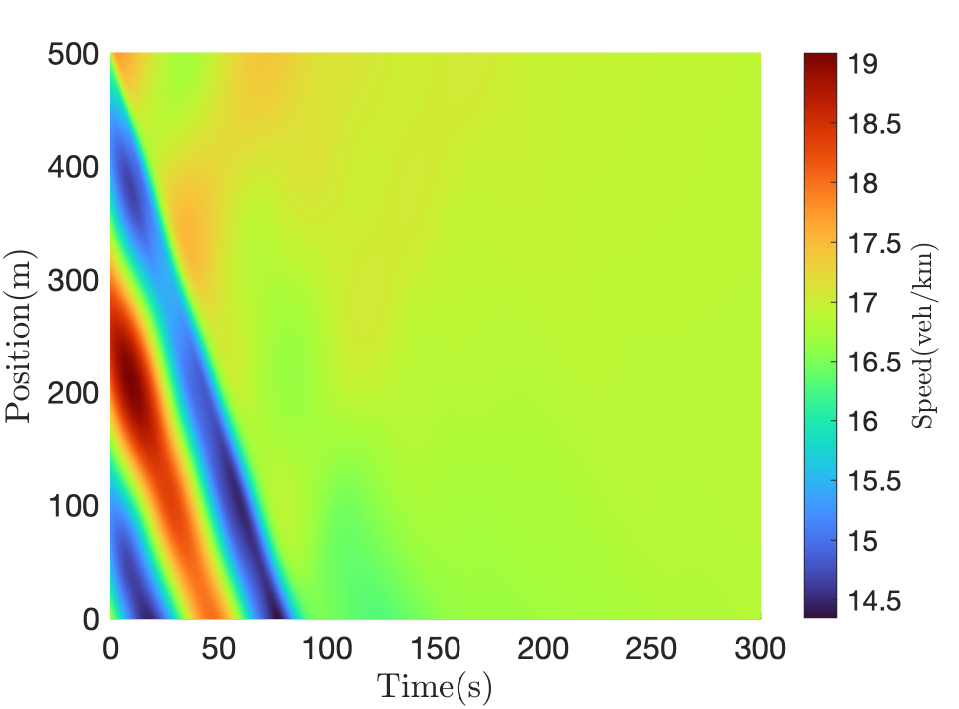}}
        \subfigure[$q^\star$=1620]{\includegraphics[width=0.32\linewidth]{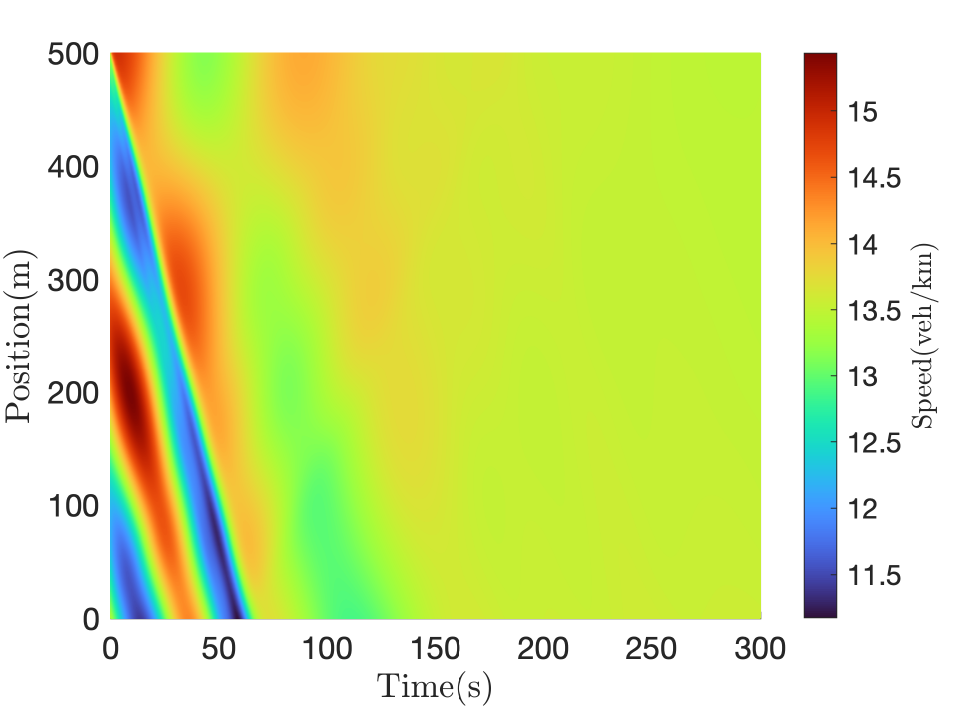}}
        \caption{Speed evolution of different demand level}
        \label{velocitydifferentdeman}
    \end{figure}
It is observed that the method remains effective in stabilizing the traffic system across different demand levels. Both traffic density and speed converge to their equilibrium points within a finite time.
}

{
Additionally, we evaluate the performance of the developed method under non-recurrent traffic conditions. Specifically, two types of non-recurrent traffic conditions are tested: one with sinusoidal initial conditions at a low frequency, which results in non-recurrent behavior, and another with linear initial conditions. For the first case, the initial conditions are defined as follows:
\begin{align}
    q(x,0) &= q^\star + 0.05 \sin(\frac{\pi x}{L}) q^\star,\\
    v(x,0) &= v^\star -0.05 \sin(\frac{\pi x}{L}) v^\star.
\end{align}
Using the relationship $q = \rho \times v$, we can determine the initial condition of traffic density. This initial condition is designed to replicate a sudden deceleration occurring in the middle of the road segment, resulting in a density wave propagating from upstream to downstream, while a speed wave moves from downstream to upstream. Consequently, the density will initially increase and then decrease, whereas the speed will decrease and subsequently increase. The initial conditions for both density and speed are depicted in Figure~\ref{IC_sin}.
\begin{figure}
    \centering
    \subfigure[Density]{\includegraphics[width= 0.45\linewidth]{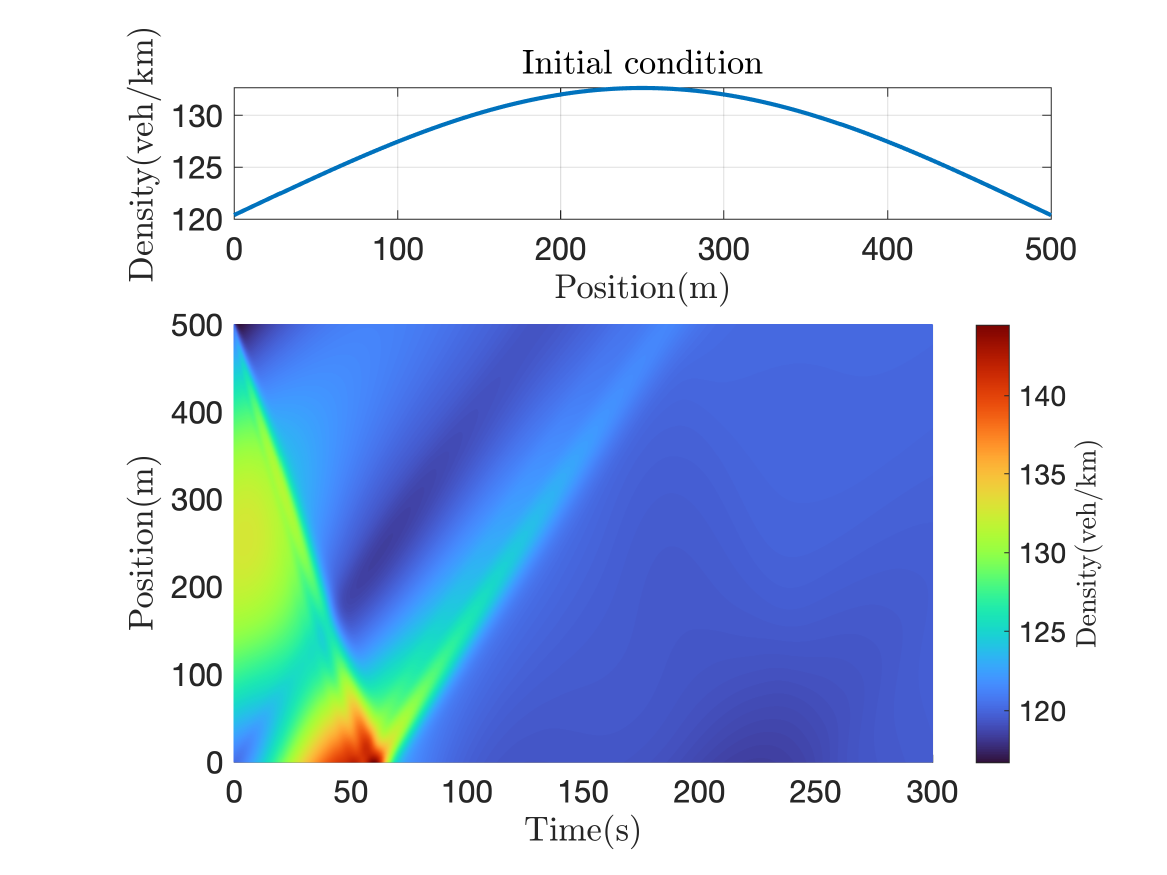}}
    \subfigure[Speed]{\includegraphics[width= 0.45\linewidth]{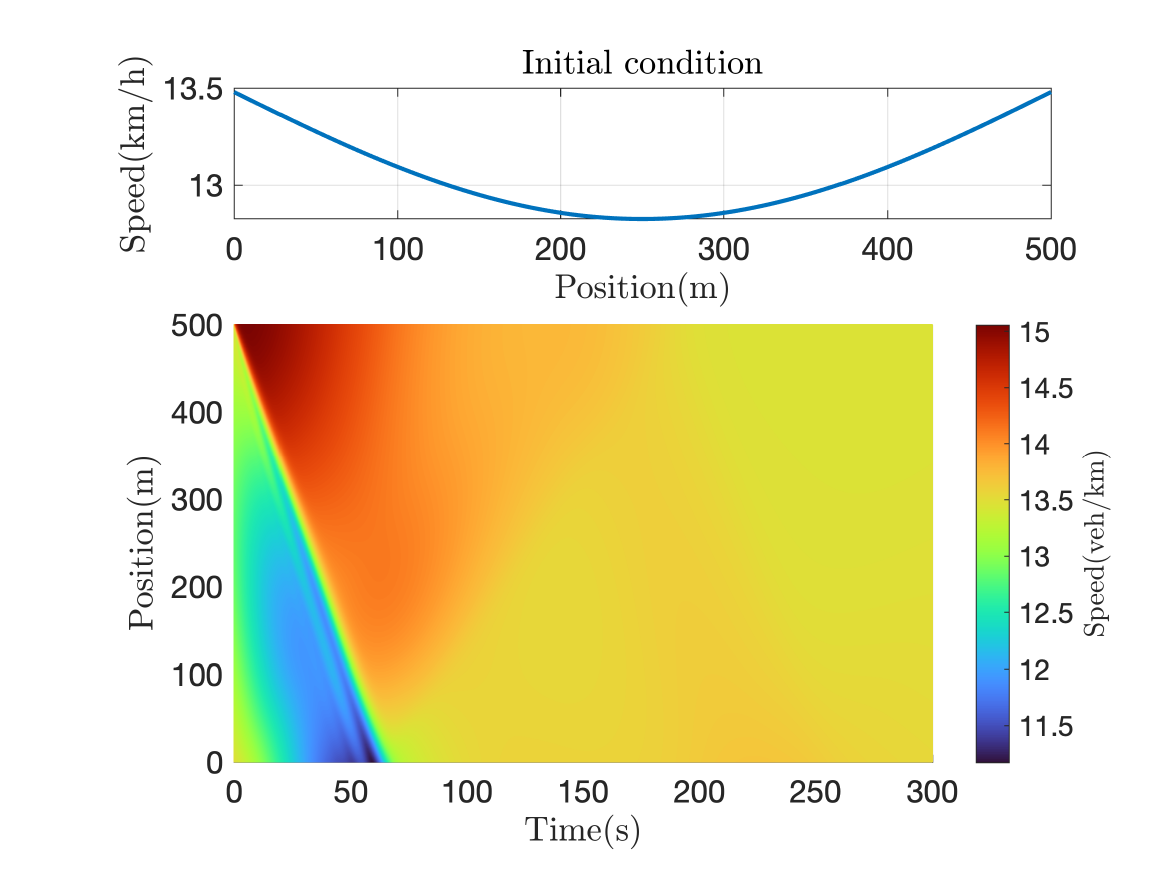}}
    \caption{Initial conditions and results of density and speed}
    \label{IC_sin}
\end{figure}
By applying the NO-approximated kernels to these initial conditions of density and speed, the evolution of traffic states is obtained and illustrated in the bottom of Figure~\ref{IC_sin}.
Then, linear initial conditions are applied to simulate a scenario in which vehicles decelerate downstream due to lane closure. This scenario leads to a decrease in traffic speed and an increase in traffic density due to vehicle accumulation. The linear initial conditions are defined as follows:
\begin{align}
    q(x,0) &= 9\times 10^{-5} x,\\
    v(x,0) &= -7\times 10^{-4} x + 0.375.
\end{align}
Using the flow-density relation, we obtain the initial condition of traffic density and speed, as shown in Fig.~\ref{IC_linear}.
\begin{figure}
\centering
\subfigure[Density]{\includegraphics[width= 0.45\linewidth]{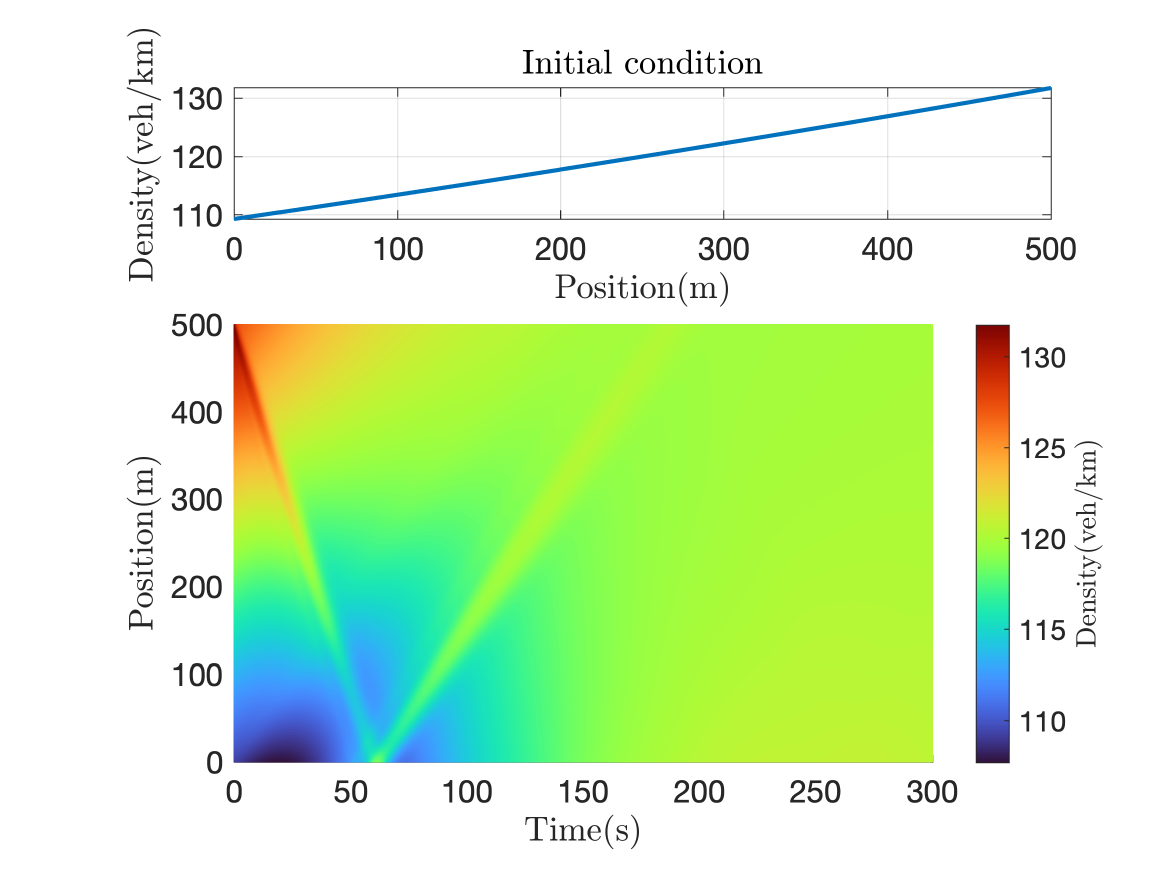}}
\subfigure[Speed]{\includegraphics[width= 0.45\linewidth]{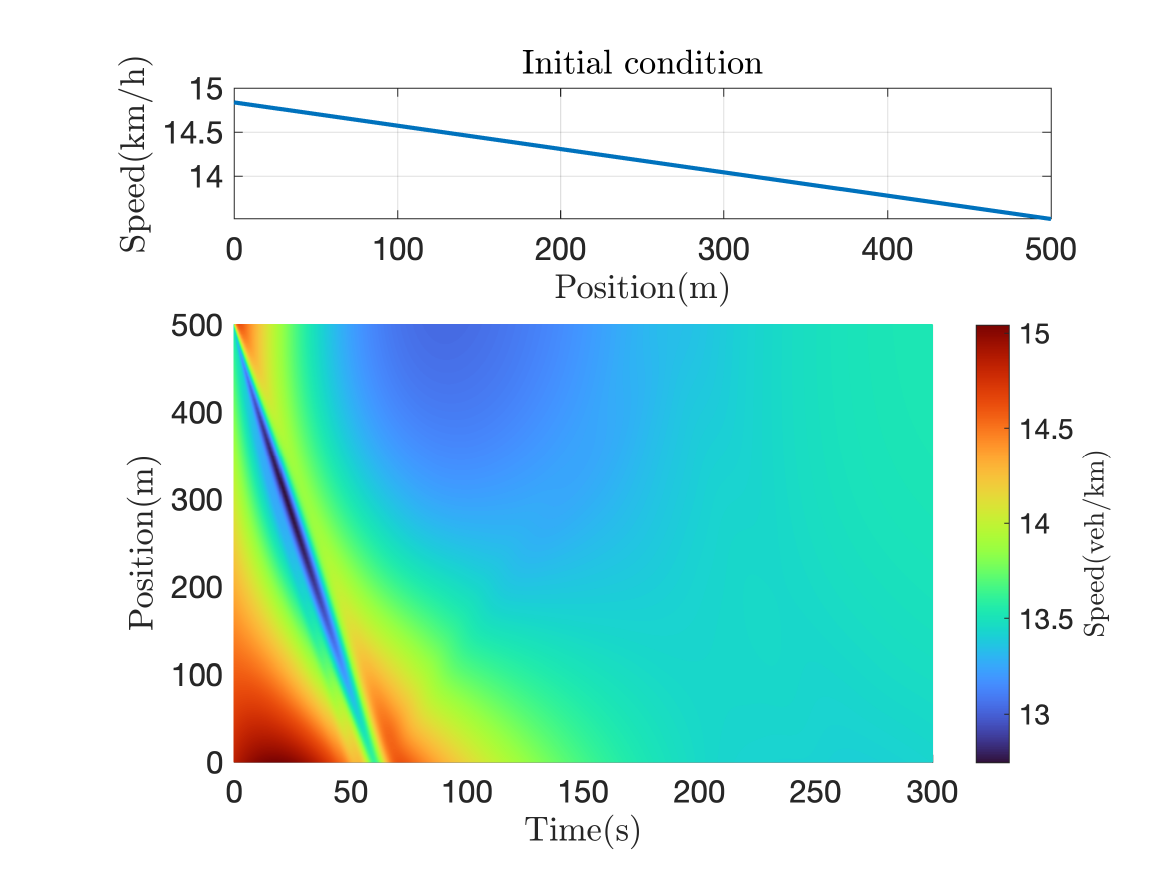}}
\caption{Initial conditions and results of density and speed}
\label{IC_linear}
\end{figure}
Applying the NO method to the linear initial conditions, the evolution of traffic states with NO-approximated kernels is depicted in the bottom of Figure~\ref{IC_linear}. The results demonstrate that, for both types of initial conditions, our method effectively stabilizes the traffic system. Traffic density and speed converge to their equilibrium points, indicating that the developed NO method is robust to variations in traffic conditions.
}

{
\subsection{Experiments with real traffic data}
In the previous section, we evaluated the NO-based methods within a simulation environment. We now proceed to apply the trained neural operator to the ARZ system with a calibrated fundamental diagram using real traffic data. Specifically, we utilize the NGSIM dataset, which includes vehicle trajectory data collected in Emeryville, California, USA, on April 13, 2005. The dataset is segmented into 15-minute intervals, and we select data from 4:00 pm to 4:15 pm for calibrating the fundamental diagram. The chosen spatial-temporal domain is $\text{L}=500$m and $\text{T}=700$s. To derive the density and flow from the trajectory data, we reconstruct the NGSIM data using Eide's formula~\citep{edie1963discussion} and the same method~\citep{zhao2023observer,fan_data-fitted_2013,yu_pde_2021}. The the spatial-temporal domain is partitioned into small grids with dimensions $\Delta x = 20 \text{m}$ and $\Delta t = 15 \text{s}$. 
}
{
We adopt the three-parameter $(\zeta,\kappa,p)$ fundamental diagram to calibrate the NGSIM dataset, the three-parameter fundamental diagram is denoted by the following form
\begin{align}
    Q(\rho) = \zeta \left( a +\frac{(b-a)\rho}{\rho_m}  - \sqrt{1 + \kappa^2\left( \frac{\rho}{\rho_m} - p \right)^2}\right),
\end{align}
where $a = \sqrt{1 + \kappa^2 p^2}$, $b = \sqrt{1 + \kappa^2(1-p)^2}$. The maximum density $\rho_m$ is defined by
\begin{align}
    \rho_m = \frac{\text{number of lanes}}{\text{vehicle length}\times \text{safety factor}},
\end{align}
where the vehicle length is $5$m and the safety factor is selected as $1.5$. There are 6 lanes in the road segment of the NGSIM dataset, therefore, the maximum density is $\rho_m = 800 \text{veh/km}$. The calibrated fundamental diagram is shown in Fig.~\ref{calfunall}(a), the calibrated three parameters are $\zeta = 1339.38$, $\kappa = 16.53$, $p = 0.28$. And the corresponding density-velocity relation is illustrated in Fig.~\ref{calfunall}(b).
\begin{figure}
    \centering
    \subfigure[Flow-density relation \label{calfun}]{\includegraphics[width=0.45\linewidth]{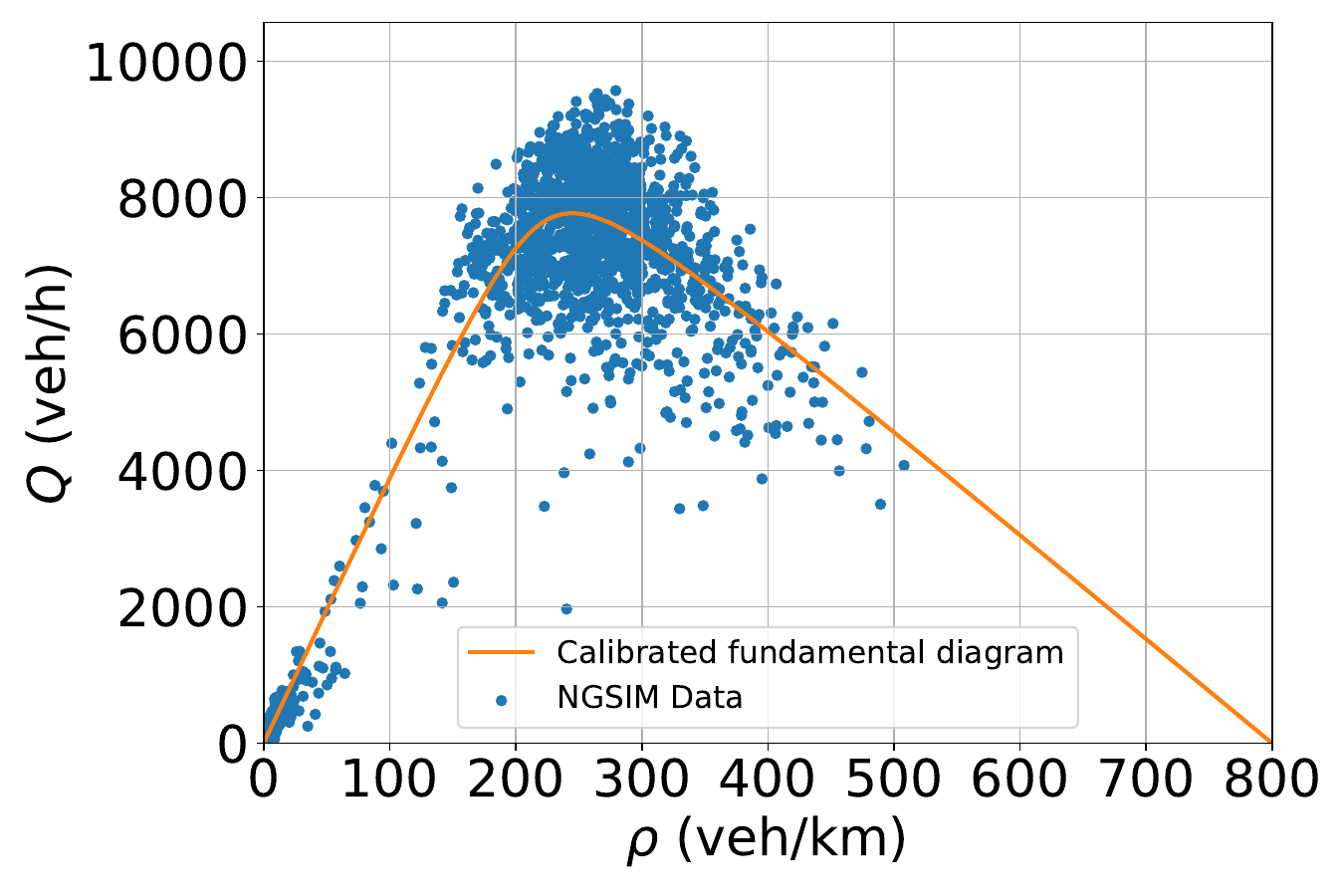}}
    \subfigure[Density-velocity relation \label{calfunV}]{\includegraphics[width=0.45\linewidth]{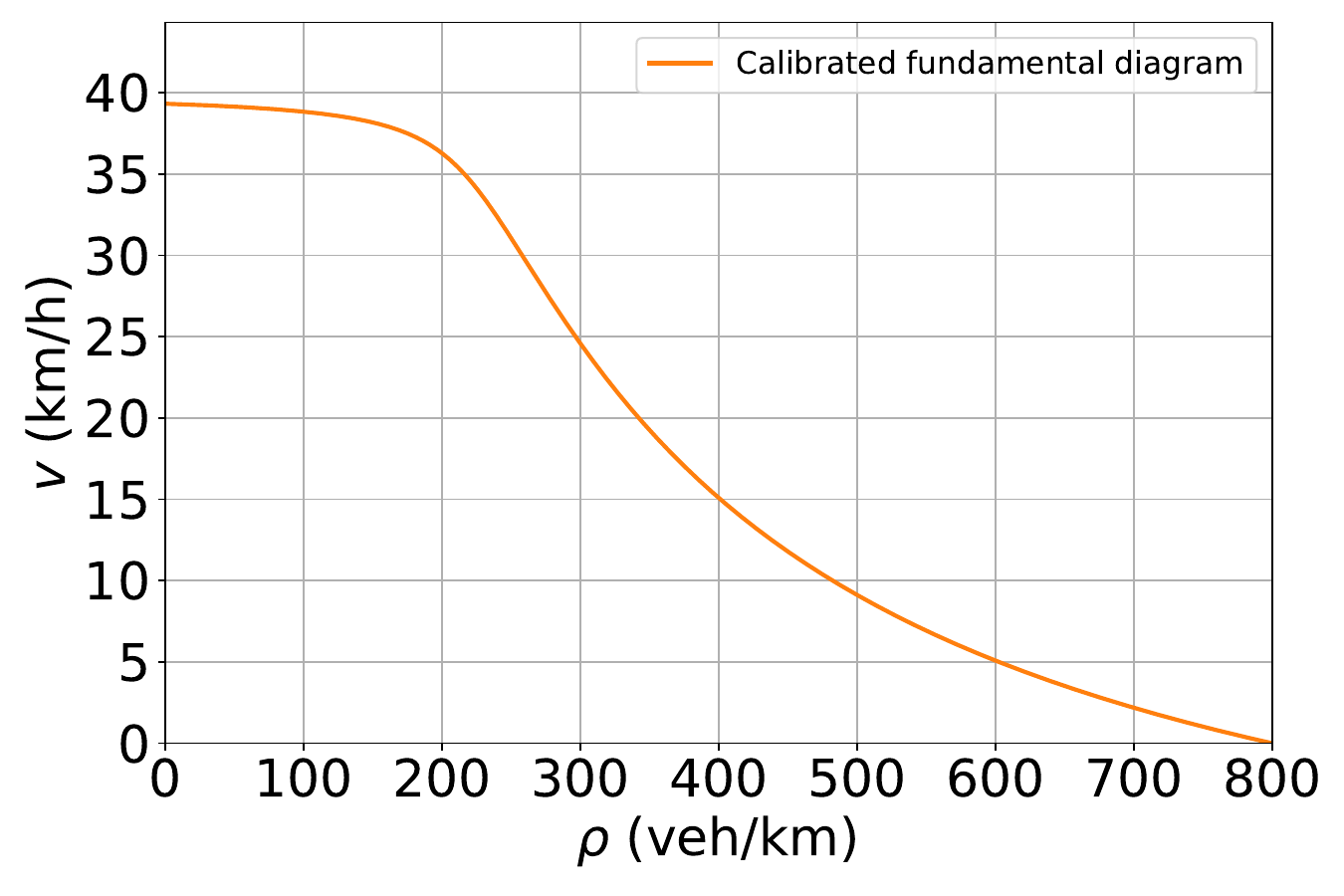}}
    \caption{The calibrated fundamental diagram from NGSIM dataset}
    \label{calfunall}
\end{figure}
}

{
The fundamental diagram and relaxation parameter are calibrated for the ARZ model. The trained neural operator for backstepping kernels is then employed to derive the control law required to stabilize the calibrated ARZ system. For the calibrated ARZ system, the equilibrium density is set as $320 \text{veh/km}$ and the corresponding equilibrium speed is $22.3 \text{km/h}$, as calculated by the three-parameters fundamental diagram. Initial conditions are configured with sinusoidal inputs to simulate stop-and-go traffic, while all other settings remain consistent with previous experiments. The results for the evolution of traffic states are presented in Fig~\ref{NOstatesthree}.
\begin{figure}[!tbp]
    \centering
    \subfigure[Density \label{NOdensitythreedensity}]{\includegraphics[width=0.45\linewidth]{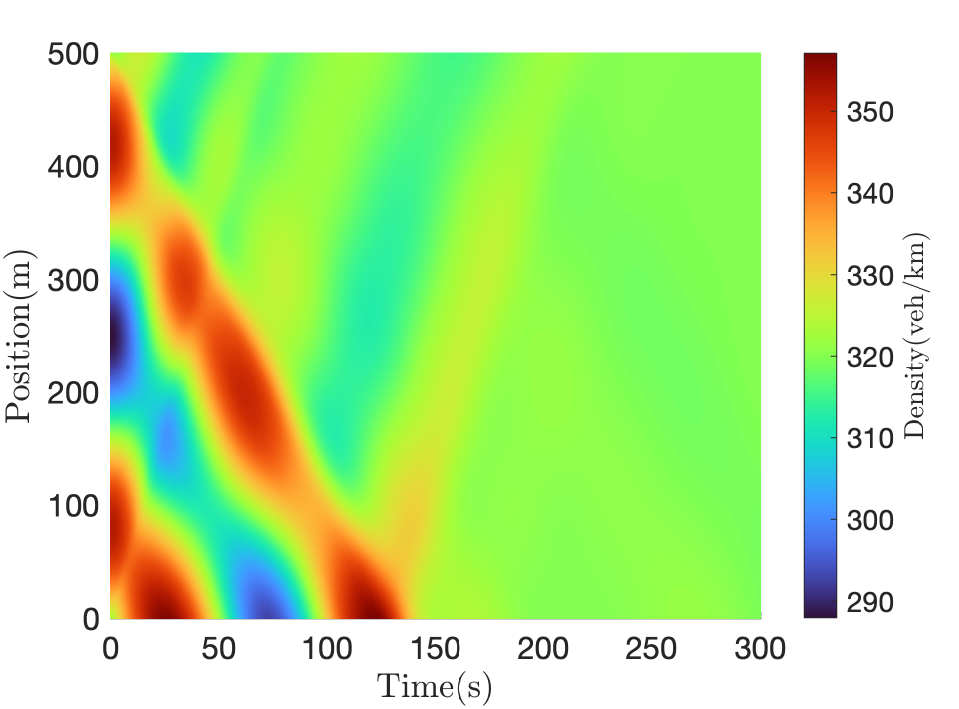}}
    \subfigure[Speed \label{NOdensitythreevelocity}]{\includegraphics[width=0.45\linewidth]{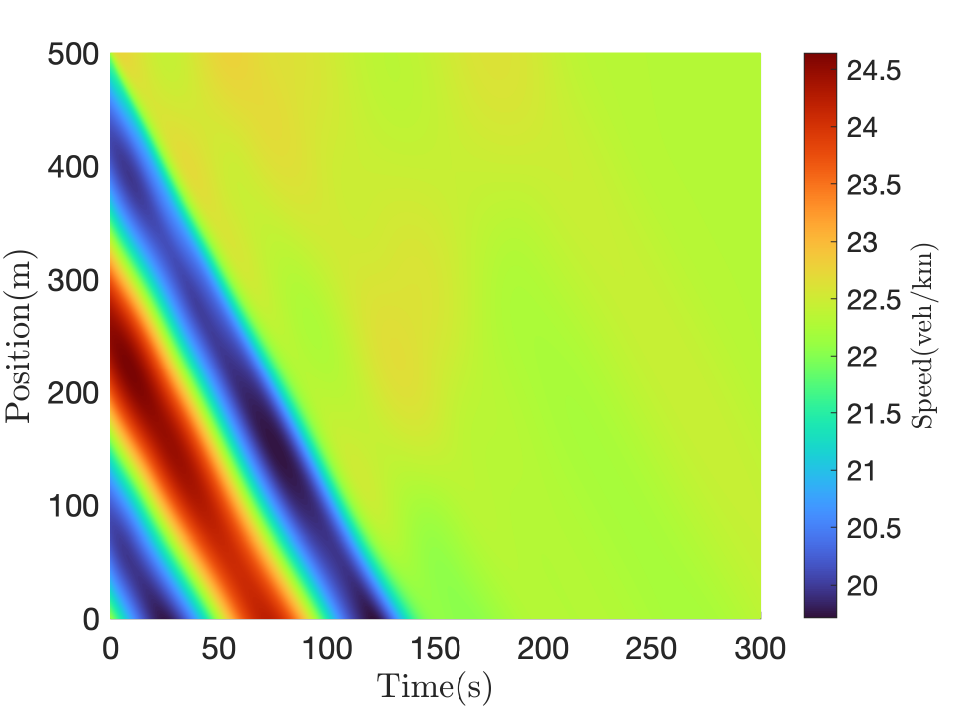}}
    \caption{NO method for calibrated ARZ model}
    \label{NOstatesthree}
\end{figure}
The experimental results demonstrate that the trained neural operator effectively stabilizes the traffic system with the calibrated fundamental diagram using real traffic data. Traffic states converge to their equilibrium points within a finite time.
}

\section{Conclusion}
In this paper, we propose an operator learning framework for the boundary control of traffic systems. First, the ARZ PDE model is adopted to describe the spatial-temporal evolution of traffic density and speed. We first define the operator mapping from the model parameter, i.e., characteristic speed $\lambda_2$ to the backstepping control kernels $K^w$,$K^v$ and then the directly map to a boundary control law $U(t)$. The neural operators using DeepONet are trained to approximate the two operator mappings. Subsequently, the Lyapunov analysis is conducted to derive the theoretical stability for the NO-approximated closed-loop system. To further extend our operator learning framework, we incorporate physical constraints into the neural operator, specifically the equations describing the mapping from the characteristic speed to backstepping kernels, and then the PINO is established.

{
The performance of the neural operator is assessed using both simulated and real traffic data. The NO-approximated kernels, NO-approximated control law, and PINO-approximated kernels are evaluated under consistent settings. The backstepping method is used as the baseline, with comparisons made to the PI controller and PINN-based controller. The results show that both the NO-approximated and PINO-approximated mappings achieve satisfactory accuracy and provide a significant computational speedup of 298 times compared to the backstepping controller, with only a 1\% loss in accuracy.
}
{
To evaluate the robustness of the proposed NO-based methods, various traffic conditions are tested. The NO-based methods demonstrate strong performance in stabilizing different traffic scenarios. Additionally, using the NGSIM data to calibrate the fundamental diagram and applying it to real traffic systems shows that the NO method has significant potential for practical applications in freeway traffic control.
}

{
Future work should include extending the operator learning framework to explore the performance of NO-based methods in network traffic. Additionally, applying these methods to other traffic-related problems, such as traffic assignment and vehicle routing, would be a valuable area of investigation.
}
% Code should be open to public

\appendix
\section*{Appendix}
\counterwithin*{equation}{section}
\renewcommand\theequation{\thesection \arabic{equation}}
\section{Proof of Theorem \ref{es-NO-k}}\label{append}

Taking time derivative along the trajectories of the system, plugging the system dynamics, we have
\begin{align}
     \Dot{V}_k(t) &= -\nu V_k(t) + (r^2-a) \Hat{\beta}^2(0,t) -\mathrm{e}^{-\frac{\nu}{\lambda_1}L}\Hat{\alpha}^2(L,t) \nonumber\\
     &+ \int_0^L 2a\frac{\mathrm{e}^{-\frac{\nu}{\lambda_2}x}}{\lambda_2}\Hat{\beta}(x,t) \left( \lambda_2 (\Tilde{K}^w(x,0) + \Tilde{K}^v(x,0))\Tilde{v}(0,t)+(\lambda_1+\lambda_2) \Tilde{K}^w(x,x)\Tilde{w}(x,t)\right.\nonumber\\
    & + \int_0^x (\lambda_2 \Tilde{K}^w_x(x,\xi) + \lambda_1 \Tilde{K}^w_{\xi}(x,\xi))\Tilde{w}(\xi,t) d\xi \nonumber\\
    &+\int_0^x (\lambda_2 \Tilde{K}^v_x(x,\xi) + \lambda_2  \Tilde{K}^v_{\xi}(x,\xi))\Tilde{v}(\xi,t) d\xi \left.\right)dx,
\end{align}
For the integral term, we take the norm and use the Young inequality and Cauchy-Schwartz inequality. Then combining the equivalent norm of the Lyapunov candidate, we get:
\begin{align}
    \int_0^L \norm{2a\frac{\mathrm{e}^{-\frac{\nu}{\lambda_2}x}}{\lambda_2}\Hat{\beta}(x,t) \lambda_2 (\Tilde{K}^w(x,0) + \Tilde{K}^v(x,0))\Tilde{v}(0,t)}dx &\leq \int_0^L \norm{2a\epsilon{\mathrm{e}^{-\frac{\nu}{\lambda_2}x}}(\Hat{\beta}^2(x,t) + \Tilde{v}^2(0,t))}dx\nonumber\\
    &\leq \frac{2a\epsilon}{m_1}V_k(t) + 2aL\epsilon \Hat{\beta}^2(0,t).
\end{align}
Using the same method, we can easily get the results for the other terms of the Lyapunov candidate. For the second term, we have:
\begin{align}
    \int_0^L \norm{2a\frac{\mathrm{e}^{-\frac{\nu}{\lambda_2}x}}{\lambda_2}\Hat{\beta}(x,t)(\lambda_1+\lambda_2) \Tilde{K}^w(x,x)\Tilde{w}(x,t)}dx &\leq \int_0^L\norm{2a\epsilon\frac{\lambda_1 + \lambda_2}{\lambda_2}{\mathrm{e}^{-\frac{\nu}{\lambda_2}x}}(\Hat{\beta}^2(x,t) + \Tilde{w}^2(x,t))}dx\nonumber\\
    &\leq \frac{2a\epsilon(\lambda_1+\lambda_2)}{m_1\lambda_2}(1+\frac{1}{k_1})V_k(t).
\end{align}
For the third term, we can get the following results:
\begin{align}
    &\int_0^L \norm{2a\frac{\mathrm{e}^{-\frac{\nu}{\lambda_2}x}}{\lambda_2}\Hat{\beta}(x,t)+ \int_0^x (\lambda_2 \Tilde{K}^w_x(x,\xi) + \lambda_1 \Tilde{K}^w_{\xi}(x,\xi))\Tilde{w}(\xi,t) d\xi}dx \nonumber\\
    &\leq \int_0^L \norm{2a\epsilon\frac{\lambda_1 + \lambda_2}{\lambda_2}{\mathrm{e}^{-\frac{\nu}{\lambda_2}x}}L(\Hat{\beta}^2(x,t) + \Tilde{w}^2(x,t))}dx \leq \frac{2a\epsilon(\lambda_1+\lambda_2)}{m_1\lambda_2}(1+\frac{1}{k_1})V_k(t).
\end{align}
For the last term, we have:
\begin{align}
    \int_0^L \norm{2a\frac{\mathrm{e}^{-\frac{\nu}{\lambda_2}x}}{\lambda_2}\Hat{\beta}(x,t)+ \int_0^x (\lambda_2 \Tilde{K}^v_x(x,\xi) + \lambda_2  \Tilde{K}^v_{\xi}(x,\xi))\Tilde{v}(\xi,t) d\xi}dx &\leq  \int_0^L \norm{2a\epsilon{\mathrm{e}^{-\frac{\nu}{\lambda_2}x}}L(\Hat{\beta}^2(x,t) + \Tilde{v}^2(x,t))}dx \nonumber\\
    &\leq \frac{2a\epsilon L}{m_1}(1+\frac{1}{k_1})V_k(t).
\end{align}
Using the bound of the four terms, we can derive the estimation of the Lyapunov candidate in \eqref{lyapunov_bound}.

\section{Proof the practical stability of the traffic system}\label{practical}
Based on the properties of the neural operators in Lemma~\ref{Deeptheo}, it can be found that the NO-approximated mapping $\mathcal{\Hat{H}}(\lambda_2)$ satisfies the following lemma:
\begin{lemma}
    For any $\epsilon > 0$,  there exists a neural operator $\mathcal{\Hat{H}}(\lambda_2)$ that can approximate the control law mapping in the spatial-temporal domain $(x,t) \in [0,L]\times \mathbb{R}^+$:
    \begin{align}\label{no-control}
        \max_{\lambda_2\in \mathcal{U}}\left| \mathcal{H}\left(\lambda_2\right)(L) - \mathcal{\Hat{H}}(\lambda_2)(L) \right| < \epsilon.
    \end{align}
\end{lemma}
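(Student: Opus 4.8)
The plan is to obtain this lemma as a direct corollary of the DeepONet universal approximation theorem (Lemma~\ref{Deeptheo}), in the same spirit as Lemma~\ref{NO-K} was obtained for the kernel operator. The only genuinely new ingredient is the continuity of the scalar-valued control-law operator $\mathcal{H}:\mathbb{R}^+\to\mathbb{R}$ on a compact admissible set $\mathcal{U}\ni\lambda_2$; once that is established, the bound \eqref{no-control} follows by invoking Lemma~\ref{Deeptheo} with output dimension $d_v=1$ and evaluation location fixed at $x=L$.

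First I would fix the closed-loop data (initial condition, inlet boundary condition, and evaluation time $t$) and regard $U(t)=\mathcal{H}(\lambda_2)(L)$ through its defining expression \eqref{control_bs},
\begin{align*}
\mathcal{H}(\lambda_2)(L)=-\kappa\,\tilde w(L,t)+\int_0^L K^w(L,\xi)\tilde w(\xi,t)\,d\xi+\int_0^L K^v(L,\xi)\tilde v(\xi,t)\,d\xi.
\end{align*}
I would then factor $\mathcal{H}$ as the composition of three continuous maps: (i) $\lambda_2\mapsto(K^w,K^v)$, the kernel operator $\mathcal{K}$, which is continuous because the kernel equations \eqref{ker1}--\eqref{ker4} are well posed with a unique solution depending continuously on $\lambda_2$ (existence, uniqueness and regularity from~\citep{vazquez_backstepping_2011}); (ii) $(\lambda_2,K^w,K^v)\mapsto(\tilde w,\tilde v)$, the closed-loop solution operator of the hyperbolic system \eqref{bs-q}--\eqref{bs-bc_l}; and (iii) the bounded functional $(K^w,K^v,\tilde w,\tilde v)\mapsto U(t)$ given by the boundary term and the two integrals. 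Since the coefficients $\kappa$ and $c(x)$ are smooth in $\lambda_2$ and each factor is continuous, the composite $\mathcal{H}$ is continuous on the compact set $\mathcal{U}$.

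With continuity in hand, the final step is a direct application of Lemma~\ref{Deeptheo}: taking the branch input to be the sampled characteristic speed $\lambda_2$ and the trunk input to be the evaluation location (at $x=L$, consistent with \eqref{controlmu2c}), the theorem supplies branch and trunk networks whose DeepONet combination $\hat{\mathcal{H}}$ satisfies $\sup_{\lambda_2\in\mathcal{U}}|\mathcal{H}(\lambda_2)(L)-\hat{\mathcal{H}}(\lambda_2)(L)|<\epsilon$, which is precisely \eqref{no-control}.

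I expect the main obstacle to be step~(ii): the feedback coupling means $\lambda_2$ influences the trajectory $(\tilde w,\tilde v)$ not only through the transport coefficients but also through the control gains entering the actuated boundary \eqref{bs-bc_l}, so continuous dependence of the closed-loop solution on $\lambda_2$ must be argued rather than assumed. The cleanest route is to exploit the finite propagation speed of the system together with the equivalence of the plant and the exponentially stable target system \eqref{tar1}--\eqref{tar2}, for which continuous dependence on parameters is transparent, and then transfer the estimate back through the invertible backstepping transformation \eqref{back-1}--\eqref{back-2}.
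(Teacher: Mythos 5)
Your proposal is correct and follows essentially the same route as the paper, which simply invokes the DeepONet universal approximation theorem (Lemma~\ref{Deeptheo}) a second time for the mapping $\lambda_2 \mapsto U(t)$. Your additional verification that $\mathcal{H}$ is continuous on the compact set $\mathcal{U}$ --- via the composition of the kernel operator, the closed-loop solution operator, and the boundary functional --- fills in a hypothesis of that theorem which the paper leaves implicit, so your argument is if anything more complete than the published one.
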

\begin{proof}
    We know from Theorem \ref{Deeptheo} that the neural operator using DeepONet can approximate operator mapping within accuracy $\epsilon$. We can apply Theorem \ref{Deeptheo} again to approximate the mapping from the characteristic speed $\lambda_2$ to $U(t)$. Thus, the lemma is proven.
\end{proof}
Applying the NO-approximated control law to the system \eqref{bs-q}-\eqref{bs-bc_l}, we get the target system as:
\begin{align}
    \partial_t \Check{\alpha}(x,t) + \lambda_1 \partial_x\Check{\alpha}(x,t) &= 0, \label{nocontrol-q}\\
    \partial_t \Check{\beta}(x,t) - \lambda_2 \partial_x \Check{\beta}(x,t) &= 0,\\
    \Check{\alpha}(0,t) &= -r\Check{\beta}(0,t),\\
    \Check{\beta}(L,t) &= \mathcal{H}\left(\lambda_2\right)(L, t)  - \mathcal{\Hat{H}}(\lambda_2)(L, t). \label{nocontrol-bc_l}
\end{align}
Compared with the target system in \citep{yu_traffic_2019}, the system \eqref{nocontrol-q}-\eqref{nocontrol-bc_l} is not strictly exponentially stable due to the approximation error of NO-approximated control law. 

{
To prove the stability of the traffic system under the NO-approximated control law, we first give the definition of the local practical exponential stability of the traffic PDE system,
\begin{definition}[Local practical exponential stability~\citep{bhan2023neural,teel1999semi}]
    For the ARZ traffic system whose control law is approximated by NO, the traffic system is said to be locally practically exponentially stable if the state satisfy
    \begin{align}
        \norm{(\Bar{\rho}(x,t),\Bar{v}(x,t))}^2_{L^2} \leq  \mathrm{e}^{-\mu t} \norm{(\Bar{\rho}(x,0),\Bar{v}(x,0))}^2_{L^2} + \kappa(\epsilon),
    \end{align}
    where $\mu > 0$ and $\kappa : \mathbb{R}^+ \rightarrow \mathbb{R}^+$ is of class $\mathcal{K}$ function with strictly increasing  and $\kappa(0)=0$ properties.
\end{definition}
}
Thus, we have the following theorem for the NO-approximated control law:
\begin{theorem}\label{thm4}
    The system \eqref{origin1}-\eqref{origin2} with boundary conditions \eqref{bc_q}-\eqref{bc_v} is locally practically exponentially stable under the NO-approximated control law $\hat{\mathcal{H}}(\lambda_2)(L,t)$ with initial conditions $\Bar{\rho}(x,0),\Bar{v}(x,0)$, such that 
    \begin{align}
        ||(\Bar{\rho}(x,t),\Bar{v}(x,t))||^2_{L^2} \leq c_2 \mathrm{e}^{-\nu t} ||(\Bar{\rho}(x,0),\Bar{v}(x,0))||^2_{L^2} + \frac{a}{m_4k_4} \mathrm{e}^{-\frac{\nu}{\lambda_2}L}\epsilon^2.
    \end{align}
    where $c_2 = \frac{m_3n_4k_3}{m_4 n_3k_4}$, $m_3>0$, $m_4>0$, $n_3>0$, $n_4>0$, $k_3>0$, $k_4>0$. The control law in \eqref{controlmu2c} is approximated by the neural operator $\hat{\mathcal{H}}(\lambda_2)(L,t)$ with accuracy $\epsilon$ in \eqref{no-control}.
\end{theorem}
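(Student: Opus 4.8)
The plan is to rerun the Lyapunov argument of Theorem~\ref{es-NO-k}, exploiting the crucial structural simplification that here the approximation error enters \emph{only} through the boundary, not through the interior dynamics. The target system \eqref{nocontrol-q}--\eqref{nocontrol-bc_l} is obtained from the boundary control model \eqref{bs-q}--\eqref{bs-bc_l} using the \emph{exact} backstepping transformation \eqref{back-1}--\eqref{back-2} with the true kernels $K^w,K^v$; only the control action is replaced by its neural approximation $\Hat{\mathcal{H}}$. Consequently the in-domain equations reduce to the source-term-free transport pair in \eqref{nocontrol-q}, and the entire effect of using $\Hat{\mathcal{H}}$ in place of $\mathcal{H}$ is the single perturbed boundary condition \eqref{nocontrol-bc_l}, $\Check{\beta}(L,t)=\mathcal{H}(\lambda_2)(L,t)-\Hat{\mathcal{H}}(\lambda_2)(L,t)$, which the uniform control-law bound~\eqref{no-control} controls by $|\Check{\beta}(L,t)|<\epsilon$ for all $t$.

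First I would reuse the Lyapunov functional \eqref{Lyap} with $(\Hat{\alpha},\Hat{\beta})$ replaced by $(\Check{\alpha},\Check{\beta})$. Differentiating along \eqref{nocontrol-q}--\eqref{nocontrol-bc_l} and integrating by parts, every interior contribution coincides with the source-free part of the computation already performed for Theorem~\ref{es-NO-k}, which is identical to the finite-time-stable target system of \citep{yu_traffic_2019} except at the boundary. In particular the full rate $\nu$ is retained (rather than the degraded rate $\eta$ of Theorem~\ref{es-NO-k}), since there are no interior error terms to absorb into the decay. The only new feature is that the $x=L$ boundary term for $\beta$, which vanished before because $\beta(L,t)=0$, is now active, giving
\begin{align*}
  \Dot{V}(t) = -\nu V(t) + (r^2-a)\Check{\beta}^2(0,t) - \mathrm{e}^{-\frac{\nu}{\lambda_1}L}\Check{\alpha}^2(L,t) + a\,\mathrm{e}^{-\frac{\nu}{\lambda_2}L}\Check{\beta}^2(L,t).
\end{align*}
Choosing the free weight $a\ge r^2$ renders the $x=0$ term nonpositive, the $\Check{\alpha}^2(L,t)$ term is already nonpositive, and the last term is bounded via \eqref{no-control} by $a\,\mathrm{e}^{-\frac{\nu}{\lambda_2}L}\epsilon^2$, yielding the dissipation inequality $\Dot{V}(t)\le -\nu V(t)+a\,\mathrm{e}^{-\frac{\nu}{\lambda_2}L}\epsilon^2$.

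The next step is a standard comparison (Gr\"onwall) argument on this scalar inequality: integrating gives $V(t)\le \mathrm{e}^{-\nu t}V(0)+C\,\mathrm{e}^{-\frac{\nu}{\lambda_2}L}\epsilon^2$, where the second term is the non-vanishing residue responsible for \emph{practical} rather than asymptotic stability. Finally I would transfer this back to the physical states through the chain of norm equivalences used throughout the paper: $V\sim\norm{(\Check{\alpha},\Check{\beta})}_{L^2}^2$ with constants $m_3,m_4$, the invertible backstepping map $(\Check{\alpha},\Check{\beta})\leftrightarrow(\Tilde{w},\Tilde{v})$ with constants $k_3,k_4$, and the change of variables \eqref{riemann} relating $(\Tilde{w},\Tilde{v})$ to $(\Bar{\rho},\Bar{v})$ with constants $n_3,n_4$. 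Chaining the three equivalences converts the $V$-bound into the stated estimate, with prefactor $c_2=\frac{m_3 n_4 k_3}{m_4 n_3 k_4}$ and residual proportional to $\mathrm{e}^{-\frac{\nu}{\lambda_2}L}\epsilon^2$.

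The main obstacle is conceptual rather than computational: recognizing that, because the error is confined to the boundary $x=L$ and persists for all $t$, it acts as a bounded exogenous input to an otherwise exponentially stable transport system, so the sharpest achievable conclusion is an input-to-state (practical) estimate whose residual cannot be driven to zero at fixed network size. The only genuinely technical points are verifying that the exact backstepping transformation removes all in-domain coupling—so that, unlike in Theorem~\ref{es-NO-k}, no Young or Cauchy--Schwarz estimates on interior integrals are required—and keeping the sign and weight bookkeeping so that the surviving $x=L$ boundary term carries the favorable factor $\mathrm{e}^{-\frac{\nu}{\lambda_2}L}$; both follow directly from the structure already established.
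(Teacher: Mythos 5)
Your proposal is correct and follows essentially the same route as the paper's own proof in Appendix~B: the same Lyapunov functional $V_U$, the same integration by parts yielding the boundary term $a\,\mathrm{e}^{-\frac{\nu}{\lambda_2}L}\Check{\beta}^2(L,t)$, the choice $a \ge r^2$ to dispose of the $x=0$ term, the bound on $\Check{\beta}(L,t)$ via \eqref{no-control}, and the same chain of norm equivalences ($m_3,m_4$; $k_3,k_4$; $n_3,n_4$) producing $c_2=\frac{m_3 n_4 k_3}{m_4 n_3 k_4}$. Your observation that the full decay rate $\nu$ survives (rather than the degraded $\eta$ of Theorem~\ref{es-NO-k}) because the approximation error is confined to the boundary is exactly the structural point the paper's argument relies on.
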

\begin{proof}
For the NO-approximated control law, using the Lyapunov candidate again to analyze the stability of the target system \eqref{nocontrol-q}-\eqref{nocontrol-bc_l}.
\begin{align}
    V_U(t) = \int_0^L \frac{\mathrm{e}^{-\frac{\nu}{\lambda_1}x}}{\lambda_1}\Check{\alpha}^2(x,t) + a\frac{\mathrm{e}^{-\frac{\nu}{\lambda_2}x}}{\lambda_2} \Check{\beta}^2(x,t) dx,
\end{align}
where the coefficients $\nu$ and $a$ are the same as before. The states of the target system with NO-approximated control law $(\Check{\alpha},\Check{\beta})$ still have equivalent $L_2$ norm with the system $(\Tilde{w},\Tilde{v})$. 
\begin{align}
    k_3 \norm{(\Tilde{w}(x,t),\Tilde{v}(x,t))}_{L_2}^2 \leq \norm{(\Check{\alpha}(x,t),\Check{\beta}(x,t))}_{L_2}^2 \leq k_4 \norm{(\Tilde{w}(x,t),\Tilde{v}(x,t))}_{L_2}^2,
\end{align}
where $k_3>0$ and $k_4>0$. The same is true for the Lyapunov functional, it has the following equivalent norm with the NO-approximated target system. There exist $m_3>0,m_4>0$, 
\begin{align}\label{eqnormVu}
    m_3\norm{(\Check{\alpha}(x,t),\Check{\beta}(x,t))}^2_{L^2} \leq V_U(t) \leq m_4\norm{(\Check{\alpha}(x,t),\Check{\beta}(x,t))}^2_{L^2}.
\end{align}
Taking time derivative along the trajectories, putting it into the system dynamics, and integrating by parts, we have:
\begin{align}
    \Dot{V}_U(t) = -\nu {V}_U(t) + (r^2 - a)\Check{\beta}^2(0,t) + a\mathrm{e}^{-\frac{\nu}{\lambda_2}L}\Check{\beta}^2(L,t) -\mathrm{e}^{-\frac{\nu}{\lambda_1}L}\Check{\alpha}^2(L,t).
\end{align}
The term $a\mathrm{e}^{-\frac{\nu}{\lambda_2}L}\Check{\beta}^2(L,t)$ is equal to 0 in the ideal situation when the NO-approximated mapping achieves $100\%$ accuracy of approximation which means that $\mathcal{H}\left(\lambda_2\right)(L) - \mathcal{\Hat{H}}(\lambda_2)(L) = 0$ and we can easily get the exponential stability for the system. Here the mapping has the error $\epsilon$. We take the $r^2 -a \leq 0$, and we get
\begin{align}
   V_U(t) \leq V_U(0) \mathrm{e}^{-\nu t} + a \mathrm{e}^{-\frac{\nu}{\lambda_2}L}\sup_{0 \leq \varsigma \leq t} (\mathcal{H}\left(\lambda_2\right)(L) - \mathcal{\Hat{H}}(\lambda_2)(L))^2(L,\varsigma).
\end{align}
Using \eqref{eqnormVu}, we have
\begin{align}
    ||(\Tilde{w}(x,t),\Tilde{v}(x,t))||^2_{L^2} \leq \frac{m_3k_3}{m_4 k_4}\mathrm{e}^{-\nu t} ||(\Tilde{w}(x,0),\Tilde{v}(x,0))||^2_{L^2} + \frac{a}{m_4k_4} \mathrm{e}^{-\frac{\nu}{\lambda_2}L}\epsilon^2.
\end{align}
Thus, we have proved that the system \eqref{bs-q}-\eqref{bs-bc_l} is locally practically exponentially stable. Using the equivalent norm 
\begin{align}
    n_3\norm{(\Bar{\rho}(x,t),\Bar{v}(x,t))}_{L_2}^2 \leq \norm{(\Tilde{w}(x,t),\Tilde{v}(x,t))}^2_{L^2} \leq n_4\norm{(\Bar{\rho}(x,t),\Bar{v}(x,t))}_{L_2}^2,
\end{align}
thus, we get
\begin{align}
    ||(\Bar{\rho}(x,t),\Bar{v}(x,t))||^2_{L^2} \leq c_2 \mathrm{e}^{-\nu t} ||(\Bar{\rho}(x,0),\Bar{v}(x,0))||^2_{L^2} + \frac{a}{m_4k_4} \mathrm{e}^{-\frac{\nu}{\lambda_2}L}\epsilon^2,
\end{align}
where $c_2=\frac{m_3n_4k_3}{m_4 n_3k_4}$. Therefore, the original system \eqref{origin1}-\eqref{origin2} with boundary conditions \eqref{bc_q} and \eqref{bc_v} is locally practically exponentially stable. This finish the proof of Theorem \ref{thm4}.  
\end{proof}

% \section{Acknowledgement}
% This research is partially supported by 
%% Loading bibliography style file
% \bibliographystyle{model1-num-names}
\bibliographystyle{cas-model2-names}

% Loading bibliography database
\bibliography{ref}

\end{document}